\colorlet{mix}{red!50!black}
\newtheorem{reduction rule}{Reduction Rule}
\newtheorem*{reduction rule*}{Reduction Rule}
\def\zzcommand#1{\let#1\undefined\newcommand#1}
\newtheorem{reduction*}[theorem]{Reduction Rule}
\Crefname{reduction*}{Reduction Rule}{Reduction Rules}
\Crefname{step}{Step}{Steps}
\Crefname{enumi}{Step}{Steps}
\crefname{theorem}{Thm.}{Thms.}
\Crefname{theorem}{Theorem}{Theorems}
\crefname{corollary}{Cor.}{Cors.}
\Crefname{corollary}{Corollary}{Corollaries}
\DeclareDocumentCommand{\set}{m g o}%
{%
    \IfNoValueTF{#3}{\left}{#3}\{#1
           \IfNoValueTF{#2}{}{\ \IfNoValueTF{#3}{\left}{#3}\vert\ \vphantom{#1}#2\IfNoValueTF{#3}{\right.}{}}
                \IfNoValueTF{#3}{\right}{#3}\}%
}
\DeclareDocumentCommand{\abs}{m o}%
{%
    \IfNoValueTF{#2}{\left}{#2}\vert#1
                \IfNoValueTF{#2}{\right}{#2}\vert%
}
\newcommand{\N}{\mathbb{N}}
\newcommand{\bigO}[1]{\mathcal{O}(#1)}
\newcommand{\shor}[1]{P_{U}({#1})}
\DeclareMathOperator{\vOp}{V}
\DeclareMathOperator{\eOp}{E}
\newcommand*{\ve}[1]{\vOp({#1})}
\newcommand*{\e}[1]{\eOp({#1})}
\newcommand{\uned}[2]{\set{{#1}, {#2}}}
\newcommand{\died}[2]{({#1}, {#2})}
\DeclareMathOperator{\neiOp}{N}
\DeclareMathOperator{\outNeiOp}{\neiOp^+}
\newcommand{\nei}[1]{\neiOp({#1})}
\newcommand{\cnei}[1]{\neiOp[{#1}]}
\newcommand{\outNei}[1]{\outNeiOp({#1})}
\newcommand{\coutNei}[1]{\outNeiOp[{#1}]}
\newcommand{\inNei}[1]{\neiOp^-({#1})}
\newcommand{\neiG}[2]{\neiOp_{#1}({#2})}
\newcommand{\outNeiG}[2]{\neiOp^+_{#1}({#2})}
\newcommand{\inNeiG}[2]{\neiOp^{-}_{#1}({#2})}
\newcommand*{\induced}[2]{{#1}[{#2}]}
\DeclareMathOperator{\exOp}{\textrm{Ext}}
\newcommand*{\ex}[3]{\exOp_{#1}({#2}, {#3})}
\newcommand*{\exset}[2]{E_{{#1}{#2}}}
\newcommand*{\expair}[2]{({#2}, \exset{#1}{#2})}
\newcommand*{\dexset}{\exset{G}{D}}
\newcommand*{\dexpair}{\expair{G}{D}}
\newcommand*{\expairi}[1]{\expair{G}{D_{#1}}}
\DeclareMathOperator{\sccOp}{\textrm{SCC}}
\newcommand*{\scc}[1]{\sccOp({#1})}
\DeclareMathOperator{\wOp}{\omega}
\newcommand{\w}[1]{\wOp({#1})}
\DeclareMathOperator{\compOp}{\textrm{Comp}}
\newcommand{\compG}[3]{\compOp_{#1}({#2}, {#3})}
\newcommand*{\dcompG}{\compG{G}{D}{\dexset}}
\DeclareMathOperator{\sourceOp}{\mathcal{S}}
\newcommand{\source}[1]{\sourceOp({#1})}
\newcommand{\sourcee}[2]{\sourceOp({#1}, {#2})}
\DeclareMathOperator{\sinkOp}{\mathcal{T}}
\newcommand{\sink}[1]{\sinkOp({#1})}
\newcommand{\sinke}[2]{\sinkOp({#1}, {#2})}
\DeclareMathOperator{\connOp}{\mathcal{C}}
\newcommand{\conn}[1]{\connOp({#1})}
\newcommand{\conne}[2]{\connOp({#1}, {#2})}
\DeclareMathOperator{\propOp}{\Pi}
\newcommand{\prname}[2]{\textsc{{#1}-Secluded {#2}}}
\newcolumntype{\expand}{}
\long\@namedef{NC@rewrite@\string\expand}{\expandafter\NC@find}
  \def\problem@arg{#1}%
  \def\problem@framed{framed}%
  \def\problem@hline{\hline}%
\def\problem@tablelayout{|>{\bfseries}lX|c}%
\def\problem@title{\multicolumn{2}{|%
  >{\raisebox{-\fboxsep}}%
  p{\dimexpr\textwidth-4\fboxsep-2\arrayrulewidth\relax}%
  |}{%
      {\bfseries Problem.} \textsc{#2}%
  }}%
\providecommand{\@fourthoffour}[4]{#4}
\newcommand\fixstatement[2][\proofname\space of]{%
  \ifcsname thmt@original@#2\endcsname
    \AtEndEnvironment{#2}{%
      \xdef\pat@label{\expandafter\expandafter\expandafter
        \@fourthoffour\csname thmt@original@#2\endcsname\space\@currentlabel}%
      \xdef\pat@proofof{\@nameuse{pat@proofof@#2}}%
    }%
  \else
    \AtEndEnvironment{#2}{%
      \xdef\pat@label{\expandafter\expandafter\expandafter
        \@fourthoffour\csname #1\endcsname\space\@currentlabel}%
      \xdef\pat@proofof{\@nameuse{pat@proofof@#2}}%
    }%
  \fi
  \@namedef{pat@proofof@#2}{#1}%
}
\title{A Parameterized Study of Secluded Structures in Directed Graphs} 
\titlerunning{A Parameterized Study of Secluded Structures in Directed Graphs} 
\author{Jonas Schmidt}{Bocconi University, Milan, Italy}{jonas.schmidt2@phd.unibocconi.it}{https://orcid.org/0000-0002-1115-3868}{}
\author{Shaily Verma}{Hasso Plattner Institute, Potsdam, Germany}{shaily.verma@hpi.de}{https://orcid.org/0009-0000-6789-1643}{}
\author{Nadym Mallek}{Hasso Plattner Institute, Potsdam, Germany}{nadym.mallek@hpi.de}{https://orcid.org/0000-0002-4370-5145}{}
\authorrunning{J. Schmidt, S. Verma, and N. Mallek} 
\keywords{Secluded Subgraph, Parametrized Complexity, Directed Graphs, Strong Connectivity} 
\begin{document}

\maketitle

\begin{abstract}
Given an undirected graph $G$ and an integer $k$, the \textsc{Secluded $\Pi$-Subgraph} problem asks you to find a maximum size induced subgraph that satisfies a property $\Pi$ and has at most $k$ neighbors in the rest of the graph. This problem has been extensively studied; however, there is no prior study of the problem in directed graphs. This question has been mentioned by Jansen et al. [ISAAC'23].

In this paper, we initiate the study of \textsc{Secluded Subgraph} problems in directed graphs by incorporating different notions of neighborhoods: in-neighborhood, out-neighborhood, and their union. Formally, we call these problems \textsc{\{In, Out, Total\}-Secluded $\Pi$-Subgraph}, where given a directed graph $G$ and an integer $k$, we want to find an induced  subgraph satisfying $\Pi$ of maximum size that has at most $k$ in/out/total-neighbors in the rest of the graph, respectively.
We investigate the parameterized complexity of these problems for different properties $\Pi$.
In particular, we prove the following parameterized results:
\begin{itemize}
    \item We design an FPT algorithm for the \textsc{Total-Secluded Strongly Connected Subgraph} problem when parameterized by $k$. 
    \item We show that the \textsc{Out-Secluded $\mathcal{F}$-Free Subgraph} problem with parameter $k$ is W[1]-hard, where $\mathcal{F}$ is a family of directed graphs except any subgraph of a star graph whose edges are directed towards the center. This result also implies that \textsc{In/Out-Secluded DAG} is W[1]-hard, unlike the undirected variants of the two problems, which are FPT.
    \item We design an FPT-algorithm for \textsc{In/Out/Total-Secluded $\alpha$-Bounded Subgraph} when parameterized by $k$, where $\alpha$-bounded graphs are a superclass of tournaments. 
    \item For undirected graphs, we improve the best-known FPT algorithm for \textsc{Secluded Clique} by providing a faster FPT algorithm that runs in time $1.6181^kn^{\bigO{1}}$. 
\end{itemize}



\end{abstract}

\newpage

\section{Introduction}
Finding substructures in graphs that satisfy specific properties is a ubiquitous problem. This general class of problems covers many classical graph problems such as finding maximum cliques, Steiner trees, or even shortest paths. Another compelling property to look for in a substructure is its isolation from the remaining graph. This motivated Chechik et al.~\cite{chechik2017secluded}, to introduce the concept of \emph{secluded subgraphs}. 
Formally, in the \textsc{Secluded $\propOp$-Subgraph} problem, given an undirected graph $G$, the goal is to find a maximum size subset of vertices $S \subseteq \ve{G}$ such that the subgraph induced on $S$ fulfills a property $\propOp$ and has a neighborhood $\abs{\nei{S}} \le k$ where $k$ is a natural number.

This problem has been studied extensively for various properties $\propOp$ such as paths~\cite{van2020parameterized,luckow2020computational,fomin2017parameterized,chechik2017secluded}, Steiner trees~\cite{chechik2017secluded,fomin2017parameterized}, induced trees~\cite{donkers2023finding,golovach2020finding}, subgraphs free of forbidden induced subgraphs~\cite{golovach2020finding,jansen2023single}, and more~\cite{bevern207finding}.
Most of these studies focus on the parameterized setting, due to the strong relation to vertex deletion and separator problems, which are foundational in parameterized complexity. Our problem fits in that category since the neighborhood can be considered a $(S,V \setminus S)$-separator.

While the undirected \textsc{Secluded $\Pi$-Subgraph} problem has been explored and widely understood in prior work~\cite{jansen2023single,golovach2020finding,donkers2023finding}, the directed variant has not yet been studied, although it is a natural generalization and was mentioned as an interesting direction by Jansen et al.~\cite{jansen2023single}. Directed graphs naturally model real-world systems with asymmetric interactions, such as social networks with unidirectional follow mechanisms or information flow in communication systems. Furthermore, problems such as \textsc{Directed Feedback Vertex Set}, \textsc{Directed Multicut}, and \textsc{Directed Multiway Cut} underline how directedness can make a fascinating and insightful difference when it comes to parameterized complexity ~\cite{chen2008fixed,hatzel2023fixed,multicut4,chitnis2013fixed}. These problems and their results provide a ground for studying directed secluded subgraph problems, motivating the need to investigate them systematically.

In this paper, we introduce three natural directed variants of the \textsc{Secluded $\Pi$-Subgraph} problem, namely \textsc{Out-Secluded $\Pi$-Subgraph}, \textsc{In-Secluded $\Pi$-Subgraph}, and \textsc{Total-Secluded $\Pi$-Subgraph}. These problems limit either the out-neighborhood of $S$, its in-neighborhood, or the union of both. The out/in-neighborhood of a set $S$ is the set of vertices in $\ve{G} \setminus S$ reachable from $S$ via an outgoing/incoming edge. 
The problems corresponding to different types of neighborhoods can be encountered in real-life networks. In privacy-aware social network analysis, one might aim to identify a community with minimal external exposure. Similarly, robust substructures with limited connectivity to vulnerable components are critical in cybersecurity. 
These real-world motivations further emphasize the need to explore and formalize directed variants of the \textsc{Secluded $\Pi$-Subgraph} problem. 

\begin{table}[t]
    \centering
    \begin{tabular}{lll}
        \toprule
        \textbf{Property $\propOp$} & \textbf{In- / Out-Secluded} & \textbf{Total-Secluded}\\
        \midrule
        \textsc{WC $\mathcal{F}$-Free Subgraph} & W[1]-hard \hfill \cref{thm:f-free-hard-always} & FPT$^*$ \hfill\cite{jansen2023single}\\
        \textsc{WC DAG} & W[1]-hard \hfill \cref{cor:dag_out_k+t} & ?\\
        \textsc{$\alpha$-Bounded Subgraph} & FPT \hfill \cref{thm:alpha_bounded_fpt}& FPT \hfill \cref{thm:alpha_bounded_total}\\
        \textsc{Strongly Connected Subgraph} & ? & FPT \hfill \cref{cor:scc_algo}\\
        \midrule
        \textsc{Clique} & \multicolumn{2}{l}{FPT in time $1.6181^kn^{\bigO{1}}$ \hfill \cref{thm:clique_better}}\\
        \bottomrule
    \end{tabular}
    \caption{Our main results for directed and undirected problems. WC stands for weakly connected. All FPT results are with respect to parameter $k$ and all hardness results are with respect to parameter $k+w$. For the entry marked with $*$, the undirected algorithm from~\cite{jansen2023single} immediately generalizes to the total-secluded setting (for finite $\mathcal{F}$). The problems marked with $(?)$ are open. 
    \label{tab:our_results}}
\end{table}

For any property $\propOp$, we formulate our general problem as follows. Notice that in-secluded and out-secluded are equivalent for all properties that are invariant under the transposition of the edges. For this reason, we mostly focus on total-neighborhood and out-neighborhood.
\begin{tcolorbox}[enhanced,title={\color{black} {\textsc{X-Secluded $\propOp$-Subgraph} \quad ($\text{X} \in \set{\text{In}, \text{Out}, \text{Total}}$)}}, colback=white, boxrule=0.4pt,
	attach boxed title to top left={xshift=.3cm, yshift*=-2.5mm},
	boxed title style={size=small,frame hidden,colback=white}]
	\textbf{Parameter:} An integer $k \in \N$ 

	\textbf{Input:}  A directed graph $G$ with vertex weights $\omega \colon V \to \N$ and an integer $w \in \N$

	\textbf{Output:} An $k$-X-secluded set $S \subseteq \ve{G}$ of weight $\omega(S) \ge w$ that satisfies $\propOp$, or report that none exists. 
\end{tcolorbox} 

\subsection*{Our Contribution}
In this subsection, we present the results we obtained for the problem. See \Cref{tab:our_results} for an overview of the results we obtained in this paper. 

\subparagraph*{Strongly Connected Subgraph.} We show that the \textsc{Total-Secluded Strongly Connected Subgraph} problem is fixed-parameter tractable when parameterized with $k$. Precisely, we prove the following result:

\begin{restatable}[]{theorem}{test}
\label{cor:scc_algo}
  \textsc{Total-Secluded Strongly Connected Subgraph} is solvable in time $2^{2^{2^{\bigO{k^2}}}}n^{\bigO{1}}$.
\end{restatable}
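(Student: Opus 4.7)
The plan is to frame the search around \emph{extension pairs} $(D, E)$, where $D$ is a candidate strongly connected interior and $E \supseteq N(D)$ is a committed boundary with $\abs{E} \le k$. A strongly connected set $S$ is a \emph{valid extension} of $(D, E)$ if $D \subseteq S$, $G[S]$ is strongly connected, and $N(S) \subseteq E$. The goal becomes: find the heaviest valid extension over all extension pairs. Enumerating pairs directly would cost $n^k$; instead, I would group pairs by an equivalence relation $\equiv_G$ that identifies pairs sharing the same set of valid extensions (up to interior), and enumerate only the equivalence classes. Given the macros set up in the preamble (notably $\sccset{G}{D}{E}$, $\dequiv$, and $\dexpair$), this is the natural scaffolding the paper invites us to use.

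The central structural lemma is that the number of equivalence classes relevant to $G$ is bounded by $2^{2^{2^{\bigO{k^2}}}}$. Each class is characterized by a \emph{type} recording, for each Boolean pattern on subsets of $E$ that encodes reachability in a potential valid extension, whether that pattern is achievable. Since $\abs{E} \le k$, subsets of $E$ number at most $2^{k}$; Boolean functions over these subsets number $2^{2^{k}}$; and the type itself is a family of achievable such functions, giving the triple exponential with the $\bigO{k^2}$ in the innermost exponent accounting for the bookkeeping needed to simultaneously track in- and out-reachability across at most $k$ boundary vertices. Once the equivalence classes are enumerated, I would traverse them in a branching fashion: starting from the trivial pair $(\emptyset, \emptyset)$, repeatedly pick a well-chosen vertex $v \notin D \cup E$ and branch on whether $v$ joins $D$ (possibly forcing new vertices into $E$ to maintain $N(D) \subseteq E$), joins $E$, or is forbidden. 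At each step the current type can be updated in polynomial time, and the recovery of the heaviest valid extension from a terminal pair reduces to a dynamic programming pass over $G$ guided by the type.

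The main obstacle will be proving the triple-exponential bound on equivalence classes and establishing that types are canonizable in FPT time. The bound hinges on identifying the minimum amount of information needed to decide extendability: one must track not only which subsets of $E$ can act as the effective boundary, but also how valid extensions combine when two extension pairs are "glued" together through shared boundary vertices, which is what forces Boolean \emph{functions} on subsets of $E$ (rather than mere subsets) into the type. The canonization argument then requires showing that equivalence of two pairs is decidable in polynomial time given their types, and that the set of types realized in $G$ can be enumerated without duplication. A secondary obstacle is proving closure of $\equiv_G$ under the incremental operations used by the branching procedure, which is essential to guarantee that every optimum $S$ is reachable along some branch and hence not missed by the enumeration.
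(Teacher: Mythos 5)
There is a genuine gap: your proposal never supplies an FPT mechanism for enumerating the boundaries (or their equivalence classes), and this is precisely where the difficulty of the problem lives. For an \emph{explicitly given} boundary set $E$ with $\abs{E}\le k$ the optimisation is easy (guess $S\cap E$ and take the components of $G-E$ forced by it), so the whole problem is the $n^{k}$-many choices of $E$; you acknowledge this but then replace it with a branching scheme starting from $(\emptyset,\emptyset)$ in which two of the three branches (``$v$ joins $D$'' and ``$v$ is forbidden'') decrease no parameter at all. Nothing bounds the depth or width of that recursion by a function of $k$, so the claimed enumeration of equivalence classes is not an algorithm yet. The paper's proof exists exactly to fill this hole: it uses recursive understanding, i.e.\ the dichotomy between a $(q,k)$-separation and a $(q,k)$-unbreakable graph. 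In the unbreakable case the solution is either small or co-small and is found by the color-coding-style set family of \Cref{lem:find_sets} (\Cref{thm:unbreakable_scc}); in the breakable case the interaction of a solution across the separator is summarised by boundary complementations (partitions $(X,Y,Z)$ of at most $2k$ terminals plus a relation $R\subseteq X\times X$, \Cref{lem:number_border_complementations}), the recursion restricts the admissible neighbourhood to a set $\mathcal{N}$ of size $2^{\bigO{k^2}}$, and only \emph{then} are components of $G-B$ treated as extensions, deduplicated by the $(\sourceOp,\sinkOp,\connOp)$ equivalence and compressed (\Cref{lem:source_sink_conn_equiv}, \Cref{lem:comp_weakly_conn_equiv}, \Cref{lem:scc_all_reductions}). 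Your ``types'' resemble this extension equivalence, but in the paper it is applied locally to shrink one side of a separator whose boundary is already bounded, not as a global search space.

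A second, related gap is that your triple-exponential bound is asserted rather than derived: ``subsets of $E$, Boolean functions on them, families thereof'' is numerology fitted to the target runtime (a family of Boolean functions on subsets of $E$ would in fact be $2^{2^{2^{k}}}$, and the $\bigO{k^2}$ in the innermost exponent is left to unspecified ``bookkeeping''). In the paper the tower arises for a concrete structural reason: there are $2^{\bigO{k^2}}$ boundary complementations, hence $\abs{\mathcal{N}}\le 2^{\bigO{k^2}}$ candidate neighbourhood vertices, hence after twin removal and compression at most $2^{\bigO{2^{\abs{\mathcal{N}}}}}$ vertices remain on the reduced side (\Cref{lem:range_size_comp}, \Cref{lem:scc_all_reductions}); the unbreakability threshold $q$ must be set to $2^{2^{2^{\bigO{k^2}}}}$ so that this reduction strictly shrinks the graph and the recursion of \Cref{thm:border_scc_fpt} terminates. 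You also correctly flag the canonisation of types and the closure of your equivalence under the incremental branching operations as open obstacles --- but these are not secondary polish; without them (or an alternative such as the separator/unbreakability framework) the proof does not go through.
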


We design our FPT algorithm for \textsc{Total-Secluded Strongly Connected Subgraph} problem using recursive understanding, a technique introduced by~\cite{grohe2011finding} and recently used successfully for various parameterized problems~\cite{chitnis2016designing,golovach2020finding,cygan2014minimum,lokshtanov2018reducing}. Specifically,~\cite{lokshtanov2018reducing} prove a meta-result stating that if a problem is FPT on highly-connected graphs and expressible in \emph{Counting Monadic Second-Order Logic}, it is also FPT on general graphs. Thereby, this theorem allows to shortcut the analysis and algorithm for the breakable case of recursive understanding. However, it comes at the price of being nonconstructive and not giving a concrete bound on the runtime. For this reason, it is still valuable to apply recursive understanding directly. In future work, it could be promising to apply and generalize this theorem for directed graphs.
We visualize the overall structure of the algorithm in \Cref{fig:recursive_calls}. 

\begin{figure}[t]
  \centering
  \includegraphics[width=0.6\textwidth,page=2]{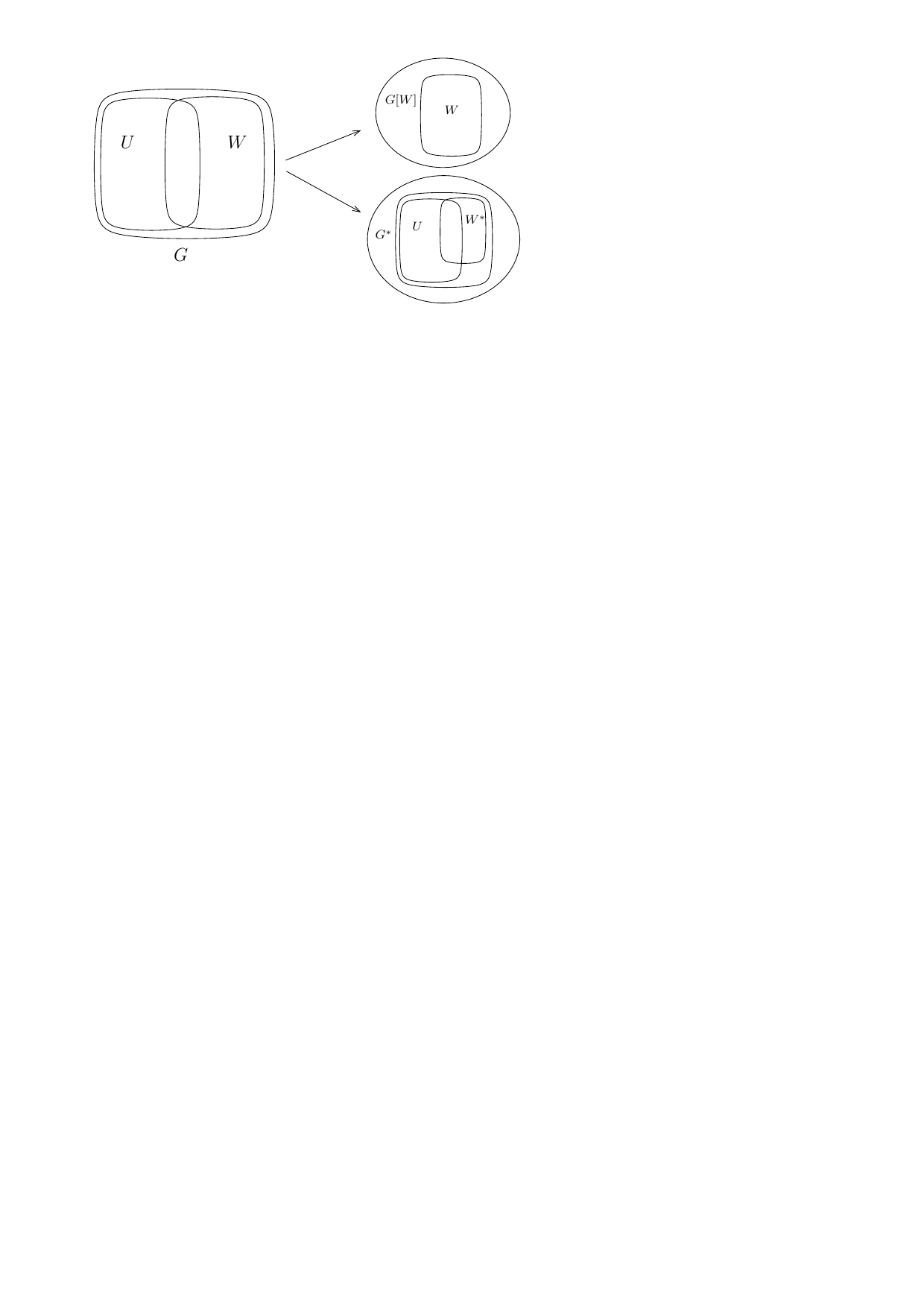}
  \caption{
  An illustration of the general recursive understanding algorithm used in \Cref{sec:scc}.
  There are two recursive calls in total, highlighted with dashed arrows.
  As defined later, only vertices inside $B$ are allowed to be in the neighborhood of a solution.
  $W$ is chosen to be the side of the separation with a smaller intersection with $T$.
  }
  \label{fig:recursive_calls}
\end{figure}

On a high level, recursive understanding algorithms work by first finding a small balanced separator of the underlying undirected graph. If no suitable balanced separator exists, the graph must be highly connected, which makes the problem simpler to solve. In the other case, we reduce and simplify one side of the separator while making sure to keep an equivalent set of solutions in the whole graph. By choosing parameters in the right way, this process reduces one side of the separator enough to invalidate the balance of the separator. Therefore, we have made progress and can iterate with another balanced separator or reach the base case.

In our case, looking for a separator of size at most $k$ makes the framework applicable. Crucially, this is because in any secluded subgraph $G[S]$, where $S \subseteq \ve{G}$, the neighborhood $\nei{S}$ acts as a separator between $S$ and $\ve{G} \setminus \cnei{S}$. Therefore, if no balanced separator of size at most $k$ exists, we can deduce that either $S$ itself or $\ve{G} \setminus S$ must have a small size. This observation makes the problem significantly easier to solve in this case, using the color coding technique developed in~\cite{chitnis2016designing}.

In the other case, we can separate our graph into two balanced parts, $U$ and $W$, with a separator $P$ of size $\abs{P} \le k$. Now, our goal is to solve the same problem recursively for one of the sides, say $W$ and replace that side with an equivalent graph of bounded size that only contains all necessary solutions.
However, finding subsets of solutions is not the same as finding solutions; the solution $S$ for the whole graph could heavily depend on including some vertices in $U$. That being said, the different options for this influence are limited. At most $2^k$ different subsets $X$ of $P$ could be part of the solution. For any such $X$, the solution can only interact across $P$ in a limited number of ways. For finding strongly connected subgraphs, we have to consider for which pair $(x_1, x_2) \in X \times X$ there already is a $x_1$-$x_2$-path in the $U$-part of the complete solution $S$. 
This allows us to construct a new instance for every such possibility by encoding $X$ and the existing paths into $W$. These instances are called \emph{boundary complementations}. We visualize the idea of this construction in \Cref{fig:scc_bc}.

Fundamentally, we prove that an optimal solution for the original graph exists, that coincides in $W$ with an optimal solution to the boundary complementation graph in which $U$ is replaced. Hence, we restrict the space of solutions to only those whose neighborhood in $W$ coincides with the neighborhood of an optimal solution to some boundary complementation. The restricted instance consists of a bounded-size set $B$ of vertices that could be part of $\nei{S}$ and components in $W \setminus B$ that can only be included in $S$ completely or not at all. We introduce graph \emph{extensions} to formalize when exactly these components play the same role in a strongly connected subgraph. Equivalent extensions can then be merged and compressed into equivalent extensions of bounded size.
In total, this guarantees that $W$ is shrunk enough to invalidate the previous balanced separator, and we can restart the process.

\subparagraph*{$\mathcal{F}$-Free Subgraph.}
For any family of graphs $\mathcal{F}$, a graph $G$ is called $\mathcal{F}$-Free if it does not contain any graph in $\mathcal{F}$ as an induced subgraph. Depending on the context both $\mathcal{F}$ and $G$ are either both undirected or both directed.
The widely-studied \textsc{Secluded $\mathcal{F}$-Free Subgraph} problem on undirected graphs is FPT (with parameter $k$) using recursive understanding~\cite{golovach2020finding} or branching on important separators~\cite{jansen2023single} when we restrict it to connected solutions. We study the directed version of the problem and surprisingly, it turns out to be W[1]-hard for almost all forbidden graph families $\mathcal{F}$ even with respect to the parameter $k+w$. Precisely, we prove the following theorem.

\begin{restatable}[]{theorem}{restateffree}
\label{thm:f-free-hard-always}
    Let $\mathcal{F}$ be a non-empty set of directed graphs such that no $F \in \mathcal{F}$ is a subgraph of an inward star. Then, \textsc{Out-Secluded $\mathcal{F}$-Free Weakly Connected Subgraph} is W[1]-hard with respect to the parameter $k+w$ for unit weights.
\end{restatable}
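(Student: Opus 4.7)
My plan is to reduce from \textsc{Multicolored Clique}, parameterized by the number of color classes $\kappa$, which is known to be W[1]-hard. Given an instance $(H, V_1 \cup \cdots \cup V_\kappa)$, I will construct a directed graph $G$ with unit weights together with parameters $k, w \in \bigO{\kappa^2}$ such that $G$ admits an out-secluded, $\mathcal{F}$-free, weakly connected vertex set $S$ with $|S| \ge w$ and $|\outNei{S} \setminus S| \le k$ if and only if $H$ contains a multicolored clique. Since $k + w$ is then polynomial in $\kappa$, the W[1]-hardness of \textsc{Multicolored Clique} parameterized by $\kappa$ transfers to our problem with parameter $k + w$.

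The central structural lever that makes the reduction uniform across every admissible $\mathcal{F}$ is what I call the \emph{inward-star trick}: if $G[S]$ is a subgraph of some inward star, then every induced subgraph of $G[S]$ is itself a subgraph of an inward star, and since by hypothesis no $F \in \mathcal{F}$ is a subgraph of an inward star, $G[S]$ is automatically $\mathcal{F}$-free. I therefore design the instance so that the intended solution induces an inward star around a designated hub vertex $c$; this way the $\mathcal{F}$-free constraint is satisfied ``for free'' regardless of what the specific forbidden family $\mathcal{F}$ looks like.

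Concretely, the construction introduces a hub vertex $c$ with no outgoing arcs; for every $v \in V(H)$ a \emph{selection vertex} $x_v$ together with the single internal arc $(x_v, c)$; and for each ``bad'' pair $\{v, w\}$ (a same-color pair, or a pair across different color classes that is a non-edge of $H$) a bundle of $k+1$ duplicated \emph{penalty vertices} that are out-neighbors of both $x_v$ and $x_w$. For a multicolored clique $\{v_1, \ldots, v_\kappa\}$ in $H$, the candidate solution $S^\ast := \{c\} \cup \{x_{v_1}, \ldots, x_{v_\kappa}\}$ is weakly connected through $c$, induces an inward star with $|\outNei{S^\ast} \setminus S^\ast| = 0$, and has size $\kappa + 1$; setting $w := \kappa + 1$ then settles the forward direction.

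The hard part is the reverse direction, where I must rule out ``cheap'' solutions that meet the budget without encoding a clique, in particular solutions that trade selection vertices for penalty vertices, or only partially cover the color classes. I plan to handle this by (i) using penalty bundles of size $k+1$ so that activating any single bad pair already overflows the out-secluded budget; (ii) attaching auxiliary outgoing arcs to every penalty vertex, designed with a fixed $F \in \mathcal{F}$ in mind, so that including a penalty vertex in $S$ alongside any selection vertex forces an induced copy of $F$ into $G[S]$ (this exploits that every admissible $F$ must contain two arcs with distinct heads, which provides an unavoidable embedding target); and (iii) tuning the weight threshold $w$ together with the weak-connectivity requirement through $c$ so that any feasible solution is anchored at $c$ and contains at least one $x_v$ per color class. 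Together these ingredients will force the solution to take the inward-star form with exactly one $x_v$ per color class, which, under the out-secluded budget, precisely corresponds to a multicolored clique of $H$.
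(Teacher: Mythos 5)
There is a genuine gap, and it sits in the forward direction of your reduction. Your penalty gadget puts, for every bad pair $\{v,w\}$, a bundle of $k+1$ vertices that are out-neighbors of \emph{both} $x_v$ and $x_w$. But an out-neighborhood count is triggered as soon as \emph{one} endpoint of the bad pair is selected: if $x_{v_i}\in S^\ast$ and $p$ is a penalty vertex for a bad pair $\{v_i,u\}$ with $u$ outside the clique (e.g.\ any other vertex of $v_i$'s own color class), then $p\in\outNei{S^\ast}$. Every selection vertex participates in many bad pairs, so the intended solution $S^\ast=\{c\}\cup\{x_{v_1},\dots,x_{v_\kappa}\}$ has an enormous out-neighborhood, not $0$, and is infeasible for any reasonable budget $k$. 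There is no way to repair this within the same mechanism: a vertex $p$ with arcs $(x_v,p),(x_w,p)$ lies in $\outNei{S}$ whenever at least one of $x_v,x_w$ is in $S$ and $p\notin S$, so the gadget penalizes selecting a single endpoint rather than the pair. Your ingredient (ii) has a related problem: adding outgoing arcs from penalty vertices cannot ``force an induced copy of $F$ into $G[S]$,'' because an induced copy requires all of its vertices to belong to $S$, and membership in $S$ is chosen by the solver; with out-seclusion you can only \emph{exclude} vertices by making their inclusion overflow the budget, not force inclusions.

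For contrast, the paper's reduction (from \textsc{Clique}) avoids this trap by selecting \emph{edge} vertices rather than vertex selectors: in the modified incidence graph, each edge-vertex has arcs to its two endpoint vertices and to a hub $s$, so for a solution $S$ consisting of $s$ plus $\binom{k}{2}$ edge-vertices the out-neighborhood is exactly the set of endpoints of the chosen edges, and the budget $\abs{\outNei{S}}\le k$ \emph{is} the clique condition — no bad-pair penalties are needed. All other vertices ($k+2$ pairwise bidirectionally linked copies of some $F\in\mathcal{F}$, and the original vertex side $V_V$ pointing into them) are excluded purely by out-degree overflow, and weak connectivity through $s$ plus $w=\binom{k}{2}+1$ pins the solution to an inward star on the edge-vertices, which is $\mathcal{F}$-free by the hypothesis on $\mathcal{F}$ — the one idea your proposal shares correctly with the paper. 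If you want to keep a multicolored-clique flavor, you would still need to re-engineer the gadget so that the seclusion budget itself counts selected ``vertices of $H$'' as out-neighbors, rather than trying to punish bad pairs.
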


We establish an almost complete dichotomy that highlights the few cases of families~$\mathcal{F}$ for which the problem remains tractable. One of these exceptions is if~$\mathcal{F}$ contains an edgeless graph of any size, where we employ our algorithm for \textsc{Out-Secluded $\alpha$-Bounded Subgraph}. \Cref{thm:f-free-hard-always} also implies the following result for the directed variant of \textsc{Secluded Tree} problem.

\begin{restatable}[]{corollary}{restatedag}
\label{cor:dag_out_k+t}
    \textsc{Out-Secluded Weakly Connected DAG} is W[1]-hard with parameter $k + w$ for unit weights.
\end{restatable}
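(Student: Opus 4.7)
The plan is to derive \Cref{cor:dag_out_k+t} as an immediate specialization of \Cref{thm:f-free-hard-always}. I would choose $\mathcal{F}$ to be the family of all directed cycles. Since a directed graph is acyclic if and only if it contains no directed cycle (equivalently, no induced directed cycle, because a shortest directed cycle is automatically induced), the resulting instance of \textsc{Out-Secluded $\mathcal{F}$-Free Weakly Connected Subgraph} is by definition an instance of \textsc{Out-Secluded Weakly Connected DAG}.

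To invoke the theorem it then suffices to verify its structural hypothesis: no $F \in \mathcal{F}$ is a subgraph of an inward star. This is easy, as in an inward star every edge is oriented toward the unique center, so the center has out-degree zero while all other vertices have out-degree one and in-degree zero. Hence an inward star is itself a DAG, and in particular cannot contain any directed cycle as a subgraph. With the hypothesis met, \Cref{thm:f-free-hard-always} directly yields W[1]-hardness with respect to $k+w$ under unit weights, which is exactly the statement of the corollary.

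Since the argument is a direct instantiation, there is no real obstacle; the only mild point worth addressing is that the chosen $\mathcal{F}$ is infinite, whereas the discussion preceding the theorem speaks of finite families. This causes no issue because acyclicity is polynomial-time verifiable and \Cref{thm:f-free-hard-always} is stated with a purely structural hypothesis on $\mathcal{F}$, without any finiteness requirement, so the reduction supplied by the theorem transfers verbatim.
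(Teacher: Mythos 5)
Your proposal is correct and matches the paper's own argument: the paper likewise obtains the corollary by taking $\mathcal{F}$ to be the (infinite) family of all directed cycles, observing that no directed cycle is a subgraph of an inward star (which is acyclic), and noting that the hardness proof of \Cref{thm:f-free-hard-always} does not require $\mathcal{F}$ to be finite. Your extra remarks on induced cycles and on the finiteness issue are exactly the points the paper glosses over, so no gap remains.
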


\subparagraph*{$\alpha$-Bounded Subgraph \& Clique.}

In the undirected setting, the \textsc{Secluded Clique} problem is natural and has been studied specifically. There is an FPT-algorithm running in time $2^{\bigO{k \log k}}n^{\bigO{1}}$ through contracting twins~\cite{golovach2020finding}.
The previous best algorithm however uses the general result for finding secluded $\mathcal{F}$-free subgraphs in time $2^{\bigO{k}}n^{\bigO{1}}$~\cite{jansen2023single}. By using important separators, they require time at least $4^kn^{\bigO{1}}$. 
This property is naturally generalizable to directed graphs via tournament graphs. We go one step further.

The independence number of an undirected or directed graph $G$ is the size of the maximum independent set in $G$ (or its underlying undirected graph).
If $G$ has independence number at most $\alpha$, we also call it \emph{$\alpha$-bounded}. This concept has been used to leverage parameterized results from the simpler tournament graphs to the larger graph class of $\alpha$-bounded graphs~\cite{sahu2023kernelization,fradkin2015edge,misra2023sub}. We prove the following results:

\begin{restatable}[]{theorem}{restateab}
\label{thm:alpha_bounded_fpt}
  \textsc{Out-Secluded $\alpha$-Bounded Subgraph} is solvable in time $(2\alpha + 2)^kn^{\alpha+\bigO{1}}$.
\end{restatable}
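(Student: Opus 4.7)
The plan is to combine an enumeration phase, accounting for the factor $n^{\alpha+\bigO{1}}$, with a branching procedure of depth $k$ and branching factor $2\alpha+2$. The starting observation is that for any target solution $S$, since $G[S]$ is $\alpha$-bounded, any maximal independent set $I$ of the underlying undirected graph of $G[S]$ has size at most $\alpha$, and every vertex $v \in S \setminus I$ must be adjacent (in the undirected sense) to some vertex of $I$ by maximality. I therefore enumerate all candidate sets $I \subseteq \ve{G}$ of size at most $\alpha$ that are independent in the underlying undirected graph of $G$; one such candidate must equal a maximal independent set of the optimal $G[S]$, giving at most $n^{\bigO{\alpha}}$ guesses.

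For each fixed $I$, let $C$ denote $I$ together with all vertices of $G$ adjacent to $I$ in either direction. By the previous observation, every feasible $S \supseteq I$ satisfies $S \subseteq C$. I then run a branching procedure that maintains a committed subset $S_0$ of $S$ (initially $I$) and a committed subset $N_0$ of $\outNei{S}$ (initially $\emptyset$), together with a remaining budget $k - |N_0|$. Two simple rules preprocess each state: if some vertex $v \in \outNei{S_0} \setminus (S_0 \cup N_0)$ lies outside $C$ then $v \notin S$ but $v \in \outNei{S}$, so commit $v$ to $N_0$ and decrement the budget; and if $\outNei{S_0} \subseteq S_0 \cup N_0$, then $S = S_0$ is a candidate solution whose $\alpha$-boundedness and weight are checked and recorded.

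Otherwise let $B := \outNei{S_0} \setminus (S_0 \cup N_0) \subseteq C$. If $G[B]$ contains an independent set $J$ of size $\alpha+1$, then $\alpha$-boundedness of $G[S]$ forces at least one vertex of $J$ to lie in $\outNei{S}$, and I branch by trying each of the $\alpha+1$ vertices of $J$ as the one committed to $N_0$, paired with an auxiliary binary decision used to record the status of a companion vertex of $J$, for a total of $2(\alpha+1) = 2\alpha+2$ branches, each decrementing the budget by one. If no such $J$ exists, then $G[B]$ is itself $\alpha$-bounded, and I safely absorb all of $B$ into $S_0$ before iterating; this step preserves $\alpha$-boundedness of $G[S_0]$ because $B \subseteq C$, so any independent set of $G[S_0 \cup B]$ is bounded either by $|I|\le\alpha$ or by the independence number of $G[B]$.

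The main obstacle lies in two places. First, the soundness of absorbing $B$ into $S_0$ in the weighted setting requires an argument that no optimal solution strictly benefits from excluding a vertex of $B$ whenever $G[B]$ has no independent set of size $\alpha+1$; this relies on nonnegativity of the weights together with the fact that any such excluded vertex would still lie in $\outNei{S}$, consuming budget without contributing weight. Second, producing the stated branching factor $2\alpha+2$ rather than the naive $\alpha+1$ demands a careful case analysis of how vertices of $J$ flow into $S_0$ versus $N_0$ across branches, so that each branch both strictly decrements the budget and jointly covers every feasible assignment of $J$. Combining the branching with the outer enumeration over the $n^{\bigO{\alpha}}$ candidates for $I$ and taking the best recorded solution then yields the claimed running time $(2\alpha+2)^k n^{\alpha+\bigO{1}}$.
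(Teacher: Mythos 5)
There is a genuine gap, and it sits exactly where your proposal diverges from the paper. The paper's algorithm rests on a directed structural lemma (its \Cref{lem:ind_set_reaches}): in any directed graph, hence in $\induced{G}{S}$, there is an independent set $U$ of size at most $\alpha$ from which every vertex of $S$ is reachable by a \emph{directed} path of length at most $2$. This is what makes the out-neighborhood budget usable: if a vertex $w$ in the two-hop out-neighborhood of $U$ is excluded from $S$, then some vertex on its directed path from $U$ (at most $2$ vertices, or $3$ for the seclusion rule) is forced into $\outNei{S}$, which is precisely what yields the $2\alpha+2$ branches. You replace this by undirected maximality, which only gives $S \subseteq \cnei{I}$ (total neighborhood). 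That containment does not interact with an \emph{out}-neighborhood budget: solution vertices that are merely in-neighbors of the committed set are never reached by your out-closure process, and excluded vertices that are in-neighbors of $S$ cost nothing, so they cannot be charged to $k$. Your substitute for this — absorbing all of $B=\outNei{S_0}\setminus(S_0\cup N_0)$ into $S_0$ whenever $\induced{G}{B}$ is $\alpha$-bounded — is unsound, and your two supporting claims fail. First, the exchange argument ("an excluded vertex of $B$ would consume budget without contributing weight") ignores that including $v$ adds all of $\outNei{v}\setminus(S\cup\outNei{S})$ to the out-neighborhood. Concretely, take $\alpha=1$, vertices $u,x,z_1,\dots,z_{k+1}$ with edges $(u,x)$ and $(x,z_i)$ for all $i$, and the $z_i$ pairwise adjacent: the set $S=\set{u}$ is feasible with $\outNei{S}=\set{x}$ and budget $k\ge 1$, but with the forced guess $I=\set{u}$ your algorithm absorbs $x$ (the singleton $B=\set{x}$ has no independent set of size $2$), after which the $k+1$ vertices $z_i$ lie outside $C$ and exhaust the budget; no branch ever keeps $x$ out of $S_0$, so the solution is missed. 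Second, the claim that any independent set of $\induced{G}{S_0\cup B}$ is bounded by $\max\set{\abs{I},\alpha}$ is a non sequitur: vertices of $C$ are adjacent to \emph{some} vertex of $I$, not to all of $S_0$, so absorbed vertices from different rounds can together form an independent set of size $\alpha+1$.

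A secondary issue: your "auxiliary binary decision used to record the status of a companion vertex" exists only to inflate $\alpha+1$ branches to $2\alpha+2$ and has no identified semantics; in the paper the factor $2\alpha+2$ arises for a concrete reason, namely that each of the $\alpha+1$ vertices of the forbidden independent set contributes up to two branching targets (itself and the intermediate vertex of its directed two-edge path from $U$), each of which is provably forced into $\outNei{S}$ in some branch. To repair your proof you essentially need the paper's reachability lemma (provable by induction on $\abs{\ve{G}}$), restrict attention to $S\subseteq\coutNei{\coutNei{U}}$, and branch only on deleting vertices (committing them to the neighborhood) rather than ever absorbing vertices into a partial solution.
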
 

\begin{restatable}[]{theorem}{restateabtotal}
\label{thm:alpha_bounded_total}
  \textsc{Total-Secluded $\alpha$-Bounded Subgraph} is solvable in time $(\alpha + 1)^kn^{\alpha+\bigO{1}}$.
\end{restatable}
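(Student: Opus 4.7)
The plan is to adapt the branching strategy used for \Cref{thm:alpha_bounded_fpt} and exploit the symmetry of the total-neighborhood to halve the branching factor from $2\alpha+2$ to $\alpha+1$. Intuitively, when a vertex $v \notin S$ is adjacent, in either direction, to some vertex of $S$, it automatically lies in $N(S)$ under the total-secluded notion, so one no longer needs a separate branch for ``$v$ has an out-edge to $S$'' versus ``$v$ has an in-edge from $S$''---this is the source of the extra factor of two in the out-secluded case.

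Concretely, I would first decompose $G$ into its weakly connected components and solve the problem independently on each, then combine the partial solutions via a standard two-dimensional knapsack-style DP over the remaining budget and the remaining independence-number budget, which contributes the $n^{\alpha + \bigO{1}}$ factor. Within a single weakly connected component $C$, if $G[C]$ is already $\alpha$-bounded then $S = C$ is feasible and $N(C) = \emptyset$; otherwise I find an independent set $I = \{v_1, \ldots, v_{\alpha+1}\}$ in the underlying undirected graph of $G[C]$, which can be done by enumeration in time $n^{\alpha+1}$. Since $G[S]$ must be $\alpha$-bounded, at least one $v_i \in I$ is outside $S$. I then branch into $\alpha+1$ subinstances; in the $i$-th branch, $v_i$ is added to the guessed boundary, deleted from the graph, and $k$ is decremented by one. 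This gives the recurrence $T(k) \le (\alpha+1)\, T(k-1) + n^{\alpha + \bigO{1}}$ and hence the claimed running time $(\alpha+1)^k n^{\alpha + \bigO{1}}$.

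The main obstacle I expect is proving correctness, specifically showing that for every optimal $S^*$ at least one of the $\alpha+1$ branches truly charges a vertex of $N(S^*)$ and not a vertex in $V(G) \setminus (S^* \cup N(S^*))$. The difficulty is that some $v_i \in I \setminus S^*$ may lie entirely outside $S^* \cup N(S^*)$. I plan to handle this by working inside a single weakly connected component $C$, so that $v_i$ and $S^*$ are undirected-connected and every $v_i$-to-$S^*$ path contains a vertex of $N(S^*)$, together with an exchange argument that lets us replace the ``wrong'' $v_i$ with a nearby witness in $N(S^*)$. If the plain exchange argument is insufficient, I expect to strengthen it via a preprocessing rule that deletes vertices which cannot influence any feasible solution inside $C$ (for instance, components of $G[C] - X$ for any $k$-sized cut $X$ that are too ``far'' from the rest), ensuring the branching always hits an actual boundary vertex.
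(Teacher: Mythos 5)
There is a genuine gap, and it is exactly the one you flag yourself: your core branching step is not sound. Inside a weakly connected component you find \emph{some} independent set $I$ of size $\alpha+1$ witnessing that $G[C]$ is not $\alpha$-bounded and branch on deleting one $v_i \in I$ while decrementing $k$. But all of $I$ may lie outside $S^* \cup N(S^*)$ (the witness of non-$\alpha$-boundedness can be arbitrarily far from the optimum within the same weak component), in which case every branch either deletes a vertex of $S^*$ or spends a unit of budget on a vertex that is not in $N(S^*)$; if $|N(S^*)|=k$, all branches then lose $S^*$. Weak connectivity does not rescue this: it is true that every undirected $v_i$--$S^*$ path meets $N(S^*)$, but that path can be long, its vertices are not among your $\alpha+1$ candidates, and branching over all of its vertices would destroy the $(\alpha+1)$ branching factor. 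The proposed ``exchange argument'' cannot be executed by the algorithm (it would need to know $S^*$ to find the nearby witness), and the preprocessing rule is not concrete --- far-away parts of a component can perfectly well host the maximum-weight solution, so they cannot be deleted.

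The missing idea, which is how the paper proves \Cref{thm:alpha_bounded_total}, is a localization step before branching: enumerate (in time $n^{\alpha+1}$) an independent set $U$ of size at most $\alpha$ intended to be a \emph{maximal} independent set of the unknown solution $S$. Since $G[S]$ is $\alpha$-bounded, such a $U$ exists, and by maximality its closed total neighborhood contains all of $S$; so one may restrict to solutions with $U \subseteq S \subseteq N[U]$. Now every branching candidate is adjacent (in the underlying undirected sense) to $U \subseteq S$, so if it is excluded from $S$ it is automatically in $N(S)$ --- this is what makes charging the budget legitimate. Concretely, one reruns the algorithm of \Cref{thm:alpha_bounded_fpt} with $N[U]$ in place of the two-hop out-neighborhood: an independent set of size $\alpha+1$ inside $N[U]$ gives $\alpha+1$ sound branches, and a vertex of $N(N[U])$ gives $2$ sound branches via the length-two path from $U$, yielding branching number $\alpha+1$ and runtime $(\alpha+1)^k n^{\alpha+\bigO{1}}$. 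Note also that the $n^{\alpha+\bigO{1}}$ factor comes from enumerating $U$, not from a component decomposition; your per-component knapsack DP is unnecessary and adds its own complications (independence numbers add across components, so you would have to compute, for every component and every $a \le \alpha$, optimal solutions under an independence bound of $a$, which your base case ``take all of $C$ if it is $\alpha$-bounded'' does not provide).
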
 

We achieve the goal via a branching algorithm, solving \textsc{Secluded $\alpha$-Bounded Subgraph} for all neighborhood definitions in FPT time (\Cref{thm:alpha_bounded_fpt,thm:alpha_bounded_total}).
Our algorithm initially picks a vertex subset $U \subseteq \ve{G}$ and looks only for solutions in the two-hop neighborhood of $U$. A structural property of $\alpha$-bounded graphs guarantees that any optimal solution is found in this way. 
On a high level, the remaining algorithm depends on two branching strategies. First, we branch on forbidden structures in the two-hop neighborhood of $U$, to ensure that it becomes $\alpha$-bounded. Second, we branch on farther away vertices to reach a secluded set. 

Note that \Cref{thm:alpha_bounded_total} is inherently also an undirected result. Furthermore, the ideas behind the algorithm can in turn be used for the simpler undirected \textsc{Secluded Clique} problem.
By a closer analysis of these two high-level rules, we arrive at a branching vector of $(1,2)$ for \textsc{Secluded Clique}. This results in the following runtime, a drastic improvement on the previous barrier of $4^kn^{\bigO{1}}$.

\begin{restatable}[]{theorem}{restateclique}
\label{thm:clique_better}
  \textsc{Secluded Clique} is solvable in time $1.6181^k n^{\bigO{1}}$.
\end{restatable}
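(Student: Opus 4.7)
The plan is to specialize the $\alpha$-bounded branching behind \cref{thm:alpha_bounded_fpt} to $\alpha=1$ while sharpening its analysis from $(2\alpha+2)^k=4^k$ down to $\phi^k$, where $\phi=(1+\sqrt{5})/2\approx 1.618$. First, I would enumerate a seed vertex $u$ assumed to lie in the target clique $S$, paying a factor of $n$. Since $S$ is a clique containing $u$, we have $S\subseteq N[u]$ and $N(S)\subseteq N^2[u]$, so writing $X=N(u)$ and $W=N^2[u]\setminus N[u]$, vertices outside $N^2[u]$ are irrelevant to the instance.

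With $S=\{u\}$ as the initial commitment, the algorithm maintains a three-way partition of $X$ into committed members of $S$, committed non-members (each charging $1$ to $|N(S)\setminus S|$), and undecided vertices, together with a remaining budget $k'$. Two branching rules should suffice. \emph{Rule (a):} if some undecided $v\in X$ has $t\ge 2$ undecided non-neighbors in $X$, branch on $v$; including $v$ forces all $t$ non-neighbors out of $S$, and since each lies in $N[u]\setminus S\subseteq N(S)$ and charges $1$, this drops $k'$ by $t\ge 2$, while excluding $v$ drops $k'$ by $1$. \emph{Rule (b):} if some $w\in W$ has $\ge 2$ undecided neighbors in $X$, branch on whether $w\in N(S)$; the ``yes'' branch pre-charges $1$ to the budget, while the ``no'' branch forces all undecided neighbors of $w$ in $X$ out of $S$, dropping $k'$ by $\ge 2$. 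Both rules have branching vector $(1,2)$ or better, yielding the recurrence $T(k)\le T(k-1)+T(k-2)$ with solution $T(k)=O(\phi^k)$.

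When neither rule fires, the complement of $G$ restricted to the undecided part of $X$ has maximum degree at most $1$ (a matching), and each live external vertex has at most one live neighbor in $X$. The residual instance decomposes into independent decisions---one three-way choice per matched pair and one two-way choice per singleton---which I would solve in polynomial time by a knapsack-style dynamic program on the $k'$-bounded budget, incorporating any pre-commitments from Rule (b). The main subtlety will lie in Rule (b): its ``yes'' branch pre-charges budget for $w$ without pinning down which neighbor of $w$ enters $S$. I would track these pre-commitments explicitly and prune any leaf in which no neighbor of $w$ is ultimately included; correctness is preserved because the complementary branch exactly handles the case $w\notin N(S)$. Combined with the $n$-factor from seed enumeration and polynomial work per leaf, the total running time is $O(n\cdot\phi^k\cdot\mathrm{poly}(n,k))\le 1.6181^k\cdot n^{O(1)}$.
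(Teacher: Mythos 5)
Your proposal is correct in its core mechanics and follows the same route as the paper: guess a seed vertex $u$ of the clique, observe $S \subseteq N[u]$, and branch with vector $(1,2)$ on (i) non-adjacent pairs inside $N(u)$ and (ii) vertices outside $N[u]$, which yields $1.6181^k n^{\mathcal{O}(1)}$. The differences are in the endgame. The paper keeps no explicit in/out/undecided partition: deciding a vertex into the neighborhood means deleting it and decrementing $k$, and two extra non-branching cases (non-adjacent twins with no outside neighbors, Case~1b, and a vertex of $N(u)$ with exactly one outside neighbor, Case~2a) guarantee that when no rule fires, $N[u]$ of the reduced graph is itself the solution, so no leaf computation is needed. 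You instead defer these degenerate situations to a knapsack-style DP at the leaves, and your Rule~(b) pivots on the outside vertex $w$ (``is $w \in N(S)$?'') rather than on a vertex $v \in N(u)$ as in the paper's Case~2b (either $v \in N(S)$ or all of $N(v) \setminus N[u]$ goes into $N(S)$). This is exactly where your one soft spot lies: the pre-charged vertices $w$ from the ``yes'' branches impose ``at least one neighbor of $w$ must end up in $S$'' constraints that can couple otherwise independent pair/singleton decisions, so the leaf problem is no longer a plain knapsack if you enforce the pruning literally. The clean repair is to not enforce it at all: a completion that includes no neighbor of a pre-charged $w$ has merely been over-charged and is therefore still feasible, while its exact accounting is recovered in the sibling ``no'' branch, so taking the maximum over all leaves still returns the optimum. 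With that fix (and the routine bookkeeping of charging outside vertices adjacent to already-committed clique vertices, which the paper's delete-and-recheck formulation handles implicitly), your argument goes through with the claimed branching vector and running time.
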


\subparagraph*{Organization.}

We consider \textsc{Total-Secluded Strongly Connected Subgraph} and prove \Cref{cor:scc_algo} in \Cref{sec:scc}.
The hardness result about \textsc{Out-Secluded $\mathcal{F}$ Weakly Connected Subgraph} in \Cref{thm:f-free-hard-always} is proved in \Cref{chap:hardness}.
In \Cref{chap:tour}, we give the algorithms and proofs for \Cref{thm:alpha_bounded_fpt,thm:alpha_bounded_total,thm:clique_better}.

\subparagraph*{Notation.}

Let $G$ be a directed graph. For a vertex $v \in \ve{G}$, we denote the \emph{out-neighborhood} by $\outNei{v} = \set{u}{(v,u) \in \e{G}}$ and the \emph{in-neighborhood} by $\inNei{v} = \set{u}{(u,v) \in \e{G}}$.
The \emph{total-neighborhood} is defined as $\nei{v} = \outNei{v} \cup \inNei{v}$.
We use the same notation for sets of vertices $S \subseteq \ve{G}$ as $\outNei{S} = \bigcup_{v \in S} \outNei{v} \setminus S$. 
Furthermore, for all definitions, we also consider their \emph{closed} version that includes the vertex or vertex set itself, denoted by $\coutNei{v} = \outNei{v} \cup \set{v}$.
For a vertex set $S \subseteq \ve{G}$, we write $\induced{G}{S}$ for the subgraph induced by $S$ or $G - S$ for the subgraph induced by $\ve{G} \setminus S$. We also use $G - v$ instead of $G - \set{v}$.

When we refer to a component of a directed graph, we mean a component of the underlying undirected graph, that is, a maximal set that induces a weakly connected subgraph. In contrast, a strongly connected component refers to a maximal set that induces a strongly connected subgraph.
For standard parameterized definitions, we refer to~\cite{cygan2015parameterized}.

\zzcommand{\scs}{\textsc{TSSCS}}

\section{Total-Secluded Strongly Connected Subgraph}
\label{sec:scc}
In this section, we investigate the \textsc{Total-Secluded Strongly Connected Subgraph} problem, or \scs{} for short. First, we prove that the problem is NP-hard in general graphs, motivating analysis of its parameterized complexity.

\begin{theorem}
\label{thm:total_scc_np_hard}
  \scs{} is NP-hard, even for unit weights.
\end{theorem}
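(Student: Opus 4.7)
The plan is to establish NP-hardness by a polynomial-time reduction from a known NP-hard problem. A useful preliminary observation is that by setting the neighborhood budget to $k = |\ve{G}|$, the total-neighborhood constraint becomes vacuous, so \scs{} reduces to the problem of finding a maximum induced strongly connected subgraph. It therefore suffices to establish NP-hardness of this simpler problem with unit weights.

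A natural route is via the undirected \textsc{Secluded Connected Subgraph} problem together with a \emph{bidirecting} transformation. Given an undirected instance $G$, I would construct a directed graph $G'$ by replacing each undirected edge $\{u,v\}$ with two oppositely directed arcs $u \to v$ and $v \to u$. Then a subset $S$ induces a strongly connected subgraph in $G'$ if and only if $G[S]$ is connected in $G$ (since every undirected path lifts to two directed paths providing mutual reachability), and the total-neighborhood of $S$ in $G'$ coincides with the undirected neighborhood of $S$ in $G$; unit weights and budgets transfer identically. What remains is to establish NP-hardness of undirected \textsc{Secluded Connected Subgraph}, which can be handled by a standard reduction from \textsc{Steiner Tree}: given an instance $(G, T, p)$, attach a sufficiently large pendant clique to each terminal so that any secluded connected subgraph of the target size is forced to contain all terminals (through the high-weight cliques) along with a Steiner-tree-like connector of cost at most $p$.

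The main obstacle is the careful calibration of gadget sizes in the \textsc{Steiner Tree} reduction: the pendant cliques must be large enough to force inclusion of every terminal in any sufficiently large secluded connected subgraph, while the neighborhood budget must remain tight enough to faithfully reflect the Steiner budget $p$. An alternative more direct route is to reduce from \textsc{3-SAT}: build a directed graph with a hub vertex, literal vertices with bidirected arcs to the hub, clause vertices that are strongly connected to the hub only through an in-solution literal, and one-way "penalty" vertices attached to each literal that inflate the total-neighborhood whenever that literal is selected. The primary technical concern in that variant is preventing spurious "pseudo-assignments" that simultaneously include both literals of a variable; this can be addressed either by structuring the penalty gadget so that selecting both literals exceeds the budget, or by reducing from a more restrictive source such as \textsc{Monotone 3-SAT} or \textsc{Exact Cover by 3-Sets}.
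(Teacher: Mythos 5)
There is a genuine gap: both problems you propose to reduce \emph{from} are in fact polynomial-time solvable, so neither route can establish NP-hardness. First, your ``preliminary observation'': with $k=|V(G)|$ the problem becomes finding a maximum-weight induced strongly connected subgraph, but every strongly connected induced subgraph is contained in a single strongly connected component, and every strongly connected component is itself a feasible solution; hence the optimum is simply the heaviest SCC, computable in linear time. Second, the bidirection route needs undirected \textsc{Secluded Connected Subgraph} (connected $S$, $|N(S)|\le k$, weight at least $w$) to be NP-hard, but that problem is trivial: for any $S$ contained in a connected component $C$, taking all of $C$ only increases the (nonnegative) weight and shrinks the neighborhood to $\emptyset$, so the optimum is always an entire connected component. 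In particular, in your Steiner-tree gadget the whole constructed graph is connected with empty neighborhood and maximal size, so the budget $p$ never constrains anything and the pendant cliques are irrelevant; no calibration of gadget sizes can fix this, because the obstacle is structural, not quantitative. This is exactly the point where directedness matters -- enlarging a set can destroy \emph{strong} connectivity, which is impossible to emulate after bidirecting -- so a correct reduction must exploit one-way arcs rather than erase them. (The 3-SAT sketch acknowledges its own unresolved issues and is too underspecified to assess.)

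For comparison, the paper reduces from \textsc{Clique}: for each edge $e=\{a,b\}$ of the input graph it creates an edge-vertex $v_e$, all edge-vertices are joined pairwise by arcs in both directions (so any nonempty subset of them induces a strongly connected subgraph), and each $v_e$ additionally has \emph{one-way} arcs to $v_a$, $v_b$, and a private pendant vertex $v_e'$. Because those arcs have no way back, no vertex outside the edge-vertex set can belong to a strongly connected solution of weight at least $w=\binom{k}{2}$, and with budget $k'=k+|E(G)|$ the neighborhood count (all unused edge-vertices and pendants plus the endpoints of the chosen edges) forces the chosen $\binom{k}{2}$ edge-vertices to have at most $k$ endpoints, i.e., to encode a $k$-clique. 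If you want to salvage your plan, you need a construction of this flavor, where membership in the solution is blocked by irreversible outgoing arcs and the neighborhood budget counts the vertices of the clique being certified.
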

\begin{proof}
\begin{figure}
  \centering
  \hfill
  \begin{subfigure}{0.49\textwidth}
    \centering
    \includegraphics[width=0.6\textwidth,page=2]{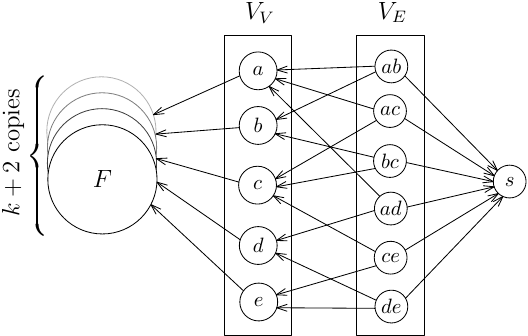}
    \caption{An undirected graph $G$ with maximum clique size 3.}
  \end{subfigure}
  \hfill
  \begin{subfigure}{0.49\textwidth}
    \centering
    \includegraphics[width=0.70\textwidth,page=3]{figures/incidence_graph_reduction}
    \caption{The \scs{} instance $G'$ created by our reduction in the proof of \Cref{thm:total_scc_np_hard}. }
  \end{subfigure}
  \hfill
  \caption{A visualization of the reduction in the proof of \Cref{thm:total_scc_np_hard}.  }
  \label{fig:clique_reduction_scc}
\end{figure}

  We reduce from the \textsc{Clique} problem which is NP-hard, where given a graph the problem is to check if there is a clique of size at least $k$. Given an instance $(G,k)$ with $k\ge 2$, we reduce it to an instance of \scs{} $(G',w,k')$ as follows:
  
  We create the graph $G'$ with $\ve{G'} \coloneqq \ve{G}\cup V_E \cup V_E'$, where $V_E \coloneqq \set{v_e}{e\in E(G)}$ and $V_E'\coloneqq \set{v_e'}{e\in E(G)}$. For any vertex $x\in V(G')$, the weight of $v$ is 1.
  The new graph has edges \[
  \e{G'} \coloneqq \set{(v_{e_1}, v_{e_2})}{v_{e_1} \ne v_{e_2} \in V_E} \cup \set{(v_e, v_e'), (v_e, a), (v_e, b)}{e = \set{a,b} \in \e{G}}.
  \] 
    We set $k' = k + \abs{\e{G}}$ and $w = \binom{k}{2}$. For an illustration of the construction of $G'$, refer to \Cref{fig:clique_reduction_scc}.

 Next, we prove that $G$ has a clique of size $k$ if and only if $G'$ has a strongly connected subgraph $H$ of weight at least $w$ with the size of total neighborhood of $H$ at most $k'$. For the forward direction, let $C \subseteq \ve{G}$ be a clique of size $k$ in $G$. Then, we can choose $S = \{v_{\{a, b\}} \mid a, b \in C\}$. Since $C$ is a clique, $S \subseteq V_E$. Also, $S$ has weight $w$ since a clique of size $k$ contains $\binom{k}{2}$ edges. Furthermore, the neighborhood of $S$ consists of all $v_e'$ for $v_e \in S$, all $v_e \notin S$, and all $v_a$ for $a \in C$, giving a total size of $k'$. Therefore, $S$ is a solution for \scs{}.

  Let $S$ be a solution for \scs{} in $G'$. For $k \ge 2$, the solution must include at least one vertex from $v_E$ to reach the desired weight. Therefore, no vertex from $\ve{G'} \cup V_E'$ can be included to not violate connectivity. The size of the neighborhood of $S$ will then be $\abs{\e{G}}$ increased by the number of incident vertices to edge-vertices picked in $S$. Since $S$ must be an edge set of size at least $\binom{k}{2}$ with at most $k$ incident vertices, this must induce a clique of size at least $k$ in $G$.
\end{proof}

The proof of \Cref{thm:total_scc_np_hard} also shows that \scs{} is W[1]-hard when parameterized by $w$, since $w$ in the proof also only depends on the parameter for \textsc{Clique}.
In the following subsections, we describe the recursive understanding algorithm to solve \scs{} parameterized by $k$.
We follow the framework by~\cite{chitnis2016designing,golovach2020finding} and first introduce generalized problems in \Cref{sec:scc_bc}. In \Cref{sec:unbreak}, we solve the case of unbreakable graphs. We introduce graph \emph{extensions} in \Cref{sec:scc_extensions} as a framework to formulate our reduction rules and full algorithm in \Cref{sec:solving_scc}.

\subsection{Boundaries and boundary complementations}\label{sec:scc_bc}

In this subsection, we first define an additional optimization problem that is useful for recursion. Then, we describe a problem-specific \emph{boundary complementation}. Finally, we define the auxiliary problem that our algorithm solves, which includes solving many similar instances from the optimization problem.

\zzcommand{\scsrec}{\textsc{Max \scs{}}}
\begin{tcolorbox}[enhanced,title={\color{black} {\scsrec{}}}, colback=white, boxrule=0.4pt,
	attach boxed title to top left={xshift=.3cm, yshift*=-2.5mm},
	boxed title style={size=small,frame hidden,colback=white}]
	
	\textbf{Input:}  
  A directed graph $G$, subsets $I,O,B \subseteq \ve{G}$, a weight function $\wOp \colon \ve{G} \to \N$, and an integer $k \in \N$

	\textbf{Output:} A set $S \subseteq \ve{G}$ that maximizes $\wOp(S)$ subject to $I \subseteq S$, $O \cap S = \emptyset$, $N(S) \subseteq B$, $\abs{\nei{S}} \le k$, and $\induced{G}{S}$ strongly connected, or report that no feasible solution exists. 
\end{tcolorbox} 

Note that this problem generalizes the optimization variant of \scs{} by setting $I \coloneqq O \coloneqq \emptyset$ and $B \coloneqq \ve{G}$. However, \scsrec{} allows us to put more constraints on recursive calls, enforcing vertices to be included or excluded from the solution and neighborhood.

\begin{figure}[t]
    \centering
    \hfill
    \begin{subfigure}{0.48\textwidth}
      \centering
      \includegraphics[width=0.9\textwidth,page=2]{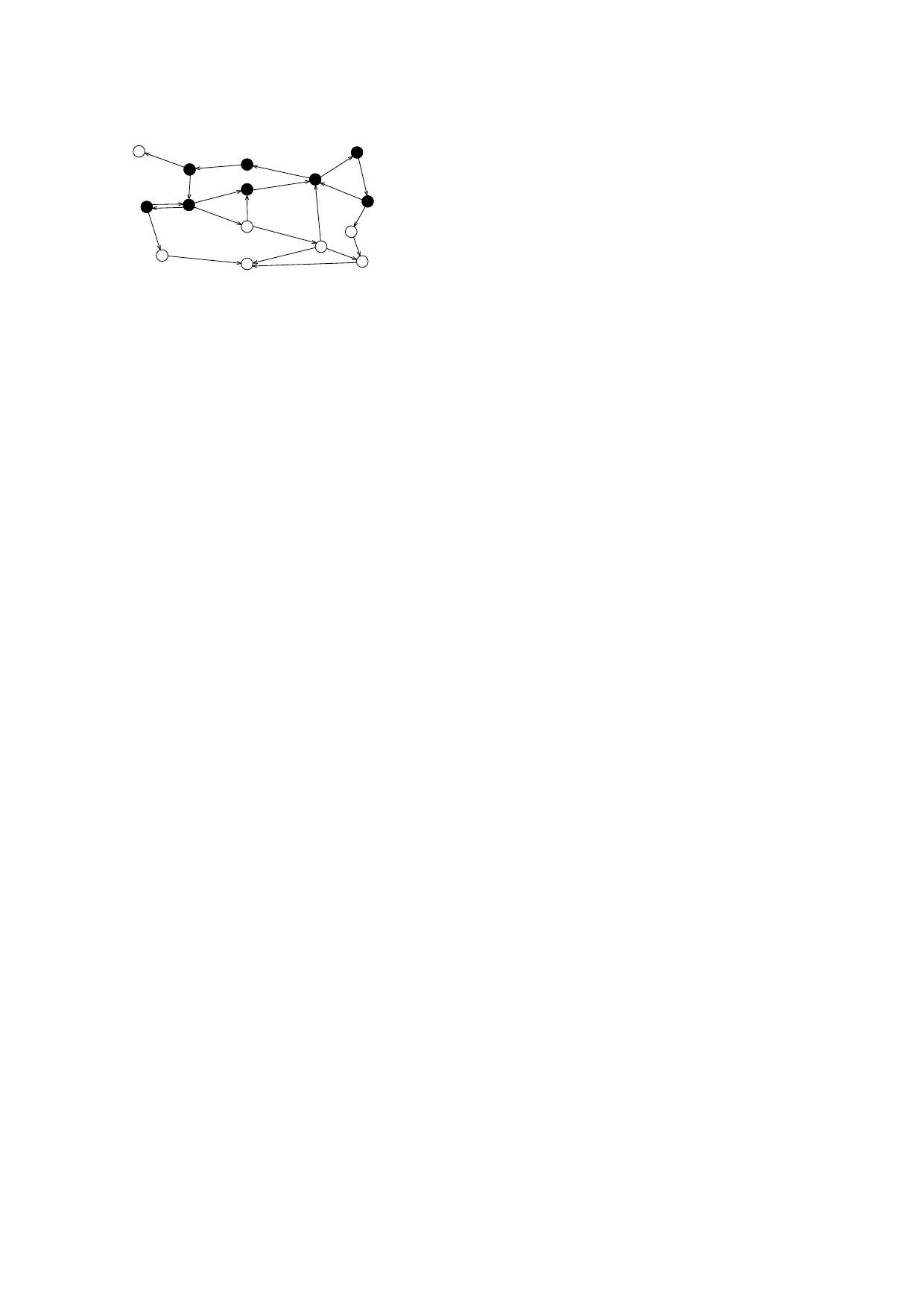}
      \caption{A strongly connected subgraph $S$ of a graph $G$ with $k=5$ neighbors. Black vertices are part of $S$.}
    \end{subfigure}
    \hfill
    \begin{subfigure}{0.48\textwidth}
      \centering
      \includegraphics[width=0.9\textwidth,page=3]{figures/bc_solution}
      \caption{The boundary complementation that admits an equivalent feasible solution when setting $k' \coloneqq k - 2$.}\label{fig:scc_bcb}
    \end{subfigure}
    \hfill
    \caption{A visualization of a solution in the original graph and a solution in a boundary complementation. Every partial solution in $U$ can be represented by a boundary complementation.}\label{fig:scc_bc}
\end{figure}

\begin{definition}[Boundary Complementation]\label{def:scc_border_complementation}
  Let $\mathcal{I} = (G,I,O,B,\wOp,k)$ be a \scsrec{} instance. Let $T \subseteq \ve{G}$ be a set of \emph{boundary terminals} with a partition $X,Y,Z \subseteq T$ and let $R \subseteq X \times X$ be a relation on $X$. Then, we call the instance $(G',I',O',B,\wOp',k')$ a \emph{boundary complementation} of $\mathcal{I}$ and $T$ if
   \begin{enumerate}
    \item $G'$ is obtained from $G$ by adding vertices $u_{(a,b)}$ for every $(a,b) \in R$ and edges $(a, u_{(a,b)})$, $(u_{(a,b)}, b)$, and for every $y \in Y$ additionally $(u_{(a,b)}, y)$,
    \item $I' \coloneqq I \cup X \cup \{u_{(a,b)} \mid (a,b) \in R\}$,
    \item $O' \coloneqq O \cup Y \cup Z$,
    \item $\wOp'(v) \coloneqq \w{v}$ for $v \in \ve{G}$ and $\wOp'(u_{(a,b)}) \coloneqq 0$ for $(a,b) \in R$, and
    \item $k' \le k$.\qedhere
  \end{enumerate}
\end{definition}

See \Cref{fig:scc_bc} for an example boundary complementation.
The intuition here should be that if we take the union of $G$ with any other graph $H$ and only connect $H$ to $G$ at the vertices in $T$, then $(X,Y,Z,R)$ encodes all possibilities of how a solution in $G \cup H$ could behave from $G$'s point of view. So, for any solution $S$ to \scsrec{} in $G \cup H$, there is some boundary complementation for $G$ in which we can solve and exchange $S \cap G$ for that solution. Later, we prove a statement that is similar to this intuition.

To employ recursive understanding, we need a boundaried version of the problem. Intuitively, this problem is the same as the previous \scsrec{} but for a small part of the graph we want to try out every possibility, giving many very similar instances. This small part will later represent a separator to a different part of the graph. 

\zzcommand{\scsborder}{\textsc{Boundaried \scsrec{}}}
\begin{tcolorbox}[enhanced,title={\color{black} {\scsborder{}}}, colback=white, boxrule=0.4pt,
	attach boxed title to top left={xshift=.3cm, yshift*=-2.5mm},
	boxed title style={size=small,frame hidden,colback=white}]
	
	\textbf{Input:}  
A \scsrec{} instance $\mathcal{I} = (G,I,O,B,\wOp,k)$ and a set of boundary terminals $T \subseteq \ve{G}$ with $\abs{T} \le 2k$

	\textbf{Output:}
A solution to \scsrec{} for each boundary complementation $\mathcal{I}'$ of $\mathcal{I}$ and $T$, or report that no solution exists.
\end{tcolorbox}

To even have a chance to solve this problem, we need to make sure that there are not too many boundary complementations. The following lemma bounds that number in terms of $k$. 

\begin{lemma}
\label{lem:number_border_complementations}
  For a \scsrec{} instance $(G,I,O,B,\wOp,k)$ and $T \subseteq \ve{G}$, there are at most $3^{\abs{T}}2^{\abs{T}^2}(k+1)$ many boundary complementations, which can be enumerated in time $2^{\bigO{\abs{T}^2}}n^{\bigO{1}}$.
\end{lemma}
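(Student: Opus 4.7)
The plan is a straightforward counting of the ingredients that uniquely specify a boundary complementation in \Cref{def:scc_border_complementation}. A boundary complementation is fully determined by the triple $(X, Y, Z)$, the relation $R \subseteq X \times X$, and the new budget $k' \in \{0, 1, \ldots, k\}$: once these are chosen, the augmented graph $G'$ (including the added vertices $u_{(a,b)}$ and their edges), the sets $I'$ and $O'$, and the weight function $\wOp'$ are forced by items (1)--(4) of the definition. Hence it suffices to bound the number of such tuples and argue that each one can be realised in polynomial time.

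Next, I would bound each coordinate of the tuple independently. Distributing the $\abs{T}$ elements of $T$ among the three parts $X, Y, Z$ of the partition (some possibly empty) gives at most $3^{\abs{T}}$ ordered partitions. For each such partition, the number of binary relations on $X$ is $2^{\abs{X}^2} \le 2^{\abs{T}^2}$. Finally, the new budget $k'$ ranges over $\{0, 1, \ldots, k\}$, contributing a factor of $k+1$. Multiplying these bounds yields the claimed $3^{\abs{T}} \cdot 2^{\abs{T}^2} \cdot (k+1)$ upper bound on the number of boundary complementations.

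For the enumeration claim, I would iterate over all such tuples and, for each one, explicitly construct the corresponding instance: this requires adding at most $\abs{T}^2$ new vertices $u_{(a,b)}$ together with their at most $\abs{T}^2 \cdot (\abs{T} + 2)$ incident edges, and updating $I'$, $O'$, and $\wOp'$, all in time polynomial in $n$. The total running time is therefore $3^{\abs{T}} \cdot 2^{\abs{T}^2} \cdot (k+1) \cdot n^{\bigO{1}} = 2^{\bigO{\abs{T}^2}} \cdot n^{\bigO{1}}$, where $3^{\abs{T}} = 2^{\bigO{\abs{T}}} \le 2^{\bigO{\abs{T}^2}}$ and $k+1 \le n+1$ is absorbed into $n^{\bigO{1}}$.

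There is no real obstacle here, as the lemma is essentially a bookkeeping argument. The only point worth double-checking is that distinct tuples $(X, Y, Z, R, k')$ indeed produce distinct boundary complementations, so that we are not under- or over-counting; this is immediate because the partition $(X, Y, Z)$ can be recovered from $I'$ and $O'$ restricted to $T$, the relation $R$ can be read off from the adjacencies of the newly introduced vertices, and $k'$ appears explicitly in the instance. (Any redundancy would only strengthen the upper bound.)
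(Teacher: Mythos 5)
Your proposal is correct and follows essentially the same argument as the paper: count the $3^{\abs{T}}$ ordered partitions of $T$ into $(X,Y,Z)$, the at most $2^{\abs{X}^2} \le 2^{\abs{T}^2}$ relations $R$, and the $k+1$ choices of $k'$, then construct each instance in polynomial time. The extra remarks on injectivity and on absorbing $k+1$ into $n^{\bigO{1}}$ are fine but not needed for the upper bound.
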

\begin{proof}
Every element of $T$ has to be in either $X$, $Y$, or $Z$, which gives $3^{\abs{T}}$ possible arrangements. Every one of the $\abs{X}^2$ elements in $X\times X$ can either be in $R$ or not, giving $2^{\abs{X}^2} \le 2^{\abs{T}^2}$ possible arrangements. Furthermore, there are $k+1$ choices for $0 \le k' \le k$. Multiplying these gives the final number. By enumerating all the respective subsets and constructing $G'$ in time $n^{\bigO{1}}$, we can enumerate all boundary complementations in the claimed time.
\end{proof}

\subsection{Unbreakable Case}\label{sec:unbreak}

This subsection gives the algorithm for the base case of our final recursive algorithm, when no balanced separator exists. We start by giving the definitions of separations and unbreakability.

\begin{definition}[Separation]
  Given two sets $A, B \subseteq \ve{G}$ with $A \cup B = \ve{G}$, we say that $(A, B)$ is a \emph{separation of order $\abs{A \cap B}$} if there is no edge with one endpoint in $A \setminus B$ and the other endpoint in $B \setminus A$.
\end{definition}

\begin{definition}[Unbreakability]
  Let $q,k \in \N$. An undirected graph $G$ is \emph{$(q,k)$-unbreakable} if for every separation $(A,B)$ of $G$ of order at most $k$, we have $\abs{A \setminus B} \le q$ or $\abs{B \setminus A} \le q$.
\end{definition}

The next lemma formalizes that the neighborhood of any solution gives you a separator of order $k$. If the graph is unbreakable, either the solution or everything but the solution must be small. With this insight, the statement should be intuitive. We only need to fill in the details since the graph changes slightly when considering the boundary complementation. However, the proof is identical to~\cite[Lemma~12]{golovach2020finding}, so we omit it.

\begin{lemma}\label{lem:unbreak_small_or_large}
  Let $\mathcal{I}$ be a \scsborder{} instance on a $(q,k)$-unbreakable graph $G$. Then, for each set $S$ in a solution of $\mathcal{I}$, either $\abs{S \cap \ve{G}} \le q$ or $\abs{\ve{G} \setminus S} \le q + k$.
\end{lemma}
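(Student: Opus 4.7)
The plan is to reduce the statement to the definition of unbreakability, by extracting, from any solution $S$ of a boundary complementation $\mathcal{I}'$, a low-order separation of the original graph $G$ whose sides are exactly $S \cap \ve{G}$ and its complement (together with the separating neighborhood). The main point to verify is that the bound $\abs{\nei{S}} \le k$ in the boundary complementation survives restriction back to $G$, since $S$ lives in the larger graph $G'$ that contains the added gadget vertices $u_{(a,b)}$.

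First I would set $S_G \coloneqq S \cap \ve{G}$ and inspect the structure of $G'$ from \Cref{def:scc_border_complementation}. Each gadget vertex $u_{(a,b)}$ belongs to $I' \subseteq S$, and its only neighbors in $G'$ are the terminal $a \in X \subseteq I' \subseteq S$, the terminal $b \in X \subseteq S$, and possibly vertices in $Y$. In particular, every $u_{(a,b)}$ lies in $S$ (not in $\nei[G']{S}$), and every edge of $G$ incident to a vertex of $S_G$ is also an edge of $G'$ incident to that same vertex of $S$. Hence, every undirected neighbor of $S_G$ in $G$ that lies outside $S_G$ is also an undirected neighbor of $S$ in $G'$ outside $S$, giving the inclusion $\neiG{G}{S_G} \subseteq \nei[G']{S}$ and therefore $\abs{\neiG{G}{S_G}} \le k$.

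Next I would use this to build a separation of the underlying undirected graph of $G$. Take
\[
  A \coloneqq S_G \cup \neiG{G}{S_G}, \qquad B \coloneqq \ve{G} \setminus S_G.
\]
Clearly $A \cup B = \ve{G}$, and $A \cap B = \neiG{G}{S_G}$, so the order of $(A,B)$ is at most $k$. Moreover, $A \setminus B = S_G$ and $B \setminus A = \ve{G} \setminus (S_G \cup \neiG{G}{S_G})$, so by the very definition of neighborhood there is no edge of (the underlying undirected graph of) $G$ between $A \setminus B$ and $B \setminus A$. Thus $(A,B)$ is a genuine separation of $G$ of order at most $k$.

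Finally, applying $(q,k)$-unbreakability of $G$ to $(A,B)$ yields either $\abs{S_G} = \abs{A \setminus B} \le q$, which gives $\abs{S \cap \ve{G}} \le q$, or $\abs{\ve{G} \setminus (S_G \cup \neiG{G}{S_G})} = \abs{B \setminus A} \le q$, which combined with $\abs{\neiG{G}{S_G}} \le k$ gives $\abs{\ve{G} \setminus S} = \abs{\ve{G} \setminus S_G} \le q + k$. The only real obstacle is the bookkeeping around the added gadget vertices $u_{(a,b)}$: one must be careful that they are forced into $S$ by $I'$ so that they contribute nothing to $\nei[G']{S}$ and do not falsify the inclusion $\neiG{G}{S_G} \subseteq \nei[G']{S}$; once this is settled, the proof is a direct translation of the undirected argument of \cite[Lemma~12]{golovach2020finding}.
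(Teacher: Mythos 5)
Your proposal is correct and follows exactly the argument the paper has in mind: it omits the proof as identical to the cited Lemma~12 of Golovach et al., namely using the (undirected) neighborhood of the solution restricted to $G$ as a separator of order at most $k$ and then invoking $(q,k)$-unbreakability, which is precisely what you do. Your extra bookkeeping about the gadget vertices $u_{(a,b)}$ is sound (and in fact not even needed for the inclusion $\neiG{G}{S\cap \ve{G}} \subseteq \neiG{G'}{S}$, since those neighbors lie in $\ve{G}$ regardless), so the proof goes through.
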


\begin{lemma}[\protect{\cite[Lemma~1.1]{chitnis2016designing}}]\label{lem:find_sets}
  Given a set $U$ of size $n$ and integers $a,b \in \N$, we can construct in time $2^{\bigO{\min\set{a,b}\log (a+b)}}n\log n$ a family $\mathcal{F}$ of at most $2^{\bigO{\min\set{a,b}\log (a+b)}}\log n$ subsets of $U$ such that the following holds. For any sets $A, B \subseteq U, A \cap B = \emptyset, \abs{A} \le a, \abs{B} \le b$, there is a set $S \in \mathcal{F}$ with $A \subseteq S$ and $B \cap S = \emptyset$.
\end{lemma}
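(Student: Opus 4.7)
The statement is a classical covering result: every disjoint pair of small sets $(A, B)$ must be ``captured'' by some member of $\mathcal{F}$. The plan is to combine an asymmetric splitter with an enumeration step. Without loss of generality assume $a = \min\set{a,b}$; otherwise swap the two sides and complement the resulting sets inside $U$ at the end. The overall strategy is to reduce the universe $U$ of size $n$ down to an alphabet $[a+b]$ via a hash family, and then enumerate $a$-element subsets of $[a+b]$.

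\textbf{Stage 1: an asymmetric splitter.} I would invoke the Naor--Schulman--Srinivasan splitter construction to obtain a family $\mathcal{H}$ of functions $h \colon U \to [a+b]$ of size $2^{\bigO{a \log(a+b)}} \log n$, constructible in time $2^{\bigO{a \log(a+b)}}\, n \log n$, with the guarantee that for every disjoint pair $A, B \subseteq U$ with $\abs{A} \le a$ and $\abs{B} \le b$ there exists some $h \in \mathcal{H}$ with $h(A) \cap h(B) = \emptyset$. This separation requirement is strictly weaker than being injective on $A \cup B$, and it is precisely this weakening that lets the exponent depend on $a = \min\set{a,b}$ rather than on $a+b$.

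\textbf{Stage 2: enumeration.} For every $h \in \mathcal{H}$ and every $T \subseteq [a+b]$ with $\abs{T} = a$, add $h^{-1}(T)$ to the family $\mathcal{F}$. The total size is then bounded by
\[
    \abs{\mathcal{F}} \le \abs{\mathcal{H}} \cdot \binom{a+b}{a} \le 2^{\bigO{a \log(a+b)}} \log n,
\]
matching the claimed bound, and the construction time is dominated by Stage~1. Correctness is immediate: given disjoint $A, B$ of the allowed sizes, pick $h \in \mathcal{H}$ with $h(A) \cap h(B) = \emptyset$, then select $T$ so that $h(A) \subseteq T$ and $T \cap h(B) = \emptyset$. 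Such a $T$ exists because $\abs{[a+b] \setminus h(B)} \ge a$, so we can extend $h(A)$ arbitrarily to an $a$-set avoiding $h(B)$. The preimage $h^{-1}(T)$ then contains $A$ and is disjoint from $B$, as required.

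\textbf{Main obstacle.} The crux is Stage~1: a generic $(n, a+b)$-perfect hash family would inflate the exponent to $a+b$, which is too large when $b \gg a$. The NSS machinery sidesteps this by only controlling how the smaller side is scattered among the $a+b$ hash buckets, using $\varepsilon$-biased sample spaces; once that black box is accepted, everything else reduces to routine counting. Even without invoking NSS directly, a uniformly random $h \colon U \to [a+b]$ already satisfies $h(A) \cap h(B) = \emptyset$ with probability at least $(a/(a+b))^a = 2^{-\bigO{a \log(a+b)}}$, which matches the claimed bound up to a standard derandomization.
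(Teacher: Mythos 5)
The paper does not actually prove this lemma; it imports it verbatim as a black box (Lemma~1.1 of Chitnis et al.), so the relevant comparison is with the proof in that source, which builds $\mathcal{F}$ from an $(n,\,a+b,\,(a+b)^2)$-splitter of Naor--Schulman--Srinivasan and then takes preimages of all subsets of at most $\min\{a,b\}$ buckets. Your reduction to the case $a=\min\{a,b\}$ and your Stage~2 enumeration are fine, but Stage~1 hides a genuine gap. The object you invoke --- a family of $2^{\bigO{a\log(a+b)}}\log n$ functions $h\colon U\to[a+b]$, constructible in time $2^{\bigO{a\log(a+b)}}\,n\log n$, such that every disjoint pair $A,B$ with $\abs{A}\le a$, $\abs{B}\le b$ is separated by some $h$ --- is an asymmetric separating hash family, and it is not what Naor--Schulman--Srinivasan provide: their splitters and perfect hash families give the \emph{symmetric} guarantee (injectivity on every $(a+b)$-set), with size and time polynomial in $a+b$ precisely because of that. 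Moreover, the fallback ``uniformly random $h$ works, plus a standard derandomization'' is not standard at all: conditional expectations would have to process all $\binom{n}{a}\binom{n}{b}=n^{\Theta(a+b)}$ constraints, far exceeding the allowed $2^{\bigO{a\log(a+b)}}\,n\log n$, and brute-forcing the function space is hopeless. In other words, the step you black-box is exactly the nontrivial content of the lemma.

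The repair is to keep the hashing symmetric and push the asymmetry entirely into your enumeration step, which is what the cited proof does. Take an $(n,\,a+b,\,(a+b)^2)$-splitter $\mathcal{G}$ (size $\mathrm{poly}(a+b)\log n$, constructible in time $\mathrm{poly}(a+b)\,n\log n$); for every disjoint $A,B$ as above some $g\in\mathcal{G}$ is injective on $A\cup B$. For each $g\in\mathcal{G}$ and each subset $T$ of the $(a+b)^2$ buckets with $\abs{T}\le \min\{a,b\}$ --- only $2^{\bigO{\min\{a,b\}\log(a+b)}}$ choices --- add $g^{-1}(T)$ to $\mathcal{F}$. Choosing $T=g(A)$ for a $g$ injective on $A\cup B$ yields $A\subseteq g^{-1}(T)$ and $g^{-1}(T)\cap B=\emptyset$ (no bucket of $A$ contains an element of $B$, while elements outside $A\cup B$ may join $S$ harmlessly), and the claimed size and time bounds follow. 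Your probability estimate $(a/(a+b))^a$ is a valid lower bound, but existence was never the issue; constructivity within the stated time is, and that is where your argument currently leans on an unsupported citation.
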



\begin{theorem}
\label{thm:unbreakable_scc}
  \scsborder{} on $(q,k)$-unbreakable graphs can be solved in time $2^{\bigO{k^2 \log(q)}}n^{\bigO{1}}$.
\end{theorem}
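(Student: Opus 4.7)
The plan is to combine three ingredients: the enumeration of boundary complementations from \Cref{lem:number_border_complementations}, the structural dichotomy for solutions in unbreakable graphs from \Cref{lem:unbreak_small_or_large}, and the color-coding family from \Cref{lem:find_sets}. As an outer loop I would enumerate all $2^{\bigO{k^2}}$ boundary complementations of the input instance; each one adds at most $|T|^2 \le 4k^2$ auxiliary vertices, so the resulting graph remains $(q + \bigO{k^2}, k)$-unbreakable, and it therefore suffices to solve the underlying \scsrec{} instance of each complementation in time $2^{\bigO{k \log q}} n^{\bigO{1}}$.

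Within a fixed complementation, \Cref{lem:unbreak_small_or_large} implies that every feasible solution $S$ satisfies either (i) $\abs{S \cap \ve{G}} \le q$ or (ii) $\abs{\ve{G} \setminus \cnei{S}} \le q$. In both cases I invoke \Cref{lem:find_sets} with parameters $a = q$ and $b = k$, producing a family $\mathcal{F}$ of at most $2^{\bigO{k \log q}} \log n$ subsets of $\ve{G}$. For case~(i), taking the lemma's two disjoint target sets to be $S$ and $\nei{S}$ guarantees some $F \in \mathcal{F}$ with $S \subseteq F$ and $F \cap \nei{S} = \emptyset$. Since $F$ avoids the $(S, \ve{G} \setminus \cnei{S})$-separator $\nei{S}$, the set $S$ is exactly one weakly connected component of $\induced{G}{F}$. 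Thus iterating over all weakly connected components of $\induced{G}{F}$ across all $F \in \mathcal{F}$ and verifying each candidate $C$ in polynomial time against the \scsrec{} constraints (strong connectivity, $I \subseteq C$, $C \cap O = \emptyset$, $\nei{C}$ contained in the instance's boundary set with $\abs{\nei{C}} \le k$, and maximum weight) returns the best solution of type~(i).

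Case~(ii) is handled symmetrically by choosing the lemma's two disjoint target sets to be $\ve{G} \setminus \cnei{S}$ and $\nei{S}$: the guaranteed $F$ satisfies $\ve{G} \setminus \cnei{S} \subseteq F$ and $F \cap \nei{S} = \emptyset$, so $\induced{G}{F}$ again splits across the unknown separator $\nei{S}$ into components lying inside $\ve{G} \setminus \cnei{S}$ (of total size at most $q$, hence at most $q$ such components) and components lying inside $S$. I would reconstruct candidate solutions by enumerating unions $U'$ of the bounded-size components as candidates for $\ve{G} \setminus \cnei{S}$, setting $S' \coloneqq \ve{G} \setminus U' \setminus \nei{U'}$, and verifying $S'$ as in case~(i). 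The main obstacle I anticipate is keeping this reconstruction within the $2^{\bigO{k \log q}}$ budget, since the candidate $F$ need not equal $\ve{G} \setminus \cnei{S}$ exactly; I expect this to follow from a further application of \Cref{lem:find_sets} that separates the true outside components from the boundary-complementation anchor $X \subseteq S$, so that the correct union is encoded by one of only $2^{\bigO{k \log q}}$ candidate sets.

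Multiplying the three factors---$2^{\bigO{k^2}}$ boundary complementations, $2^{\bigO{k \log q}}$ candidate sets per case, and polynomial-time verification---yields the claimed $2^{\bigO{k^2 + k \log q}} n^{\bigO{1}} = 2^{\bigO{k^2 \log q}} n^{\bigO{1}}$ running time.
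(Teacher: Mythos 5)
Your outer loop over boundary complementations and your small-solution case match the paper's proof almost verbatim (the paper uses $a = s \coloneqq q+k+4k^2$ rather than $a=q$ to absorb the auxiliary vertices $u_{(a,b)}$, and it checks strong components of $\induced{G}{F}$ directly, but these are cosmetic differences). The genuine gap is exactly where you flag it: the large-solution case has no working reconstruction. Enumerating unions $U'$ of the small components of $\induced{G}{F}$ is exponential in their number, which can be as large as $q$ --- and in the intended application $q$ is triply exponential in $k$, so this destroys the claimed bound. Your hoped-for fix via the anchor $X \subseteq S$ cannot work as stated: $X$ lies \emph{inside} the solution (and may be empty), so it gives no handle on which components of $\induced{G}{F}$ belong to $\ve{G}\setminus\cnei{S}$ versus to $S$; components on the solution side need not touch $X$ or $I$, so there is no way to classify them, and no further application of \Cref{lem:find_sets} by itself resolves this ambiguity.

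The paper closes this gap with two additional ideas. First, since any strongly connected solution lies inside a strong component $C$ of $G$, it restricts to the graph $G_C$ obtained from $\induced{G}{\cnei{C}}$ by adding one artificial apex vertex $c$ with edges $(c,v)$ for all $v \in \nei{C}$; the apex is the anchor for the \emph{outside} set $\ve{G_C}\setminus\cnei{S'}$, and \Cref{lem:find_sets} is applied in $G_C$ with $a = s+1$, $b = k$, taking as candidate outside the weakly connected component of $\induced{G_C}{F}$ containing $c$. Second, correctness needs an exchange argument: if the outside has a weakly connected piece not attached to $c$, that piece can be absorbed into the solution, which keeps strong connectivity, does not decrease the weight (weights are non-negative), and strictly shrinks the neighborhood; iterating this shows some optimal large solution has a single weakly connected outside containing $c$, so it is indeed enumerated. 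Without the apex construction and this exchange step, your case~(ii) does not go through, so the proposal as written is incomplete.
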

\begin{proof}
  Our algorithms starts by enumerating all boundary complementations of an instance $\mathcal{I}$. For a \scsrec{} instance $\mathcal{I'} = (G,I,O,B,\wOp,k)$, we know by \Cref{lem:unbreak_small_or_large} that the solution must be either of size at most $q+4k^2$ including the newly added vertices $u_{(a,b)}$, or at least $\abs{\ve{G}} - (q+k+4k^2)$. Let $s = q+k+4k^2 \ge q + 4k^2$.

  This allows us to address the two possible cases for the solution size separately. We give one algorithm to find the maximum weight solution of size at most $s$ and one algorithm to find the maximum weight solution of size at least $\abs{\ve{G}} - s$. In the end, we return the maximum weight of the two, or none if both do not exist. 

  \subparagraph*{Finding a small solution} Our algorithm works as follows. 
  \begin{enumerate}
    \item Apply the algorithm from \Cref{lem:find_sets} with $U = \ve{G}, a = s, b = k$ to compute a family $\mathcal{F}$ of subsets of $\ve{G}$.

    \item For every $F \in \mathcal{F}$, consider the strong components of $\induced{G}{F}$ separately.

    \item For a strong component $Q$, we check if $Q$ is a feasible solution and return the maximum weight one.
  \end{enumerate}

  \subparagraph*{Finding a large solution} For this case, our algorithm looks as follows. 
  \begin{enumerate}
    \item Compute the strongly connected components of $G$.

    \item For every strongly connected component $C$, we construct the graph $G_C$ by taking $\induced{G}{\cnei{C}}$ and adding a single vertex $c$ with edges $(c,v)$, for every $v \in \nei{C}$. 

    \item Then, run the algorithm from \Cref{lem:find_sets} with $U = \ve{G_C}, a = s+1, b = k$ to receive a family $\mathcal{F}$ of subsets of $\ve{G_C}$.

    \item For every $F \in \mathcal{F}$, find the component including $c$.

    \item \label[step]{it:5}For each such component $Q$, check if $\ve{G_c} \setminus \cnei{Q}$ is a feasible solution and return the maximum weight one.
  \end{enumerate}

  \subparagraph*{Correctness}
  By \Cref{lem:find_sets}, for any small solution $S$ there is $F \in \mathcal{F}$ with $S \subseteq F$ and $\nei{S} \cap F = \emptyset$, so $S$ must be both a component that is also strongly connected of $\induced{G}{F}$. Therefore, we enumerate a superset of all solutions of size at most $s$ and we find the maximum weight small solution.

  For the large solution, we claim that we find a maximum weight solution for this case. Clearly, every strongly connected subgraph of $G$ must be a subgraph of a strong component of $G$. Let $S'$ be a large solution that is a subset of a strong component $C$. Consider the set $S = \ve{G_C} \setminus \cnei{S'}$ in $G_C$. Then, we must have $\abs{S} \le a$ since $S'$ is large. Also, $\nei{S} \subseteq \nei{S'}$ by definition and therefore $\abs{\nei{S}} \le b$. Thus, $S$ is considered in \Cref{it:5} if it is weakly connected.

  If there is a $v \in S$ that is not in the same component as $c$, we take the component of $v$ in $S$ and include it in $S'$. Then, $S'$ is still strongly connected but has a strictly smaller neighborhood size and equal or greater weight since weights are non-negative. We can repeat this procedure until $S$ is a single component and will be enumerated by the algorithm. Therefore, our algorithm finds a solution of weight at least $\w{S'}$.

  \subparagraph*{Total Runtime} Both cases make use of at most $n$ calls to the algorithm from \Cref{lem:find_sets} with some small modifications. For every returned sets, both algorithms compute the components and verify strong connectivity. Therefore, we can bound the runtime per boundary complementation by $2^{\bigO{\min\set{s,k}\log(s+k)}}n^{\bigO{1}} = 2^{\bigO{k\log(q+k)}}n^{\bigO{1}}$. By \Cref{lem:number_border_complementations}, enumerating all boundary complementations adds a factor of $2^{\bigO{k^2}}n^{\bigO{1}}$, which gives the desired runtime.
\end{proof}

\subsection{Compressing Graph Extensions}\label{sec:scc_extensions}

Before we give the complete algorithm, we define a routine that compresses a part of the graph. We aim for two crucial properties in the compressed part. First, the part after compression should be equivalent to the part before compression in terms of which strongly connected components can be formed. Second, we want the size of the compressed part to be functionally bounded by the size of remaining graph.

To achieve this goal, we first formally define sufficient properties to reason about this equivalence and bound the number of equivalence classes.
First, we define the notion of a graph \emph{extension}, a way to extend one graph with another. The extension will play the role of the compressed part of the graph. This concept allows us to speak more directly about graph properties before and after exchanging a part of the graph with a different one.

\begin{definition}[Extension]
Given a directed graph $G$, we call a pair $\dexpair$ an \emph{extension of} $G$ if $D$ is a directed graph and $\dexset \subseteq (\ve{G} \times \ve{D}) \cup (\ve{D} \times \ve{G})$ is a set of pairs between $G$ and $D$. 
We name the graph $\ex{G}{D}{\dexset} \coloneqq (\ve{G} \cup \ve{D}, \e{G} \cup \e{D} \cup \dexset)$, that can be created from the extension, $G$ \emph{extended by} $\dexpair$.
\end{definition}

We use extensions to construct extended graphs.
Intuitively, an extension of $G$ is a second graph $D$ together with an instruction $\dexset$ on how to connect $D$ to $G$.

Next, we identify three important attributes of extensions in our context. Later, we show that these give a sufficient condition on when two extensions form the same strongly connected subgraphs.
  For this, consider a directed graph $G$ with an extension $\dexpair$. For $U \subseteq \ve{D}$, we write $\inNeiG{\dexset}{U}$ as a shorthand for $\inNeiG{\ex{G}{D}{\dexset}}{U}$, that is, all $v \in \ve{G}$ with $(v,u) \in \dexset$ for some $u \in U$. Define $\outNeiG{\dexset}{v}$ analogously.
  Write $\scc{D}$ for the \emph{condensation} of $D$, where every strongly connected component $C$ of $D$ is contracted into a single vertex.
  Define $\sourcee{D}{\dexset}, \sinke{D}{\dexset} \subseteq 2^{\ve{G}}$ such that 
  \begin{align*}
    \sourcee{D}{\dexset} &\coloneqq \set{\inNeiG{\dexset}{U}}{U \subseteq \ve{D} \text{ is a source component in } \scc{D}} \text{ and}\\
    \sinke{D}{\dexset} &\coloneqq \set{\outNeiG{\dexset}{U}}{U \subseteq \ve{D} \text{ is a sink component in } \scc{D}},
  \end{align*}
  that is, for every strongly connected source component $C$ in $\scc{D}$, $\sourcee{D}{\dexset}$ contains the set of all $v \in \ve{G}$ such that $\died{v}{u} \in \dexset$ for some $u \in C$ and analogously for $\sinke{D}{\dexset}$.
  Furthermore, define \[\conne{D}{\dexset} \coloneqq \set{(a,b) \in \ve{G}^2}{\text{there is a $d_1$-$d_2$-path in $D$ with } (a,d_1), (d_2,b) \in \dexset}, \] that is, all $(a,b)$ such that there is an $a$-$b$-path in $\ex{G}{D}{\dexset}$, whose intermediate vertices and edges belongs to D. Refer to \Cref{fig:extension_compression} for examples of extensions and the three sets.

\begin{definition}[Equivalent Extensions]
    Let $G$ be a directed graph. We say that two extensions $\expairi{1}$ and $\expairi{2}$ of $G$ are \emph{equivalent} if \[(\source{D_1,E_{GD_1}}, \sink{D_1,E_{GD_1}}, \conn{D_1,E_{GD_1}}) = (\source{D_2,E_{GD_2}}, \sink{D_2,E_{GD_2}}, \conn{D_2,E_{GD_2}}).\]
\end{definition}

Clearly, extension equivalence defines an equivalence relation.
The next statement reveals the motivation behind the definition of extension equivalence. It gives us a sufficient condition for two extensions being exchangeable in a strongly connected subgraph.

\begin{lemma}
\label{lem:source_sink_conn_equiv}
  Let $G$ be a directed graph with two equivalent extensions $\expairi{1}$ and $\expairi{2}$. Let $U \subseteq \ve{G}$ be nonempty such that the extended graph $\ex{\induced{G}{U}}{D_1}{\exset{\induced{G}{U}}{D_1}}$ is strongly connected. Then $\ex{\induced{G}{U}}{D_2}{\exset{\induced{G}{U}}{D_2}}$ is also strongly connected.
\end{lemma}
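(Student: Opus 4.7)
The plan is to show strong connectivity of $\ex{\induced{G}{U}}{D_2}{\exset{\induced{G}{U}}{D_2}}$ by transferring every required walk from the $D_1$-extension, exploiting the three equalities granted by extension equivalence. A useful preliminary observation I would record is that restricting to $U$ preserves the invariants elementwise: for any strongly connected component $C$ of $D_i$ one has $\inNeiG{\exset{\induced{G}{U}}{D_i}}{C} = \inNeiG{\exset{G}{D_i}}{C} \cap U$ (and analogously for out-neighborhoods), while $\conne{D_i}{\exset{\induced{G}{U}}{D_i}} = \conne{D_i}{\exset{G}{D_i}} \cap (U \times U)$.

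The proof then splits into two parts. First, for walks with both endpoints in $U$: given $u, v \in U$ I would take any $u$-$v$-walk in the $D_1$-extension (which exists by hypothesis) and decompose it into maximal subwalks inside $\induced{G}{U}$ and maximal subwalks inside $D_1$. Each $D_1$-subwalk enters from some $a \in U$ via an edge $(a, d_1) \in \exset{\induced{G}{U}}{D_1}$, follows a $d_1$-$d_2$-walk inside $D_1$, and exits to some $b \in U$ via $(d_2, b) \in \exset{\induced{G}{U}}{D_1}$, which is exactly the witness that $(a, b) \in \conne{D_1}{\exset{\induced{G}{U}}{D_1}}$. By the connectivity equality $(a, b)$ also lies in $\conne{D_2}{\exset{\induced{G}{U}}{D_2}}$, so a $D_2$-substitute hop exists; concatenating these substitutes with the original $\induced{G}{U}$-subwalks yields a $u$-$v$-walk in the $D_2$-extension. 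Second, to handle vertices of $\ve{D_2}$: strong connectivity of the $D_1$-extension forces every element of $\sourcee{D_1}{\exset{G}{D_1}}$ to intersect $U$, because otherwise the corresponding source component of $D_1$ would be unreachable from $U$. By the set equality $\sourcee{D_1}{\exset{G}{D_1}} = \sourcee{D_2}{\exset{G}{D_2}}$, the same holds on the $D_2$ side, so every source component of $D_2$ has an in-edge from $U$, and symmetrically every sink component of $D_2$ has an out-edge to $U$. Since every strongly connected component of $D_2$ descends from a source component and ascends to a sink component in $\scc{D_2}$, every vertex of $\ve{D_2}$ is reachable from some vertex of $U$ and can reach some vertex of $U$ in the $D_2$-extension; chaining these walks with the previous step covers pairs of type $(u,d)$, $(d,u)$, and $(d,d')$ and establishes strong connectivity.

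The main obstacle I anticipate is the bookkeeping around the set-valued invariants: because $\sourcee{D_i}{\exset{G}{D_i}}$ and $\sinke{D_i}{\exset{G}{D_i}}$ are sets of subsets of $\ve{G}$ (with duplicates collapsed), I need to verify that the property \emph{every element intersects $U$} really is a property of the set-theoretic value and therefore transfers intact through the equality — this is immediate but should be stated to keep the argument rigorous. Trivial single-vertex $D_1$-subwalks (with $d_1 = d_2$) are admitted directly by the definition of the connectivity set, so no special case is needed. Finally, the hypothesis $U \neq \emptyset$ is used implicitly in the second step, since the \emph{intersects $U$} transfer would otherwise be vacuous.
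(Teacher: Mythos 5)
Your proof is correct and follows essentially the same route as the paper: handle $U$-to-$U$ pairs by substituting $D_1$-subwalks with $D_2$-connections using $\connOp$-equality, and handle $\ve{D_2}$-vertices by showing (via $\sourceOp$/$\sinkOp$-equality and strong connectivity of the $D_1$-extension with $U\ne\emptyset$) that every source component of $D_2$ has an in-edge from $U$ and every sink component an out-edge to $U$. Your explicit restriction-to-$U$ bookkeeping and the remark on trivial one-vertex $D_1$-subwalks make the argument, if anything, slightly more careful than the paper's version, but the underlying decomposition is the same.
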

\begin{proof}
  We construct a $v_1$-$v_2$-path for all $v_1, v_2 \in U \cup \ve{D_2}$ that only uses edges in $\e{D_2}$, $\exset{G}{D_2}$, and $\induced{G}{U}$ by case distinction.

\begin{description}
    \item[Paths $U \to U$.]  Let $u_1, u_2 \in U$. If there is a path from $u_1$ to $u_2$ in $\induced{G}{U}$, this path also exists after exchanging $\expair{G}{D_1}$ to $\expair{G}{D_2}$. If the path passes through $D_1$, since $\conn{D_1,E_{GD_1}} = \conn{D_2,E_{GD_2}}$, we can exchange all subpaths through $D_1$ by subpaths through $D_2$.

    \item[Paths $\ve{D_2} \to U$.] Let $v \in \ve{D_2}, u \in U$. We construct a $v$-$u$-path by first walking from $v$ to any sink component $T$ in $\scc{D_2}$. If there is no edge $(t,u') \in \exset{G}{D_2}$ with $t \in T, u' \in U$ that we can append, since $\sink{D_1,\expair{G}{D_1}} = \sink{D_2,\expair{G}{D_2}}$, there must also be a sink component in $\scc{D_1}$ with no outgoing edge to $U$. However, this is a contradiction to the fact that $\ex{\induced{G}{U}}{D_1}{\exset{\induced{G}{U}}{D_1}}$ is strongly connected with nonempty $U$. Therefore, we can find a $(t,u')$ to append for some $t \in T, u' \in U$. From $u'$, there is already a path to $u$, as proven in the first case.

    \item[Paths $U \to \ve{D_2}$.] Next, we construct a $u$-$v$-path backwards by walking from $v$ backwards to a source $s$ in $D_2$. Analogously, there is an edge $(u',s) \in \exset{G}{D_2}$ for some $u' \in U$ since $\source{D_1,E_{GD_1}} = \source{D_2,E_{GD_2}}$, which we append. From $u$, there is a path to $u'$, as proven in the first case, which we prepend to the rest of the path.

    \item [Paths $\ve{D_2} \to \ve{D_2}$.] Let $v_1, v_2 \in \ve{D_2}$. To construct a $v_1$-$v_2$-path, we can just walk from $v_1$ to any $u \in U$ and from there to $v_2$ as shown before.\qedhere
\end{description}
\end{proof}

Furthermore, observe that the union of two extensions creates another extension where source, sink and connection sets correspond exactly to the union of the previous sets. Hence, the union of two equivalent extensions will again be equivalent. This fact is formalized in the next observation and  will turn out useful in later reduction rules.

\begin{figure}
    \centering
    \includegraphics[width=0.65\linewidth]{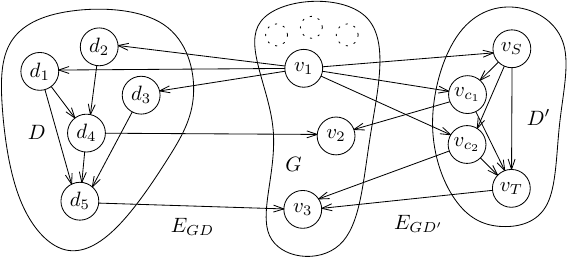}
    \caption{Two example extensions of a graph $G$. Observe that $\source{D,\dexset} = \set{\set{v_1}}$, $\sink{D,\dexset} = \set{\set{v_3}}$, and $\conn{D,\dexset} = \set{(v_1,v_2), (v_1, v_3)}$. The extension $(D', E_{GD'})$ not only has the same sets $\sourceOp, \sinkOp, \connOp$ and is thereby equivalent; it is also the compressed extension of $\dexpair$. Since all sources $d_1,d_2,d_3$ have the same in-neighborhood, they are represented by the single vertex $v_S$.}
    \label{fig:extension_compression}
\end{figure}

\begin{observation} \label{lem:union_equiv_if_equiv}
  Let $G$ be a directed graph with two equivalent extensions $\expairi{1}$ and $\expairi{2}$. Consider the extension defined by $D \coloneqq (\ve{D_1} \cup \ve{D_2}, \e{D_1} \cup \e{D_2})$ and $\exset{G}{D} \coloneqq \exset{G}{D_1} \cup \exset{G}{D_2}$. Then $\dexpair$ is equivalent to $\expairi{1}$ and $\expairi{2}$.
\end{observation}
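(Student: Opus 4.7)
The plan is to establish that the three sets $\sourceOp$, $\sinkOp$, and $\connOp$ behave well under the union construction, i.e., that each of them for $\dexpair$ decomposes as the union of the corresponding sets for $\expairi{1}$ and $\expairi{2}$. Once that is in place, equality with each summand is immediate from the assumed equivalence $(\sourceOp, \sinkOp, \connOp)(D_1, E_{GD_1}) = (\sourceOp, \sinkOp, \connOp)(D_2, E_{GD_2})$.

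The structural observation driving everything is that, since $D$ is built by taking the disjoint union of $\ve{D_1}, \ve{D_2}$ and of $\e{D_1}, \e{D_2}$ (I would assume disjointness of $\ve{D_1}$ and $\ve{D_2}$ WLOG, relabelling if needed, since $D_1$ and $D_2$ are two separately constructed extensions of the same host graph $G$), there are no edges in $\e{D}$ between $\ve{D_1}$ and $\ve{D_2}$, and no pair in $\dexset$ touches $\ve{D_{3-i}}$ when restricted to $\ve{D_i}$. Therefore every strongly connected component of $D$ is entirely contained in either $\ve{D_1}$ or $\ve{D_2}$, and $\scc{D}$ is the disjoint union of $\scc{D_1}$ and $\scc{D_2}$ as DAGs. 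In particular, the source (resp. sink) components of $\scc{D}$ are exactly the source (resp. sink) components of $\scc{D_1}$ together with those of $\scc{D_2}$.

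With this, for any source component $U$ of $\scc{D}$ contained in $\ve{D_i}$, the in-neighborhood $\inNeiG{\dexset}{U}$ in $G$ coincides with $\inNeiG{E_{GD_i}}{U}$, because only pairs in $\exset{G}{D_i}$ are incident to vertices of $\ve{D_i}$. Consequently $\source{D, \dexset} = \source{D_1, E_{GD_1}} \cup \source{D_2, E_{GD_2}}$, and symmetrically $\sink{D, \dexset} = \sink{D_1, E_{GD_1}} \cup \sink{D_2, E_{GD_2}}$. For connectivity, any $a$-$b$-path witnessing a pair in $\conn{D, \dexset}$ uses intermediate vertices entirely in $\ve{D}$, and the absence of edges between $\ve{D_1}$ and $\ve{D_2}$ (in both $\e{D}$ and $\dexset$) forces it to stay inside a single $\ve{D_i}$, yielding $\conn{D, \dexset} = \conn{D_1, E_{GD_1}} \cup \conn{D_2, E_{GD_2}}$.

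Since by hypothesis the two triples $(\sourceOp, \sinkOp, \connOp)(D_i, E_{GD_i})$ for $i \in \{1,2\}$ coincide as sets of subsets of $\ve{G}$ (and as a relation on $\ve{G}$ for $\connOp$), the union of each pair equals either summand, so $(\sourceOp, \sinkOp, \connOp)(D, \dexset)$ equals both $(\sourceOp, \sinkOp, \connOp)(D_i, E_{GD_i})$, giving equivalence. I expect no real obstacle beyond the disjointness assumption; the argument is essentially bookkeeping, and the only care needed is to verify that disjoint-union structure rules out paths and components spanning both $\ve{D_1}$ and $\ve{D_2}$.
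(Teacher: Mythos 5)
Your proposal is correct and takes essentially the same route as the paper, which states this as an observation justified only by the remark that the source, sink, and connection sets of the union extension are exactly the unions of the respective sets of $\expairi{1}$ and $\expairi{2}$ --- precisely the decomposition you verify via the disjoint-union structure of the strongly connected components and of the extension edges. The one point to be explicit about is that vertex-disjointness of $D_1$ and $D_2$ is implicitly assumed (your ``relabelling'' remark is not literally without loss of generality for the statement as written, since overlapping vertex sets could create new connections), but this assumption is satisfied wherever the observation is used in the paper, namely for distinct components of $G-B$.
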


Now, we finally define our compression routine, which compresses an extension to a bounded size equivalent extension.
If an extension is strongly connected, it is easy to convince yourself that it is always possible to compress the extension to a single vertex. Otherwise, we add one source vertex per neighborhood set in $\source{D,\dexset}$ as well as one sink vertex per neighborhood set in $\sink{D,\dexset}$, realizing the same $\sourceOp$ and $\sinkOp$. Then, we add vertices in between suitable source and sink vertices to realize exactly the same connections in $\connOp$ without creating additional ones. 
The result of a compression is visualized in \Cref{fig:extension_compression}. Now, we describe the procedure formally. 

\begin{figure}[t]
  \begin{minipage}[c]{0.64\linewidth}
      \centering
      \begin{subfigure}{0.49\textwidth}
        \includegraphics[width=\textwidth,page=1]{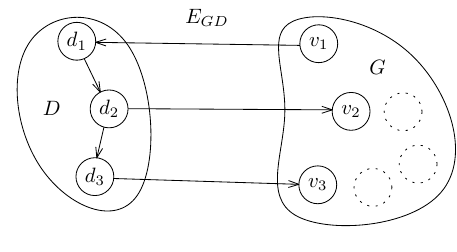}
        \caption{An extension $\dexpair$}\label{fig:scc_comp1}
      \end{subfigure}
      \hfill
      \begin{subfigure}{0.49\textwidth}
        \includegraphics[width=\textwidth,page=2]{figures/scc_comp}
        \caption{$\dcompG$}\label{fig:scc_comp2}
      \end{subfigure}
      \caption{An example for compressing an extension.
      While the size of the extension increases in this example, in general, the size of a compressed extension can be bounded by $\abs{\ve{G}}$.}\label{fig:scc_comp}
  \end{minipage}
  \hfill
  \begin{minipage}[c]{0.34\linewidth}
      \centering
      \includegraphics[width=0.8\textwidth]{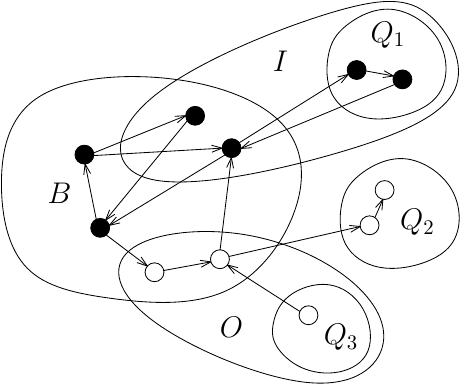}
      \caption{A visualization of how the sets $I$, $O$, and $B$ can overlap after the application of \Cref{red:in_out}. Any component in $G-B$ can be in $I$, $O$, or none of them, but not both. The black vertices form a feasible solution.}\label{fig:iob}
  \end{minipage}
\end{figure}

Let $G$ be a directed graph with an extension $\dexpair$. Our \emph{compression routine} returns an extension that we call \emph{compressed extension}, denoted as $\dcompG$. 
\subparagraph*{Compression Routine}
\begin{itemize}
    \item  If $D$ is strongly connected, we contract $D$ to one vertex $v$ and remove self-loops and multiple edges. We adjust $E_{GD}$ by using $v$ instead of $\ve{D}$ and removing multiple edges.
      
    \item Otherwise,  $\dcompG \coloneqq (D',\exset{G}{D'})$, where $(D',\exset{G}{D'})$ is an extension such that
    \begin{align*}
  \ve{D'} \coloneqq &\set{v_S}{S \in \source{D,\dexset}} \cup \set{v_T}{T \in \sink{D,\dexset}} \cup \set{v_c}{c \in \conn{D,\dexset}},\\
  \exset{G}{D'} \coloneqq &\set{(s,v_S)}{S\in \source{D,\dexset}, s\in S} \cup \set{(v_T,t)}{T\in \sink{D,\dexset}, t\in T }\\
   \cup &\set{(a,v_c),(v_c,b)}{c=(a,b) \in \conn{D,\dexset}}.
    \end{align*}

    To define $\e{D}$, consider every source component $C_s$ and sink component $C_t$ in $\scc{D}$ such that $C_t$ is reachable from $C_s$ in $D$. Let $S \coloneqq \inNeiG{\dexset}{C_s}$ and $T \coloneqq \inNeiG{\dexset}{C_t}$ be the corresponding sets in $\source{D,\dexset}$ and $\sink{D,\dexset}$.
    \begin{itemize}
        \item Add the edge $(v_S, v_T)$ to $\e{D}$.
        \item For every $c = (a,b) \in \conn{D,\dexset}$ that satisfies $(s,b) \in \conn{D,\dexset}$ and $(a,t) \in \conn{D,\dexset}$ for every $s \in S, t \in T$, add the edges $(v_S, v_c)$ and $(v_c, v_T)$ to $\e{D}$.
    \end{itemize}
 \end{itemize}

Now we go on to prove the properties that are maintained while compressing. Then, we bound the size of a compressed extension and thus also the number of equivalence classes.

\begin{lemma}
\label{lem:comp_weakly_conn_equiv}
  Let $G$ be a directed graph with an extension $\dexpair$ and let $(D', E_{GD'})$ be the compressed extension of $\dexpair$. 
  Then, the following are true.
  \begin{enumerate}
      \item If $D$ is weakly connected, then $D'$ is also weakly connected. 
      \item $D$ is strongly connected if and only if $D'$ is strongly connected.
      \item $(D', E_{GD'})$ is equivalent to $\dexpair$. 
  \end{enumerate}
\end{lemma}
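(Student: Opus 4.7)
The statement naturally splits along the two branches of the compression routine, and I would prove the three properties in the order (2), (1), (3), handling Case~1 (where $D$ is strongly connected) separately and briefly, and then focusing the bulk of the work on Case~2. In Case~1, $D'$ is a single vertex $v$ with $E_{GD'}$ obtained by redirecting every endpoint of a $\dexset$-pair in $\ve{D}$ to $v$; this makes (1) and (2) immediate and (3) a direct check, since $\source{D',E_{GD'}}=\{\inNeiG{\dexset}{\ve{D}}\}$, $\sink{D',E_{GD'}}=\{\outNeiG{\dexset}{\ve{D}}\}$, and every $(a,b)\in\conn{D',E_{GD'}}$ corresponds to a $d_1$-$d_2$-path inside the single SCC $\ve{D}$, which exists in $D$ precisely because $D$ is strongly connected.

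For Case~2, the first observation is structural: $D'$ is a three-layered DAG in which every arc goes from a source vertex $v_S$ to either a sink vertex $v_T$ or a connection vertex $v_c$, and from every $v_c$ only to some $v_T$. This immediately proves (2) in the non-trivial direction, since in Case~2 there is at least one source component and at least one sink component in $\scc{D}$, so $D'$ has at least two vertices and is acyclic, hence not strongly connected. For (1), I would combine weak connectivity of $\scc{D}$ with a short combinatorial argument: the bipartite graph between $\source{D,\dexset}$ and $\sink{D,\dexset}$, with $v_Sv_T$ an edge whenever $C_t$ is reachable from $C_s$, is connected (otherwise the partition would induce a disconnection of $\scc{D}$, contradicting weak connectivity of $D$). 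Each connection vertex $v_c$ is attached to at least one $v_S$ and at least one $v_T$ (any witnessing $d_1$-$d_2$-path lies in some source-component-to-sink-component chain, which supplies eligible $S$ and $T$), so the whole $D'$ is weakly connected in its underlying undirected graph.

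For (3), I would treat the three sets in turn. Since in Case~2 every $v_S$ is its own SCC and has no in-arcs in $D'$, while every $v_c$ has an incoming arc and every $v_T$ is reached from some $v_S$, the source components of $\scc{D'}$ are exactly $\{v_S\}$, and by construction $\inNeiG{E_{GD'}}{v_S}=S$; hence $\source{D',E_{GD'}}=\source{D,\dexset}$. The equality $\sink{D',E_{GD'}}=\sink{D,\dexset}$ follows symmetrically. The forward inclusion for $\connOp$ is immediate: if $(a,b)\in\conn{D,\dexset}$ then $v_{(a,b)}\in\ve{D'}$ and $(a,v_{(a,b)}),(v_{(a,b)},b)\in E_{GD'}$. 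The backward inclusion is the main obstacle, and I plan to handle it by enumerating the possible shapes of an $a$-to-$b$ path in $\ex{G}{D'}{E_{GD'}}$ that enters and leaves through $D'$. Given the layering, such a path uses exactly one of the following interiors: a single $v_{(a,b)}$ (trivial), a single $v_S\!\to\!v_T$ arc (where $a\in S$, $b\in T$ and by the outer-loop condition $C_t$ is reachable from $C_s$, which I combine with the definitions of $S=\inNeiG{\dexset}{C_s}$ and $T=\outNeiG{\dexset}{C_t}$ and with internal strong-connectivity of $C_s,C_t$ to build the required $D$-path from $a$ to $b$), a single $v_c\!\to\!v_T$ or $v_S\!\to\!v_c$ (where the edge-addition condition $(a,t)\in\conn{D,\dexset}$ for all $t\in T$ or $(s,b)\in\conn{D,\dexset}$ for all $s\in S$ directly yields the desired connection), or the composite $v_S\!\to\!v_c\!\to\!v_T$, which I plan to reduce to the single-arc case by noting that in Case~2 both arcs are only added inside outer-loop iterations for which $C_t$ is reachable from $C_s$, so the $v_S\!\to\!v_T$ arc is also present and we fall back to the previous analysis.

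The delicate point that I expect to require the most care is precisely this last sub-case: I want to make sure that an arc $(v_S,v_c)$ and an arc $(v_c,v_T)$ added in two different iterations of the outer loop can still be composed into an $(a,b)$-connection in $D$. The cleanest way I see is to argue that whenever both arcs are present, there are source and sink components $C_s,C_t$ witnessing each arc and a common $d_1$-$d_2$-path in $D$ justifying $c$, which can be concatenated with intra-SCC paths in $C_s$ and $C_t$ to produce an $a$-$b$-walk through $D$. If this cannot be done in the literal edge-adding semantics, the fix is to interpret the routine as only adding $(v_S,v_c),(v_c,v_T)$ in tandem within a single outer-loop iteration (so that $v_c$ is never used as a bridge between incompatible $v_S$ and $v_T$); under either reading the backward inclusion for $\connOp$ follows, after which (2) and (1) can also be reobtained as corollaries of~\Cref{lem:source_sink_conn_equiv}.
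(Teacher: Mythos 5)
Your decomposition into the two branches of the routine, your treatment of properties (1) and (2), the equalities for $\sourceOp$ and $\sinkOp$, and the forward inclusion for $\connOp$ are all fine (your bipartite source--sink argument for weak connectivity is a valid alternative to the paper's path-rerouting argument). The genuine gap is exactly the sub-case you flagged, the composite interior $v_S \to v_c \to v_T$, and none of your three proposed resolutions closes it. The two arcs incident to a \emph{shared} connection vertex $v_c$ may be added in different iterations of the outer loop, for incomparable pairs $(C_s^1,C_t^1)$ and $(C_s^2,C_t^2)$; in that situation the arc $(v_S,v_T)$ need not exist, so you cannot ``fall back to the single-arc case'', and the witnessing paths for the two conditions need not be concatenable inside $D$. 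Your fallback reading (``add the two arcs in tandem within one iteration'') is already the literal semantics of the routine and changes nothing, because the bridge is created by the vertex $v_c$ being shared across iterations, not by arcs added separately. In fact the inclusion $\conn{D',E_{GD'}} \subseteq \conn{D,\dexset}$ is simply false for the routine as written: take $a,a_c,b,b_c \in \ve{G}$, let $D$ have vertices $p_1,p_2,q_1,q_2,r$ with edges $\died{p_1}{p_2},\died{p_1}{r},\died{q_1}{q_2},\died{q_1}{r}$, and let $\dexset = \{\died{a}{p_1},\died{a_c}{q_1},\died{p_2}{b_c},\died{r}{b_c},\died{q_2}{b}\}$. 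Then $D$ is weakly but not strongly connected, $\conn{D,\dexset}=\{(a,b_c),(a_c,b),(a_c,b_c)\}$, $\source{D,\dexset}=\{\{a\},\{a_c\}\}$, $\sink{D,\dexset}=\{\{b_c\},\{b\}\}$. The iteration for $(\{p_1\},\{p_2\})$ (so $S=\{a\}$, $T=\{b_c\}$) adds $(v_{\{a\}},v_{(a_c,b_c)})$ since $(a,b_c),(a_c,b_c)\in\conn{D,\dexset}$, and the iteration for $(\{q_1\},\{q_2\})$ (so $S=\{a_c\}$, $T=\{b\}$) adds $(v_{(a_c,b_c)},v_{\{b\}})$ since $(a_c,b_c),(a_c,b)\in\conn{D,\dexset}$; the path $a \to v_{\{a\}} \to v_{(a_c,b_c)} \to v_{\{b\}} \to b$ then witnesses $(a,b)\in\conn{D',E_{GD'}}$, although $(a,b)\notin\conn{D,\dexset}$.

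So the step cannot be proved without modifying the construction itself, for instance by introducing one connection vertex $v_c^{S,T}$ per qualifying pair $(S,T)$, or by attaching $v_c$ only to the source and sink components of one fixed witnessing $d_1$-$d_2$-path of $c$ (then any cross path $s \to v_S \to v_c \to v_T \to t$ can be realized in $D$ by concatenating intra-component paths with that witness, and your enumeration of interiors goes through verbatim); either repair also requires rechecking the vertex-count bound of \Cref{lem:range_size_comp}, which remains of the same order. For what it is worth, your instinct here was sound: the paper's own proof dismisses this case in a single sentence (``$v_c$ is only reachable from sources and reaches only sinks that do not give new connections'') and does not address cross-iteration bridging at all, so it needs the same repair; but as a proof of the lemma for the routine as stated, your argument has a genuine, unclosable gap at precisely this point.
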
 
\begin{proof}
  For the first property, assume that $D$ is weakly connected.
  We know by definition that every sink in $D'$ is reached by at least one source.
  Consider a $v_c$ with $c = (a,b) \in \conn{D,\dexset}$. To show that $v_c$ is connected to some $v_S$ and $v_T$, consider the path from $d_1$ to $d_2$ in $D$ that realizes this connection. There must be a source component $C_S$ and a sink component $C_T$ in $\scc{D}$ such that $d_1$ is reachable from $C_S$ and $C_T$ is reachable from $d_2$. Therefore, any vertex in $C_S$ can also reach $d_2$ and $d_1$ can reach every vertex in $C_T$. By definition of compression, these two components ensure that $v_c$ is connected. 
  
  It remains to show that any source is reachable by any other source in the underlying undirected graph. Let $v_S, v_{S'}$ be two sources in $D'$ with corresponding source components $C$, $C'$ in $\scc{D}$. Since $D$ is weakly connected, there is a path from $C$ to $C'$ in the underlying undirected graph. Whenever the undirected path uses an edge in a different direction than the one before, we extend the path to first keep using edges in the same direction until a source or sink component is reached and then go back to the switching point. This new path can directly be transferred to $D'$, where we only keep the vertices corresponding to source and sink components. By definition, this is still a path in $D'$ that connects $v_S$ to $v_{S'}$, and $D'$ is weakly connected

  The second property is simple to verify, since strongly connected graphs are by definition compressed to single vertices. If $D$ is not strongly connected, $D'$ will have at least one source and one sink that are not the same.
  
  Regarding the equivalence, we create one source for every $S \in \source{D,\dexset}$ with the same set of incoming neighbors and create no other sources. Therefore, $\source{D', E_{GD'}} = \source{D,\dexset}$ and $\sink{D',E_{GD'}} = \sink{D,\dexset}$ follows analogously.
  For every connection $c \in \conn{D,\dexset}$, we create $v_c$ in $D'$ that realizes this connection. Therefore, we know that $\conn{D',E_{GD'}} \supseteq \conn{D,\dexset}$. Since $v_c$ is only reachable from sources and reaches only sinks that do not give new connections, we arrive at $\conn{D',E_{GD'}} = \conn{D,\dexset}$. 
\end{proof}

We also bound the number of possible different compression outputs as well as their size.

\begin{lemma}
\label{lem:range_size_comp}
  For a directed graph $G$, there can be at most $2^{2\cdot 2^{\abs{\ve{G}}} + \abs{\ve{G}}^2}$ different compressed extensions.
  Furthermore, every compressed extension has at most $2^{\abs{\ve{G}}+1} + \abs{\ve{G}}^2$ vertices.
\end{lemma}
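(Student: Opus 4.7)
The plan splits naturally into the two quantitative claims. I would handle the vertex bound first, since it is an almost immediate consequence of the definition of the compression routine, and then use it together with a structural observation to get the counting bound.

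\textbf{Bounding the vertex count.} First I would dispatch the strongly connected case: if $D$ is strongly connected then by construction the compressed extension has exactly one vertex, which is trivially below $2^{|V(G)|+1}+|V(G)|^2$. In the remaining case, the compression procedure explicitly defines
\[
V(D') = \{v_S \mid S \in \source{D,\dexset}\} \cup \{v_T \mid T \in \sink{D,\dexset}\} \cup \{v_c \mid c \in \conn{D,\dexset}\}.
\]
Since $\source{D,\dexset},\sink{D,\dexset}\subseteq \pot{V(G)}$, each of $|\source{D,\dexset}|$ and $|\sink{D,\dexset}|$ is at most $2^{|V(G)|}$; and since $\conn{D,\dexset}\subseteq V(G)\times V(G)$, we have $|\conn{D,\dexset}|\le |V(G)|^2$. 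Summing gives the claimed $2^{|V(G)|+1}+|V(G)|^2$.

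\textbf{Bounding the number of compressed extensions.} Here my strategy is to show that a compressed extension is completely determined by the triple $\bigl(\source{D,\dexset}, \sink{D,\dexset}, \conn{D,\dexset}\bigr)$, and then just count triples. The vertex set is already determined by the triple, as above. For the edges within $E(D')$, I would check each of the two families added by the procedure:
\begin{itemize}
\item The edges of the form $(v_S,v_c)$ and $(v_c,v_T)$ are added precisely when $c=(a,b)\in\conn{D,\dexset}$ satisfies $(s,b)\in\conn{D,\dexset}$ for all $s\in S$ and $(a,t)\in\conn{D,\dexset}$ for all $t\in T$; this condition depends only on $\connOp$.
\item The edges of the form $(v_S,v_T)$ are added when the sink component labelled $T$ is reachable from the source component labelled $S$ in $D$. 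I would argue this is also recoverable from the triple: if such a path exists in $D$, then every $s\in S$ reaches every $t\in T$ in $\ex{G}{D}{\dexset}$, so $\{S\}\times T\subseteq \conn{D,\dexset}$, and I would argue (using the idempotence of the compression routine, which I would verify in a short lemma) that in a \emph{compressed} extension this condition is in fact equivalent to the edge $(v_S,v_T)$ being present.
\end{itemize}
Finally, for the extension part $E_{GD'}$, its definition in the procedure is verbatim a function of the triple. Hence a compressed extension is determined by its triple, and the number of possible triples is at most
\[
|\pot{\pot{V(G)}}|\cdot |\pot{\pot{V(G)}}|\cdot |\pot{V(G)^2}| = 2^{2^{|V(G)|}}\cdot 2^{2^{|V(G)|}}\cdot 2^{|V(G)|^2}= 2^{2\cdot 2^{|V(G)|}+|V(G)|^2},
\]
as claimed.

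\textbf{Main obstacle.} The routine parts are the cardinality arithmetic and the definitions. The delicate step is justifying that the reachability-dependent edges $(v_S,v_T)$ are fully captured by the triple in a compressed extension: a priori two different inputs with the same triple could have different source-to-sink reachability patterns in $D$. I would handle this by first proving a short auxiliary lemma that compression is idempotent (applying the procedure to a compressed extension returns the same extension) and then combining it with \Cref{lem:comp_weakly_conn_equiv} (which already guarantees the triple is preserved) to conclude that the $(v_S,v_T)$-edge set is fixed by the triple. Once this is in place, the counting argument above closes the proof cleanly.
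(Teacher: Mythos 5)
Your two counts are exactly the paper's: for the size bound, the paper also argues that there is at most one $v_S$ and one $v_T$ per subset of $\ve{G}$ and at most $\abs{\ve{G}}^2$ connection vertices, and for the number of compressed extensions it simply multiplies the choices of which subsets occur in $\sourceOp$, which occur in $\sinkOp$, and which pairs occur in $\connOp$, i.e.\ it implicitly identifies a compressed extension with its triple $(\sourceOp,\sinkOp,\connOp)$ and counts triples, without discussing the reachability-dependent edges at all. So your approach is the same; the only divergence is the extra care you invest in the ``delicate step'', and that is precisely where your plan would not go through as written.

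The edges $(v_S,v_T)$ (and also the edges attaching the $v_c$'s, which the routine only creates while looping over reachable source--sink pairs) really do depend on $D$ beyond the triple, and the auxiliary facts you want are false for the routine as defined. Concretely, let $\ve{G}=\{a,b,c,d\}$, let $D_1$ have vertices $p,q,r,w$ with arcs $p\to r$, $p\to w$, $q\to r$, $q\to w$ and extension edges $\{(a,p),(c,q),(r,b),(w,d)\}$, and let $D_2$ have the same vertices with only $p\to r$, $q\to w$ and extension edges $\{(a,p),(c,q),(r,b),(w,d),(a,w),(c,r)\}$. Both extensions have $\sourceOp=\{\{a\},\{c\}\}$, $\sinkOp=\{\{b\},\{d\}\}$ and $\connOp=\{(a,b),(a,d),(c,b),(c,d)\}$, but in $D_2$ the sink component with out-neighbourhood $\{d\}$ is not reachable from the source component with in-neighbourhood $\{a\}$, so the compression of $D_2$ lacks the edge $(v_{\{a\}},v_{\{d\}})$ that the compression of $D_1$ contains: equal triples, different outputs. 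Moreover, in the compression of $D_2$ the sink $v_{\{d\}}$ is reachable from $v_{\{a\}}$ through the connection vertex $v_{(a,d)}$ even though the direct edge is absent, so your claimed equivalence ``edge $(v_S,v_T)$ present iff $S\times T\subseteq\connOp$'' fails in a compressed extension, and re-running the routine would add that edge, so the idempotence lemma you plan to prove is also false. To close the gap you should either change (or read) the routine so that $(v_S,v_T)$ is added exactly when $S\times T\subseteq\connOp$ --- this is harmless for \Cref{lem:comp_weakly_conn_equiv} since such an edge creates no new connections, and then the output genuinely is a function of the triple --- or simply observe that what \Cref{lem:scc_all_reductions} needs is a bound on the number of equivalence classes of extensions, which is exactly the number of triples, so counting triples (as the paper does) suffices without any determinacy claim about the routine itself.
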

\begin{proof}
  There are $2^{\abs{\ve{G}}}$ subsets of $\ve{G}$, each of them can be the neighborhood of a source or sink. Additionally, every one of the at most $\abs{\ve{G}}^2$ can form a connection or not. This proves the first claim. 

  Suppose $\compOp_G$ outputs $\dexpair$. For every subset $U \subseteq \ve{G}$, there can be at most one source and one sink in $D$ that has $U$ as outgoing or incoming neighbors. Also, there are at most $\abs{\ve{G}}^2$ pairs of vertices in $G$ that can form a connection. This bounds the number of connection vertices in $D$.
\end{proof}

In the past section, we have defined graph extensions and an equivalence relation on them that captures the role they can play in forming strongly connected subgraphs. We have presented a way to compress an extension such that its size only depends on the size of $G$, while remaining equivalent.
In the next section, we will use this theory to design reduction rules for \scsborder{}. Crucially, we view components outside of the set $B$ as extensions, which allows us to keep only one compressed extension of each equivalence class.

\subsection{Solving \scsborder{}}\label{sec:solving_scc}

We start by giving some reduction rules for a \scsborder{} instance $\mathcal{I} = (G, I, O, B, \wOp,  k, T)$. Additionally, we assume that $T \subseteq B$ to ensure that we do not change $T$ when changing $G - B$. This condition will always be satisfied in our algorithm.

The first reduction rule extends the sets $I$ and $O$ to whole components of $G - B$. This is possible since no solution can include only part of a component without its neighborhood intersecting the component. Remember that components always refer to weakly connected components. 

\begin{reduction*}
\label{red:in_out}
  Let $Q$ be a component of $G - B$. If $Q \cap O \ne \emptyset$, set $O = O \cup \cnei{Q}$. If $\cnei{Q} \cap I \ne \emptyset$, set $I = I \cup Q$. If both cases apply, the instance has no solution.
\end{reduction*} 
\begin{proof}[Proof of Safeness]
  For the first case, assume $Q \cap O \ne \emptyset$. Notice that if any vertex of $\cnei{Q}$ is in the solution, there also has to be some vertex of $Q$ in the neighborhood of the solution since $Q$ is weakly connected. This, however, is not possible as $Q \cap B = \emptyset$.

  Similarly, for the second case, assume $\cnei{Q} \cap I \ne \emptyset$. If any vertex of $Q$ is not in the solution, there has to be a vertex of $Q$ in the neighborhood of the solution since $Q$ is weakly connected, which again is impossible. 
  
  Therefore, we can safely apply the first two cases. Since in the last case $I \cap O \ne \emptyset$, there can clearly be no solution.
\end{proof}

If this reduction rule is no longer applicable, every component in $G-B$ is either completely in $I$, completely in $O$, or intersects with none of the two. 


From this point, we will use extensions from the previous section for components of $G-B$. Namely, for a component $Q$ of $G-B$, let $E_{BQ}$ be all the edges with exactly one endpoint in $B$ and one in $Q$. Then, $(\induced{G}{Q}, E_{BQ})$ defines an extension of $G-Q$. For simplicity, we also refer to this extension as $(Q, E_{BQ})$. Hence, we also use $\sourceOp$, $\sinkOp$, and $\connOp$, as well as $\compOp_{G-Q}$ for these extensions. 

The next reduction rule identifies a condition under which a component $Q$ can never be part of a solution, namely, if $Q$ includes strongly connected components with no in-neighbors or no out-neighbors, which is exactly the case if the empty set is in $\source{Q,E_{BQ}}$ or $\sink{Q,E_{BQ}}$.

\begin{reduction*}
\label{red:scc_no_sources_or_sinks}
  Let $Q$ be a component of $G - B$ such that $\induced{G}{Q}$ is not strongly connected. If $\emptyset \in \source{Q,E_{BQ}} \cup \sink{Q,E_{BQ}}$, include $Q$ into $O$. 
\end{reduction*}
\begin{proof}[Proof of Safeness]
    Since $\induced{G}{Q}$ is not strongly connected, $Q$ cannot be a solution by itself. Suppose $\emptyset \in \source{Q,E_{BQ}}$, the other case is analogous. Then, there is a source component $C$ in $\scc{\induced{G}{Q}}$ that has no incoming edge, neither in $\induced{G}{Q}$, nor in $E_{BQ}$. 
    Therefore, no vertex $v \in B$ can reach $C$.
    Because $Q$ can only be included as a whole and together with other vertices in $B$, $Q$ cannot be part of a solution.
\end{proof}

Note that after \Cref{red:in_out,red:scc_no_sources_or_sinks} have been applied exhaustively, every source in $Q$ has incoming edges from $B$, and every sink in $Q$ has outgoing edges to $B$. Finally, we have one more simple rule, which removes vertices $v \in O \setminus B$. It relies on the fact that by \Cref{red:in_out}, we also have $\nei{v} \subseteq O$.

\begin{reduction*}\label{red:remove_out}
  If \Cref{red:in_out} is not applicable, remove $O \setminus B$ from $G$.
\end{reduction*}

The previous reduction rules were useful to remove trivial cases and extend $I$ and $O$.
From now on, we assume that the instance is exhaustively reduced by \Cref{red:remove_out,red:scc_no_sources_or_sinks,red:in_out}.
Therefore, any component of $G - B$ is either contained in $I$ or does not intersect $I$ and $O$.

The next two rules will be twin type reduction rules that allow us to bound the number of remaining components.
If there are two components of $G-B$ that form equivalent extensions it is enough to keep one of them, since they fulfill the same role in forming a strongly connected subgraph. The reduction rules rely on \Cref{lem:source_sink_conn_equiv,lem:union_equiv_if_equiv} to show that equivalent extensions can replace each other and can be added to any solution.

\begin{reduction*}
\label{red:scc_twins}
  Let $Q_1, Q_2$ be components of $G - B$ such that both $\induced{G}{Q_1}$ and $\induced{G}{Q_2}$ are not strongly connected and $(Q_1, E_{BQ_1})$ and $(Q_2, E_{BQ_2})$ are equivalent. Delete $Q_2$ and increase the weight of some $q \in Q_1$ by $\w{Q_2}$. If $Q_2 \cap I \ne \emptyset$, set $I = I \cup Q_1$.
\end{reduction*}
\begin{proof}[Proof of Safeness]
  By the previous reduction rules, components of $G-B$ can only be included as a whole or not at all. Notice that since $\conn{Q_1, E_{BQ_1}} = \conn{Q_2, E_{BQ_2}}$ and \Cref{red:scc_no_sources_or_sinks}, we get $\nei{Q_1} = \nei{Q_2}$. Let $S$ be a solution to the old instance. We differentiate some cases.

  If $S \cap (Q_1 \cup Q_2) = \emptyset$, we know that also $\nei{S} \cap (Q_1 \cup Q_2) = \emptyset$. Therefore, the neighborhood size and strong connectivity of $S$ do not change in the new instance, and it is also a solution.
  If $S$ includes only one of $Q_1$ and $Q_2$, assume without loss of generality $S \cap (Q_1 \cup Q_2) = Q_1$, since $Q_1$ is not strongly connected by itself, the solution must include vertices of $B$. Because $\nei{Q_1} = \nei{Q_2}$, the solution must also include $Q_2$, a contradiction.
  If $S$ includes both $Q_1$ and $Q_2$, we claim that $S' \coloneqq S \setminus Q_2$ is a solution for the new instance. The neighborhood size and weight clearly remain unchanged. 
  Strong connectivity follows by \Cref{lem:union_equiv_if_equiv} and \Cref{lem:source_sink_conn_equiv}.
  The last two cases also show that if a solution had to include at least one of $Q_1$ and $Q_2$, that is, $(Q_1 \cup Q_2) \cap I \ne \emptyset$, any solution for the reduced instance must also include $Q_1$. Therefore, the adaptation to $I$ is correct.

  Let $S$ be a solution to the reduced instance. Again, if $S$ does not include vertices from $Q_1$, then $S$ will immediately be a solution to the old instance. If $Q_1 \subseteq S$, then $S' \coloneqq S \cup Q_2$ will be a solution to the old instance by \Cref{lem:union_equiv_if_equiv} and \Cref{lem:source_sink_conn_equiv}.
\end{proof}

For strongly connected components, the rule is different, since we have to acknowledge the fact that they can be a solution by themselves. For every such component we destroy strong connectivity of smaller-weight equivalent component, which can then be reduced by \Cref{red:scc_twins}.

\begin{reduction*}
\label{red:scc_twins_single}
  Let $Q_1, Q_2$ be components of $G - B$ that are also strongly connected with $(Q_1, E_{BQ_1})$ and $(Q_2, E_{BQ_2})$ equivalent and $\w{Q_1} \ge \w{Q_2}$.
  Add a vertex $q_2'$ with edges $\died{q_2}{q_2'}$ and $\died{q_2'}{v}$, for all $q_2 \in Q_2$ and $v \in \nei{Q_2}$. Set $\w{q_2'} = 0$.
\end{reduction*}
\begin{proof}[Proof of Safeness]
  Let $S$ be a solution of the old instance.
  If $S \cap (Q_1 \cup Q_2) = \emptyset$, then $S$ is also a solution for the new instance.
  If $S$ includes only one of $Q_1$ and $Q_2$, then we must have $S \in \set{Q_1, Q_2}$, so $S' \coloneqq Q_1$ is a solution for the new instance with $\w{S'} \ge \w{S}$.
  If $S$ includes both $Q_1$ and $Q_2$, adding $q_2'$ to $S$ obviously gives a solution of the same weight.

  For a solution of the reduced instance $S$, we can simply remove $q_2'$ if it is inside for a solution to the old instance.
\end{proof}

Using both \Cref{red:scc_twins,red:scc_twins_single} exhaustively makes sure that there are at most two components per extension equivalence class left.
The last rule compresses the remaining components to equivalent components of bounded size.

\begin{reduction*}
\label{red:scc_comp}
  Let $Q$ be a component of $G - B$ that is not equal to its compressed extension. Replace $Q$ by its equivalent compressed extension and set the weight such that $\w{Q'} = \w{Q}$. If $Q \cap I \ne \emptyset$, set $I = I \cup Q'$.
\end{reduction*} 
\begin{proof}[Proof of Safeness]
  Since $Q$ is not strongly connected, it can only be part of a solution that includes some vertices from $G-Q$. Thus, we can apply \Cref{lem:source_sink_conn_equiv,lem:comp_weakly_conn_equiv}, and the old instance and the new instance have exactly the same strongly connected subgraphs. Because of $\conn{Q, E_{BQ}} = \conn{Q', E_{BQ'}}$ and \Cref{red:scc_no_sources_or_sinks}, we get $\nei{Q} = \nei{Q'}$. Since $\w{Q'} = \w{Q}$, the rule is safe.
\end{proof}

This finally allows us to bound the size of $G-B$. In the next lemma, we summarize the progress of our reduction rules and apply the bounds from the previous section. Note that we need the stronger bound using the neighborhood of the components instead of simply $B$.

\begin{lemma}
\label{lem:scc_all_reductions}
  Let $\mathcal{Q}$ be a set of components of $G - B$ with total neighborhood size $h \coloneqq \abs{\bigcup_{Q \in \mathcal{Q}} \nei{Q}}$. Then we can reduce the instance, or there are at most $2^{2^{h+1} + h^2}\left(2^{h+1} + h^2\right) = 2^{\bigO{2^h}}$ vertices in $\mathcal{Q}$ in total.
  
  Executing the reduction rules in the proposed order guarantees termination after $\bigO{n}$ applications. The total execution takes at most $n^{\bigO{1}}$ time.
\end{lemma}
\begin{proof}
  After applying \Cref{red:scc_twins,red:scc_comp} exhaustively, there can be at most one copy of every compressed extension among the components in $\mathcal{Q}$.
  Since the compressed component only depends on the neighborhood of a component, we can treat $\bigcup_{Q \in \mathcal{Q}} \nei{Q}$ as the base graph for our extensions. By \Cref{lem:range_size_comp}, there are at most $\abs{\mathcal{Q}} \le 2^{2\cdot 2^h + h^2}$ components. Also, each component has at most $2^{h+1} + h^2$ vertices, which immediately gives the claimed bound on the total number of vertices in $\mathcal{Q}$.
  
  Now we consider the running time. Most reduction rules clearly make progress, either by increasing the size of $I$ or $O$ or by decreasing the number of vertices in $G$. While compression does not always decrease the number of vertices, it will only be applied once per component. Finally, \Cref{red:scc_twins_single} decreases the number of strongly connected components in $G-B$, which also will never be increased again by \Cref{lem:comp_weakly_conn_equiv}. All of these progress measures are bounded by $n$, proving the first claim.

  For the second part, we only have to bound the runtime of a single application of a reduction rule. Most rules are clearly executable in polynomial time. The total size of the source, sink, and connection sets are also bounded by $\abs{\e{G}}$ and thus computable in polynomial time. Therefore, the compressed component can also be constructed in polynomial time. Note that this increases the size of our graph, but if we apply \Cref{red:scc_comp} only after all other reduction rules are completed, the compressed component never has to be considered again.
\end{proof}

\begin{lemma}[\protect{\cite[Lemma~3]{golovach2020finding}}]\label{lem:separation_algo}
  Given an undirected graph $G$, there is an algorithm with runtime $2^{\bigO{\min\{q,k\}\log(q+k)}}n^{\bigO{1}}$ that either finds a $(q,k)$-separation of $G$ or correctly reports that $G$ is $((2q+1)q2^k, k)$-unbreakable.
\end{lemma}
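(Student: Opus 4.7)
The plan is to derandomize the randomized contractions technique of Chitnis et al.~\cite{chitnis2016designing} using the explicit sampling from \Cref{lem:find_sets}. First, I would invoke \Cref{lem:find_sets} on $U = V(G)$ with parameters $a = q+1$ and $b = k$ (or the symmetric choice $a=k$, $b=q+1$ when $k < q$, so that $\min\{a,b\} = \min\{q,k\}+1$) to obtain a family $\mathcal{F}$ of $2^{\bigO{\min\{q,k\}\log(q+k)}}\log n$ subsets of $V(G)$. The guarantee I would exploit is: if $(A,B)$ is any $(q,k)$-separation of $G$ with, WLOG, $|A \setminus B| \ge q+1$, then there exists $F \in \mathcal{F}$ that contains a fixed $(q+1)$-element subset of $A \setminus B$ while being disjoint from the separator $A \cap B$.

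Next, for each $F \in \mathcal{F}$, the algorithm would attempt to recover a balanced separation via a max-flow-style vertex-cut computation: contract $F$ to a single terminal $s_F$ and compute a minimum vertex cut of order at most $k$ between $s_F$ and the rest of the graph. By the standard submodularity of the vertex-cut function, such a min-cut $S^{\ast}$ yields a separation $(A^{\ast}, B^{\ast})$ with $F \subseteq A^{\ast}\setminus B^{\ast}$ and $|A^{\ast}\cap B^{\ast}|\le k$. I would then check whether $|A^{\ast}\setminus B^{\ast}|>q$ and $|B^{\ast}\setminus A^{\ast}|>q$; if so, we are done. If every $F \in \mathcal{F}$ either yields no cut of order at most $k$ or yields an unbalanced one, the algorithm reports unbreakability.

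The hard part will be certifying the precise unbreakability bound $(2q+1)q\,2^k$ when every iteration fails. The approach I envision is a peeling argument: suppose for contradiction that a separation $(A,B)$ of order at most $k$ exists in which both sides exceed this bound. The small-side witness that we would have seeded into some $F\in \mathcal{F}$ must, in each iteration, have been ``swallowed'' by an unbalanced min-cut of size at most $k$. Each such failure pins down a side of size at most $q$ relative to some subset of the separator; because there are only $2^k$ subsets of a size-$k$ separator, and because each unbalanced side supplies at most $q$ new vertices (with an additional factor of $2q+1$ arising from an averaging step over the $(q+1)$-element witness seeds), summing contributions yields the $(2q+1)q\,2^k$ bound. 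The delicate technical point is invoking submodularity correctly to ensure that the extracted min-cuts genuinely refine (rather than interfere with) previously extracted ones, so that the peeling step by step removes disjoint pieces.

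The overall running time is dominated by $|\mathcal{F}|$ calls to a polynomial-time max-flow subroutine, giving the claimed bound $2^{\bigO{\min\{q,k\}\log(q+k)}}n^{\bigO{1}}$.
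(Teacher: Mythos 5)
The paper does not actually prove this lemma; it imports it verbatim from \cite[Lemma~1]{golovach2020finding}, which in turn rests on the separation-finding machinery of \cite{chitnis2016designing}. Measured against that proof, your proposal starts from the right place (derandomization via the set family of \Cref{lem:find_sets}) but has two genuine gaps. First, the algorithmic step is not sound as described. The family only guarantees that some $F$ contains your $(q+1)$-vertex witness and avoids the order-$\le k$ separator; it says nothing about the rest of $F$, which will in general contain many vertices from \emph{both} sides of every separation. Contracting all of $F$ into one terminal $s_F$ therefore merges the two sides you are trying to separate, and ``a minimum vertex cut between $s_F$ and the rest of the graph'' is not a well-posed cut problem anyway (there is no second terminal; the trivial answer is $N(F)$, which need not have size $\le k$ or be balanced). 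The cited proof instead seeds \emph{both} sides with \emph{connected} witnesses of size $q+1$ (so $a = 2(q+1)$, $b = k$), keeps the vertices of $F$ undeletable rather than contracted, and computes minimum $u$-$v$ vertex cuts for pairs $u,v \in F$; connectedness of the witnesses is exactly what forces both sides of the returned cut to exceed $q$. With one-sided, non-connected seeding, the extracted cut can have an arbitrarily small far side, and the claim that some $F$ yields a balanced cut whenever a $(q,k)$-separation exists does not follow.

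Second, and more fundamentally, your certification of the bound $(2q+1)q2^k$ is missing the case that actually produces those factors. Even a corrected cut-search only rules out separators $X$ of size at most $k$ for which two components of $G-X$ have more than $q$ vertices. It does not rule out the ``flower'' situation: all components of $G-X$ are small (at most $q$) but many of them attach to the same subset of $X$ --- for instance a single cut vertex with $n/q$ pendant components of size $q$. There a $(q,k)$-separation exists (split the small components into two groups), yet neither side contains a connected set of $q+1$ vertices, so no seeded min-cut can witness it; your algorithm would report unbreakability although the graph is breakable far beyond $(2q+1)q2^k$. This is precisely where the $2^k$ (grouping components by their neighborhood in the separator) and $(2q+1)q$ (the threshold at which same-neighborhood components can be split into two sides each larger than $q$) come from, and it requires a separate detection routine (the flower separations of \cite{chitnis2016designing}) or an explicit additional argument. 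Your peeling/averaging sketch gestures at such a count but provides neither the algorithmic counterpart nor the proof; the submodularity/uncrossing issue you single out as the delicate point is not the piece that is missing.
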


Now, we have all that it takes to solve our intermediate problem. 
The main idea of the algorithm is to shrink $B$ to a bounded size, by solving the problem recursively. Once $B$ is bounded, we apply our reduction rules by viewing components of $G-B$ as extensions, removing redundant equivalent extensions and compressing them. Thereby, we also bound the size and number of components of $G-B$ in terms of $\abs{B}$ using \Cref{lem:scc_all_reductions}. By choosing suitable constants, we can show that this decreases the total size of $G$, which will make progress to finally reduce it to the unbreakable case.

\begin{algorithm}[t]
   \caption{The recursive understanding algorithm for \scsborder{}\label{alg:rec_und_scc}.}
  \DontPrintSemicolon
  \SetKwFunction{FMain}{Solve}
  \SetKwProg{Fn}{def}{:}{}
  \Fn{\FMain{$G$, $I$, $O$, $B$, $\wOp$, $k$, $T$}}{
    $q\gets 2^{2^{2^{ck^2}}}$ for a suitable constant $c$\;
    \eIf{$G$ is $((2q+1)q2^k,k)$-unbreakable}
    {
      \KwRet solve the problem using \Cref{thm:unbreakable_scc}\;
    }{
      $(U,W) \gets $ $(q,k)$-separation of $G$ with $\abs{T \cap W \setminus U} \le k$\;
      $(\tilde{G}, \tilde{I}, \tilde{O}, \tilde{B}, \tilde{\wOp}, k, \tilde{T}) \gets $ restriction to $W$ with $\tilde{T} = (T \cap W) \cup (U \cap W)$\;
      $\mathcal{R} \gets $ \FMain{$\tilde{G}$, $\tilde{I}$, $\tilde{O}$, $\tilde{B}$, $\tilde{\wOp}$, $k$, $\tilde{T}$}\;
      $\mathcal{N} \gets \tilde{T} \cup \bigcup_{R \in \mathcal{R}} \nei{R} \cap W$\;
      $\hat{B} \gets (B \cap U) \cup (B \cap \mathcal{N})$\;
      $(G^*, I^*, O^*, \hat{B}^*, \wOp^*, k^*, T^*) \gets $ reduce $(G,I,O,\hat{B},\wOp,k,T)$ with \Cref{lem:scc_all_reductions}\;
      \KwRet \FMain{$G^*$, $I^*$, $O^*$, $\hat{B}^*$, $\wOp^*$, $k^*$, $T^*$}\;
    }
  }
\end{algorithm}

\begin{theorem}
\label{thm:border_scc_fpt}
  \scsborder{} can be solved in time $2^{2^{2^{\bigO{k^2}}}}n^{\bigO{1}}$.
\end{theorem}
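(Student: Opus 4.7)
The plan is to implement the recursive-understanding strategy sketched in \Cref{fig:recursive_calls}. Given an instance $\mathcal{I} = (G, I, O, B, \wOp, k, T)$ with $\abs{T} \le 2k$, the algorithm is a loop in which each iteration runs \Cref{lem:separation_algo} on $\undirG{G}$ with parameter $q = 2^{2^{\Theta(k^2)}}$ to be fixed below. If $G$ is reported $((2q+1)q2^k, k)$-unbreakable, invoke \Cref{thm:unbreakable_scc} and return. Otherwise, one obtains a separation $(A, C)$ of order at most $k$ with $\abs{A \setminus P}, \abs{C \setminus P} > q$ where $P \coloneqq A \cap C$.

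Let $W$ be whichever of $A, C$ has smaller intersection with $T$, so that $\abs{T \cap W} \le k$, and set $T' \coloneqq (T \cap W) \cup P$ with $\abs{T'} \le 2k$. Recursively solve \scsborder{} on the sub-instance restricted to $\cneiG{G}{W}$ with boundary terminals $T'$. Since this sub-instance has only $\bigO{q}$ vertices, it falls into the unbreakable case and is solved directly by \Cref{thm:unbreakable_scc}. The recursion returns a maximum-weight solution for each of the $2^{\bigO{k^2}}$ boundary complementations listed by \Cref{lem:number_border_complementations}. The key claim to prove is that every feasible solution $S$ to $\mathcal{I}$ corresponds to a boundary complementation given by $X = S \cap P$, $Y = \nei{S} \cap P$, $Z = P \setminus \cnei{S}$, and $R$ consisting of those pairs in $X \times X$ that are connected through $S \cap (A \setminus P)$, and that the recursively computed solution for that complementation can be substituted for $S \cap W$ without changing the total weight or breaking the strong connectivity of $S$.

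Given this collection of partial $W$-solutions, let $B_W$ be the union of their in-$W$ neighborhoods, define $B' \coloneqq B \cup T' \cup B_W$, and update $I, O$ to record the inclusions and exclusions witnessed by the partial solutions. View each component of $\induced{G}{W} - B'$ as an extension of $\induced{G}{B'}$ and apply \Cref{red:in_out,red:scc_no_sources_or_sinks,red:remove_out,red:scc_twins,red:scc_twins_single,red:scc_comp} exhaustively. By \Cref{lem:scc_all_reductions} the $W$-side collapses to at most $2^{\bigO{2^{\abs{B'}}}}$ vertices, and choosing $q$ large enough to dominate this bound makes the new $\abs{W \setminus P}$ drop below $q$, contradicting the original separation guarantee and hence strictly decreasing $\abs{\ve{G}}$. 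The outer loop then restarts on the reduced instance.

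The main obstacle is establishing the key claim above, namely proving the safeness of the substitution step: for every feasible $S$, the boundary complementation associated to $S$ admits a solution whose combination with $S \cap A$ is strongly connected with a neighborhood no larger than $\nei{S}$. This will rest on the wiring of the $u_{(a,b)}$-vertices in \Cref{def:scc_border_complementation} to simulate the $a$-$b$-paths lost when the $A \setminus P$ side is removed, together with the extension-equivalence tools \Cref{lem:source_sink_conn_equiv,lem:union_equiv_if_equiv} applied to the extensions contributed by the two sides of the separation. For the runtime, each outer iteration removes at least one vertex so the loop runs at most $n$ times; each iteration invokes the unbreakable base case on a graph of size $\bigO{q}$, which by \Cref{thm:unbreakable_scc} costs $2^{\bigO{k^2 \log q'}} n^{\bigO{1}}$ where $q' = (2q+1)q 2^k$. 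With $q = 2^{2^{\Theta(k^2)}}$, this evaluates to $2^{2^{2^{\bigO{k^2}}}} n^{\bigO{1}}$, matching the claimed bound.
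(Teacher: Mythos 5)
Your proposal follows the right high-level template, but it contains a structural error that breaks both the algorithm and the runtime analysis. You assert that the sub-instance restricted to the side $W$ of the separation ``has only $\bigO{q}$ vertices'' and can therefore be handed directly to the unbreakable-case solver of \Cref{thm:unbreakable_scc}. A $(q,k)$-separation returned by \Cref{lem:separation_algo} only \emph{lower}-bounds both sides: it guarantees $\abs{A \setminus C} > q$ and $\abs{C \setminus A} > q$, and gives no upper bound on either side, so $W$ may contain almost all of $\ve{G}$ and need not be unbreakable. The paper instead makes a genuine \emph{recursive} call of \scsborder{} on the instance restricted to $W$ (with terminals $(T\cap W)\cup (U\cap W)$), and termination is then argued separately: the first call is smaller because the other side has more than $q > k^2$ vertices, the second call is smaller because the reduction rules shrink $W$ below $q$, and the whole thing is tied together by a recurrence whose base case ($n \le 2q$, or more generally the unbreakable case) costs $2^{2^{2^{\bigO{k^2}}}}n^{\bigO{1}}$. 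With your version, solving the $W$-side is just the original problem on an unbounded graph, so no progress is made and the claim ``each outer iteration invokes the unbreakable base case on a graph of size $\bigO{q}$'' has no justification.

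Two further points. First, your boundary update goes in the wrong direction: you set $B' \coloneqq B \cup T' \cup B_W$, which \emph{enlarges} $B$; since $B$ can be all of $\ve{G}$ (as in the top-level instance), $\induced{G}{W} - B'$ may then be empty or its components may have unbounded neighborhoods, and \Cref{lem:scc_all_reductions} yields no bound of the form $2^{\bigO{2^{\abs{B'}}}}$ that depends only on $k$. The point of using the recursively computed solutions is to \emph{restrict} $B$ inside $W$ to the bounded set $\mathcal{N}$ of vertices that appear as neighbors of some returned solution (plus $\tilde T$), i.e.\ $\hat{B} = (B\cap U)\cup(B\cap\mathcal{N})$ with $\abs{\mathcal{N}} \le 2^{\bigO{k^2}}$ by \Cref{lem:number_border_complementations}; only then do the twin and compression rules collapse $W$ to size at most $q$. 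Second, you explicitly defer the central safeness claim — that for every feasible $S$ the boundary complementation it induces admits a recursively computed $W$-part that can replace $S\cap W$ while preserving weight, the neighborhood bound, and strong connectivity. This is not a routine consequence of \Cref{lem:source_sink_conn_equiv,lem:union_equiv_if_equiv}; the paper proves it by constructing $(\tilde X,\tilde Y,\tilde Z,\tilde R)$ from $S$, checking that $(S\cap W)$ extended by the $u_{(a,b)}$-gadgets is feasible for that complementation, and then verifying strong connectivity of the recombined set by an explicit path-exchange argument. As it stands, the proposal is missing the recursion that makes the method work, states the boundary restriction incorrectly, and leaves its key correctness lemma unproven.
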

\begin{proof}

  Let $\mathcal{I} = (G, I, O, B, \wOp, k, T)$ be our \scsborder{} instance. See \Cref{alg:rec_und_scc} for a more compact description of the algorithm. A high level display of the approach can be found in \Cref{fig:recursive_calls}.
  Define $q = 2^{2^{2^{ck^2}}}$ for a constant $c$. We later show that a suitable $c$ must exist.
  \zzcommand{\neighbor}{\mathcal{N}}

  First, we run the algorithm from \Cref{lem:separation_algo} on the underlying undirected graph of $G$ with $q$ and $k$. If it is $((2q+1)q2^k, k)$-unbreakable, we solve the instance directly using the algorithm from \Cref{thm:unbreakable_scc}.

  Therefore, assume that we have a $(q,k)$-separation $(U,W)$. Without loss of generality, since $\abs{T} \le 2k$ we can assume that $\abs{T \cap W \setminus U} \le k$. Thus, we can construct a new instance to solve \emph{the easier side} of the separation. Take $\tilde{G} = \induced{G}{W}, \tilde{I} = I \cap W, \tilde{O} = O \cap W, \tilde{B} = B \cap W$, write $\tilde{\wOp}$ for the restriction of $\wOp$ to $W$, and set $\tilde{T} = (T \cap W) \cup (U \cap W)$. Since $\abs{U \cap W} \le k$, also $\abs{\tilde{T}} \le 2k$ holds and $\tilde{I} \coloneqq (\tilde{G}, \tilde{I}, \tilde{O}, \tilde{B}, \tilde{\wOp}, k, \tilde{T})$ is a valid instance, which we solve recursively.

  Let $\mathcal{R}$ be the set of solutions found in the recursive call. For $R \in \mathcal{R}$, define $N_R = \nei{R} \cap W$. Define $\mathcal{N} = \tilde{T} \cup \bigcup_{R \in \mathcal{R}} N_R$. 

  We now restrict $B$ in $W$ to use only vertices in the neighborhood that have been neighbors in a solution in $\mathcal{R}$, that is, only vertices in $\mathcal{N}$. Define $\hat{B} = (B \cap U) \cup (B \cap \mathcal{N})$. We now replace $B$ in our instance with $\hat{B}$ and apply all our reduction rules exhaustively to arrive at the instance $(G^*, I^*, O^*, \hat{B}^*, \wOp^*, k^*, T^*)$. Finally, we also solve this instance recursively and return the solutions after undoing the reduction rules.

 \subparagraph*{Correctness.}
  We already proved that the reduction rules and the algorithm for the unbreakable case are correct. The main statement we have to show is that we can replace $B$ with $\hat{B}$ without throwing away important solutions. That means that the instances $(G,I,O,B,\wOp,k,T)$ and $\hat{\mathcal{I}} \coloneqq (G,I,O,\hat{B},\wOp,k,T)$ are equivalent in the sense that any solution set for one instance can be transformed to a solution set to the other instance with at least the same weights. This justifies solving $\hat{\mathcal{I}}$ instead of $\mathcal{I}$.
  Consider the boundary complementation  $\mathcal{I}' \coloneqq (G',I',O',B',\wOp',k')$ and $\hat{\mathcal{I}}' \coloneqq (G',I',O',\hat{B}',\wOp',k')$ that are caused by the same $(X,Y,Z,R)$.
  To show the claim, we consider the two directions. Any solution for $\hat{\mathcal{I}}'$ is immediately a solution for the same boundary complementation for $\mathcal{I}'$ since the only difference is that we limit the possible neighborhood to a subset of $B$.

  For the other direction, consider a solution $S$ to $\mathcal{I}'$, and we want to show that there is a solution $\hat{S}$ to $\hat{\mathcal{I}}'$ using only vertices of $\hat{B}'$ in the neighborhood with $\w{\hat{S}} \ge \w{S}$. If $S \cap W = \emptyset$, then $S$ is also immediately a solution for $\hat{\mathcal{I}}'$. Therefore, assume $S \cap W \ne \emptyset$. Define $\tilde{X} \coloneqq \tilde{T} \cap S$, $\tilde{Y} \coloneqq \tilde{T} \cap \neiG{G - (W\setminus U)}{S}$, and $\tilde{Z} \coloneqq \tilde{T} \setminus (\tilde{X} \cup \tilde{Y})$. Let $\tilde{R}$ be the set of $(a,b) \in \tilde{X} \times \tilde{X}$ such that there is an $a$-$b$-path in $\induced{G}{S \setminus {W \setminus U}}$. Finally, let $\tilde{k} \coloneqq \abs{\nei{S} \cap W}$. Thus, we can construct a boundary complementation instance $(\tilde{G}', \tilde{I}', \tilde{O}', \tilde{B}', \tilde{w}', \tilde{k})$ from $(\tilde{X}, \tilde{Y}, \tilde{Z}, \tilde{R})$. One can easily verify that $(S \cap \ve{\tilde{G}'}) \cup \set{u_r}{r \in \tilde{R}}$ is a feasible solution to this instance. Furthermore, the maximum solution $\tilde{S} \in \mathcal{R}$ to this instance gives a new set $\hat{S} \coloneqq (\tilde{S} \cap W) \cup (S \cap U) \subseteq \ve{G'}$ that has at least the same weight as $S$. We also know that $\nei{\hat{S}} \subseteq \hat{B}'$ and $\abs{\nei{\hat{S}}} \le k$. See \Cref{fig:scc_bc} for a visualization of the construction corresponding to a solution $S$.

  All that remains is to verify that $\hat{S}$ is strongly connected. For $v_1, v_2 \in \tilde{S} \cap W$, we can simply use the $v_1$-$v_2$-path in $\tilde{S}$, replacing subpaths via $u_r$ for $r \in \tilde{R}$ with the actual paths in $S \cap U$. If $v_1 \in \tilde{S} \cap W$ and $v_2 \in S \cap U$, we can walk to any $x \in \tilde{X}$ which must have a path to $v_2$ in $S$, in which may need to replace subpaths via $W$. This can be done, since $\tilde{S} \cap W$ connects all pairs of $x_1, x_2 \in X$ that are not connected via $S \cap U$. The two remaining cases follow by symmetry, justifying the replacement of $B$ with $\hat{B}$.

  Finally, we have to show that the recursion terminates for the right choice of $c$; that is, both recursively solved instances have strictly smaller sizes than the original graph.
  For the first recursive call, note that the boundary complementation adds at most $k^2 \le q$ vertices to $\tilde{G}$. Since $\abs{U \setminus W} > q$, we have $\abs{\ve{\tilde{G}}} < \abs{\ve{G}}$.

  For the second recursive call, since $\abs{\tilde{T}} \le 2k$ and by \Cref{lem:number_border_complementations}, we have $\abs{\mathcal{N}} \le 2k + (k+1)k2^{c_1k^2} \le 2^{c_2k^2}$ for constants $c_1$ and $c_2$.
  After applying all our reduction rules, by \Cref{lem:scc_all_reductions} for a suitable choice of $c_3$ and $c$ we get \[\abs{W^*} \le \abs{\neighbor} + 2^{2^{\abs{\neighbor}+1} + \abs{\neighbor}^2}\left(2^{\abs{\neighbor}+1} + \abs{\neighbor}^2\right) \le 2^{c_3 2^{\abs{\neighbor}}} \le 2^{c_32^{2^{c_2k^2}}} \le 2^{2^{2^{ck^2}}} \eqqcolon q.\]
  Since before the reductions, we had $\abs{W \setminus U} > q$, the reduced graph $G^*$ also has fewer vertices than $G$. Therefore, this recursive call also makes progress and the recursion terminates.

  \subparagraph*{Runtime.} 
  We follow the analysis of~\cite{chitnis2016designing}. With \Cref{lem:separation_algo}, we can test
  if the undirected version of $G$ is unbreakable in time $2^{k\log{q+k}}n^{\bigO{1}} \le 2^{2^{2^{\bigO{k^2}}}} n^{\bigO{1}}$.
  If the graph turns out to be unbreakable, the algorithm of \Cref{thm:unbreakable_scc} solves it in time $2^{\bigO{k^2\log{q}}} \le 2^{2^{2^{\bigO{k^2}}}}n^{\bigO{1}}$.
  Executing the reduction rules takes polynomial time in $n$ by \Cref{lem:scc_all_reductions}. 

  All that is left to do is analyze the recursion. Let $n' \coloneqq \abs{W}$. By the separation property, we know that $q < n' < n-q$. From the correctness section, we get $\abs{\ve{G^*}} \le \abs{\ve{G}} - n' + q$. 
  Note that the recursion stops when the original graph is $(q,k)$-unbreakable, which must be the case if $\abs{\ve{G}} \le 2q$.
  We arrive at the recurrence 
  \[
    T(n) = \begin{cases}2^{2^{2^{\bigO{k^2}}}}, & \text{for $n \le 2q$;}\\
    \left(\max_{q < n' < n-q} T(n' + k^2) + T(n - n' + q)\right) + 2^{2^{2^{\bigO{k^2}}}}n^{\bigO{1}},  & \text{otherwise.} \end{cases}
  \]
  Notice that $2^{2^{2^{\bigO{k^2}}}}$ appears in every summand of the expanded recurrence and can be ignored here and multiplied later. Furthermore, $n^{\bigO{1}}$ is bounded from above by a convex polynomial. Therefore, it is enough to consider the extremes of the maximum expression. For $n' = q+1$, the first recursive call evaluates to $T(q+1+k^2) \le T(2q) \le 2^{2^{2^{\bigO{k^2}}}}$. The second call only introduces an additional factor of $n$. For $n' = n-q-1$, the second expression evaluates to $T(2q+1)$ which is clearly also bounded by $2^{2^{2^{\bigO{k^2}}}}$. Thus, we arrive at the final runtime of $2^{2^{2^{\bigO{k^2}}}}n^{\bigO{1}}$.
\end{proof}

Finally, we can use \scsborder{} to solve \scs{}. Since in our original problem every vertex could be part of the solution or its neighborhood, we initially set $I \coloneqq O \coloneqq \emptyset$ and $B \coloneqq \ve{G}$. Furthermore, we only want to consider the boundary complementation that changes nothing, which we achieve by setting $T \coloneqq \emptyset$.

\test*

\section{Secluded \texorpdfstring{$\mathcal{F}$}{F}-Free subgraph and Secluded DAG}
\label{chap:hardness}

\zzcommand{\prob}{\textsc{Out-Secluded $\mathcal{F}$-Free WCS}}
In this section, we show that \textsc{Out-Secluded $\mathcal{F}$-Free Subgraph} is W[1]-hard for almost all choices of $\mathcal{F}$, even if we enforce weakly connected solutions. Except for a few missing cases, we establish a complete dichotomy when \textsc{Out-Secluded $\mathcal{F}$-Free Weakly Connected Subgraph} (\prob{}) is hard and when it is FPT. This is a surprising result compared to undirected graphs and shows that out-neighborhood behaves completely different. 
The same problem was studied before in undirected graphs, where it admits FPT-algorithms using recursive understanding~\cite{golovach2020finding} and branching on important separators~\cite{jansen2023single}. For total neighborhood, both algorithms still work efficiently since the seclusion condition acts exactly the same as in undirected graphs. However, the result changes when we look at out-neighborhood, where we show hardness in \Cref{thm:f-free-hard-always} for almost all choices of $\mathcal{F}$. 
To formalize for which kind of $\mathcal{F}$ the problem becomes W[1]-hard, we need one more definition.
\begin{definition}[Inward Star]
  We say that a directed graph $G$ is an \emph{inward star}, if there is one vertex $v \in \ve{G}$ such that $\e{G} = \set{\died{u}{v}}{u \in \ve{G} \setminus \set{v}}$, that is, the underlying undirected graph of $G$ is a star and all edges are directed towards the center.
\end{definition}

Due to the nature of the construction, we show that if no inward star contains any $F \in \mathcal{F}$ as a subgraph, the problem becomes W[1]-hard. Besides this restriction, the statement holds for all non-empty $\mathcal{F}$, even families with only a single forbidden induced subgraph. Later, we explain how some other cases of $\mathcal{F}$ can be categorized as FPT or W[1]-hard.



\restateffree*
\begin{proof}
\begin{figure}
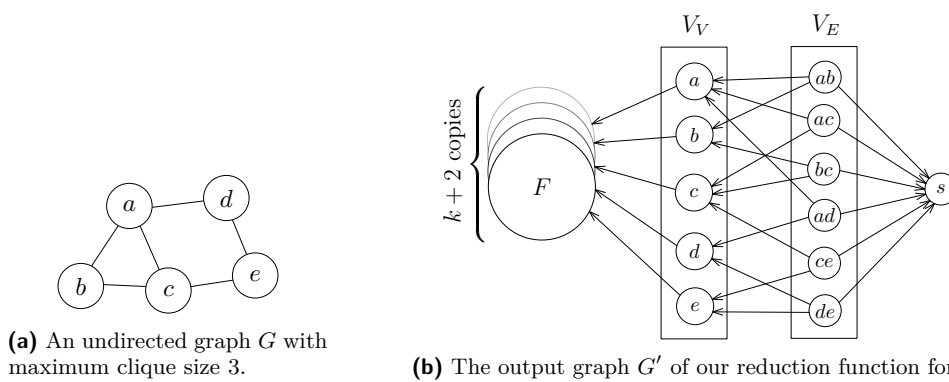

  \centering
  \hfill
  \begin{subfigure}{0.3\textwidth}
    \centering
    \includegraphics[width=0.7\textwidth,page=2]{figures/incidence_graph_reduction}
    \caption{A \textsc{Clique} input graph $G$.}
  \end{subfigure}
  \hfill
  \begin{subfigure}{0.54\textwidth}
    \centering
    \includegraphics[width=0.9\textwidth,page=1]{figures/incidence_graph_reduction}
    \caption{The output graph $G'$ of our reduction function for $G$.}
  \end{subfigure}
  \hfill
  \caption{The construction of the reduction in the proof of \Cref{thm:f-free-hard-always}. Shaded vertices represent a complete graph on $k+2$ copies of $F$, and every $v \in V_V$ is connected to every copy. No solution in $G'$ can include a vertex in a copy of $F$ or $V_V$. Thus, any secluded subset $S$ of $V_E$ of size $\binom{k}{2}$ corresponds to the clique $\outNei{S}$ in $G$.}\label{fig:clique_reduction}
\end{figure}

    Let $F \in \mathcal{F}$. We give a reduction from \textsc{Clique}, inspired by~\cite{fomin2013parameterized}. Given an undirected graph $G$, $k$, and $w$, consider the incidence graph, which we modify in the following. We add $k+2$ copies of $F$ and a single vertex $s$. For every edge $e = \uned{u}{v} \in E$, we add the edges $\died{e}{u}$, $\died{e}{v}$, and $\died{e}{s}$. Furthermore, we add an edge from every vertex $v \in V_V$ to every vertex in every copy of $F$. For two different copies $F_1$ and $F_2$ of $F$, we add edges in both directions between every vertex $v_{f1} \in F_1$ and $v_{f2}$ in $F_2$. Denote this newly constructed graph by $G'$. Finally, set $k' \coloneqq k$ and $w \coloneqq \binom{k}{2} + 1$.
    See \Cref{fig:clique_reduction} for a visualization of the graph construction.
    Notice that $k \le \abs{\ve{G}}$, therefore the construction has polynomial size, and $k' + w$ depends only polynomially on $k$.

    We show that there is a clique of size $k$ in $G$ if and only $G'$ has a weakly connected $k$-out-secluded $\mathcal{F}$-free subgraph of weight at least $w$.
    If $C \subseteq \ve{G}$ forms a $k$-clique in $G$, consider the set of vertices $S$ in $G'$ consisting of $s$ and all edge vertices $v_{\uned{a}{b}}$ where both endpoints $a$ and $b$ lie in the clique $C$. This is clearly weakly connected and has size $\binom{k}{2}+1$. Also, we have $\outNei{S} = C$ and $\abs{C} = k$.

    For the other direction, consider a solution $S$ for $G'$. Notice that no vertex from the copies of $F$ or from $V_V$ can be in $S$; otherwise, $S$ has at least $k+1$ out-neighbors. To ensure the connectivity, if $k > 1$, $s$ must be in $S$. Hence, the out-neighborhood of $S$ must be part of $V_V$. Because $\abs{\outNei{S}} \le k$ and $\abs{S} \setminus \set{s} = \binom{k}{2}$, clearly $S$ must consist of the edges of a clique in $G$.
\end{proof}

Note that \Cref{thm:f-free-hard-always} is very general and excludes many properties $\propOp$ immediately from admitting FPT-algorithms. 
Another interesting example is finding secluded DAGs, a very natural extension of secluded trees, studied in~\cite{golovach2020finding,donkers2023finding}. There we choose $\mathcal{F}$ to be the set of all directed cycles. Although this set is not finite, hardness follows from \Cref{thm:f-free-hard-always}.

\restatedag*

Finally, we analyze some of the more remaining cases in the following paragraphs.

\subparagraph*{Trivial Cases}
If $\mathcal{F}$ contains the graph with only a single vertex, the empty set is the only $\mathcal{F}$-free solution. If $\mathcal{F}$ contains two vertices connected by a single edge, weakly connected solutions can only consist of a single vertex or contain bidirectional edges $(u,v)$ and $(v,u)$ for $u,v \in \ve{G}$. Both of these problems are clearly solvable in FPT-time, the second one via our algorithm for \textsc{Secluded Clique}. Additionally, if $\mathcal{F} = \emptyset$, we can clearly choose the maximum weight component of $G$ as the solution.

\subparagraph*{Independent Sets} 
An independent set is a subgraph of an inward star. One kind of graph that cannot be included in $\mathcal{F}$ such that \Cref{thm:f-free-hard-always} shows hardness are independent sets. Surprisingly, the problem becomes FPT if $\mathcal{F}$ includes an independent set of any size. First, notice that independent sets that are part of $\mathcal{F}$ can be ignored if there is a smaller independent set in $\mathcal{F}$. Suppose the size of the smallest independent set is $\alpha+1$. This means that any solution must be an $\alpha$-bounded graph, i.e., a graph without an independent set of size $\alpha$.

We have shown already how these can be enumerated efficiently with the algorithm in the proof of \Cref{thm:alpha_bounded_fpt}. For every enumerated subgraph, we can simply check if it is $\mathcal{F}$-free in time $\abs{\mathcal{F}}n^{\|\mathcal{F}\|}$. Therefore, this case is also FPT with parameter $k$ if $\mathcal{F}$ is finite. 
\begin{theorem}
  Let $\mathcal{F}$ be a finite family of directed graphs that contains an independent set. Then \prob{} is solvable in time $\abs{\mathcal{F}}(2\alpha + 2)^kn^{\|\mathcal{F}\| + \alpha + \bigO{1}}$.
\end{theorem}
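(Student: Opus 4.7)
The plan is to exploit the fact that membership in $\mathcal{F}$ already forces every feasible solution to have bounded independence number, so the algorithm from \Cref{thm:alpha_bounded_fpt} can be used as the backbone, and $\mathcal{F}$-freeness and weak connectivity can be checked in a post-processing step on each candidate it produces.

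First, I would discard every element of $\mathcal{F}$ that is not an independent set of minimum size among the independent sets in $\mathcal{F}$: if $I \in \mathcal{F}$ is an independent set and $I' \in \mathcal{F}$ is a smaller independent set, then every $\{I'\}$-free graph is automatically $\{I\}$-free, so $I$ imposes no further constraint (the other graphs in $\mathcal{F}$ stay). Let $\alpha+1$ be the size of the smallest independent set in $\mathcal{F}$. Then every $\mathcal{F}$-free induced subgraph of $G$ must be $\alpha$-bounded, because an independent set of size $\alpha+1$ inside the subgraph would be an induced occurrence of that forbidden $F$.

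Second, I would observe that the algorithm underlying \Cref{thm:alpha_bounded_fpt} is in fact an enumeration procedure: on input $(G,\omega,w,k,U)$ it explores a branching tree of size $(2\alpha+2)^k$, each leaf of which is an induced subgraph of $G$ containing $U$ whose out-neighborhood has size at most $k$ and which is $\alpha$-bounded. Together with the outer enumeration of all independent sets $U$ of size at most $\alpha$ (contributing the $n^{\alpha+1}$ factor), the branches collectively cover every secluded $\alpha$-bounded induced subgraph of $G$: this is exactly what the correctness proof of \Cref{thm:alpha_bounded_fpt} establishes via \Cref{lem:ind_set_reaches}. In particular, since every feasible solution of the $\mathcal{F}$-free problem is $\alpha$-bounded, every optimal solution will appear as one of the enumerated candidates.

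Third, at each leaf I would run two polynomial-time checks on the candidate set $S$: a depth-first search in the underlying undirected graph to verify weak connectivity, and a brute-force search that, for each $F \in \mathcal{F}$, iterates over all $\binom{|S|}{|V(F)|}$ vertex subsets of $S$ of the right size and tests whether they induce a graph isomorphic to $F$. The latter takes time $\abs{\mathcal{F}}\, n^{\|\mathcal{F}\|}$, where $\|\mathcal{F}\| = \max_{F \in \mathcal{F}}|V(F)|$. Among all candidates that pass both tests, I output the one of maximum weight that also has weight at least $w$. Multiplying the $n^{\alpha+1}$ initial guesses of $U$, the $(2\alpha+2)^k$ branching leaves, and the $\abs{\mathcal{F}}\, n^{\|\mathcal{F}\|}$ verification cost yields the claimed runtime $\abs{\mathcal{F}}(2\alpha+2)^k n^{\|\mathcal{F}\|+\alpha+\bigO{1}}$.

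There is no real obstacle here; the only delicate point is being explicit that the algorithm from \Cref{thm:alpha_bounded_fpt} can be viewed as an enumeration algorithm whose output covers every $\alpha$-bounded $k$-secluded induced subgraph, rather than merely returning one maximum-weight solution. This follows directly from the fact that both branching rules in its proof are complete case distinctions on which vertex to place into the neighborhood, so every $\alpha$-bounded $k$-secluded subgraph $S \supseteq U$ corresponds to at least one leaf where the deleted vertices form exactly $\outNei{S}$.
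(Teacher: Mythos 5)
Your proposal is essentially the paper's own argument: reduce to $\alpha$-boundedness via the smallest independent set in $\mathcal{F}$, reuse the branching algorithm behind \Cref{thm:alpha_bounded_fpt} (justified through \Cref{lem:ind_set_reaches}) as an enumeration of candidate secluded $\alpha$-bounded subgraphs, and post-check each candidate for $\mathcal{F}$-freeness (and connectivity) in time $\abs{\mathcal{F}}n^{\|\mathcal{F}\|}$, multiplying out to the claimed bound. Your write-up is, if anything, more explicit than the paper's brief sketch about why the enumeration covers every feasible solution.
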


\subparagraph*{Inward Stars}
Consider what happens when $\mathcal{F}$ contains an inward star $F$ with two leaves. Then, the weakly connected $F$-free graphs are exactly the rooted trees where the root could be a cycle instead of a single vertex, with potentially added bidirectional edges. For $F \in \mathcal{F}$, we do not have a solution, but we conjecture that the FPT branching algorithm for \textsc{Secluded Tree} by \cite{donkers2023finding} can be transferred to the directed setting. 

If $\mathcal{F} = \set{F}$ for a star $F$ with more than two leaves, the problem remains W[1]-hard. We can slightly modify the construction in the proof of \Cref{thm:f-free-hard-always} such that there is not one extra vertex $s$, but one extra vertex $s_e$ for every $e \in \e{G}$. We connect all $s_e$ internally in a directed path. A solution for \prob{} can then include the whole extra path and the edges encoding the clique. This avoids inward star structures with more than two leaves, showing hardness in this case. For only one single forbidden induced subgraph $F$, we therefore conjecture, that \prob{} is FPT with parameter $k$ if and only if $F$ has no edges or is an inward star with at most two leaves. For all other cases, we have proven W[1]-hardness with parameter $k+w$ for unit weights.

\subparagraph*{Remaining Cases}
The previous paragraphs resolve almost all possible cases for $\mathcal{F}$, except for a few cases which are difficult to characterize. For example, we could have $\{F_s, F_p\} \subseteq \mathcal{F} $, where $F_s$ is an inward star and $F_p$ is a path, which makes a new connecting construction instead of $s$ and the $s_e$ necessary. The above characterization is still enough to give an understanding of which cases are hard and which are FPT for all natural choices of $\mathcal{F}$.

\section{Secluded Subgraphs with Small Independence Number}
\label{chap:tour}
\zzcommand{\clique}{\textsc{Secluded Clique}}
\newcommand{\aboundedlong}{\textsc{Out-Secluded $\alpha$-Bounded Subgraph}}
\newcommand{\abounded}{\textsc{Out-Secluded $\alpha$-BS}}
\newcommand{\secuna}[1]{\textsc{Undirected Secluded {#1}-BS}}

In this section, we consider a generalization of the undirected \clique{} problem to directed graphs. We generalize this property in the following way.

\begin{definition}[$\alpha$-Bounded]
    A directed graph $G$ is called \emph{$\alpha$-bounded} if $G$ includes no independent set of size $\alpha + 1$, that is, for all $S \subseteq \ve{G}$ with $\abs{S} = \alpha + 1$, the graph $\induced{G}{S}$ includes at least one edge.
\end{definition}

Our problem of interest will be the \aboundedlong{} problem (\abounded{}).
Note that $\alpha$ is part of the problem and therefore a constant.
In the directed case, $\alpha = 1$ is analogous to the \prname{Out}{Tournament} problem, except that tournaments cannot include more than one edge between a pair of vertices.

In this section, we first show how to solve \abounded{} with a branching algorithm in \Cref{sec:alpha_bounded}.
 The general nature of these branching rules allows us to transfer and optimize them. In \Cref{sec:clique}, we significantly improve the best known runtime for the \clique{} problem to $1.6181^kn^{\bigO{1}}$.

\subsection{Secluded \texorpdfstring{$\alpha$}{α}-Bounded Subgraphs}\label{sec:alpha_bounded}

Our branching algorithm starts by selecting one part of the solution and then relies on the fact that the remaining solution has to be close around the selected part. More concretely, we start by picking an independent set that is part of the solution and build the remaining solution within its two-hop neighborhood. The following lemma justifies this strategy. 
The proof of this lemma was sketched on \emph{Mathematics Stack Exchange}~\cite{stackexchange}, we give an inspired proof again for completeness.

\begin{lemma}
\label{lem:ind_set_reaches}
  In every directed graph $G$, there is an independent set $U \subseteq \ve{G}$, such that every vertex in $\ve{G} \setminus U$ is reachable from an $u \in U$ via a path of length at most $2$.
\end{lemma}
\begin{proof}
  We prove the statement by induction on $\abs{\ve{G}}$. For $\abs{\ve{G}}=1$, it clearly holds.
  Assume the statement holds for all graphs with fewer than $\abs{\ve{G}}$ vertices, we want to prove it for $G$.
  Let $v \in \ve{G}$ be a vertex. If $v$ reaches every other vertex via at most 2 edges, $\set{v}$ is our desired independent set. Otherwise, $T \coloneqq \ve{G} \setminus \coutNei{v}$ is non-empty. Since $\abs{T} < \abs{\ve{G}}$, we can apply the induction hypothesis. So, let $T' \subseteq T$ be an independent set in $\induced{G}{T}$ that reaches every vertex of $T$ via at most two edges. We consider the edges between $v$ and $T'$. 

  Since $T \cap \outNei{v} = \emptyset$, there cannot be an edge $(v,t')$ for any $t' \in T'$. If there is an edge $(t', v)$ for some $t' \in T'$, then $T'$ is also a solution for $G$ since $t'$ can reach $\coutNei{v}$ via at most 2 edges.
  Finally, if there is no edge between $v$ and $T'$, then $T' \cup \set{v}$ is an independent set. Also $T' \cup \set{v}$ reaches by definition all of $T \cup \coutNei{v}$ via at most 2 edges.
\end{proof}

For $\alpha$-bounded graphs, all independent set have size at most $\alpha$, so trying every subset of $\ve{G}$ of size at most $\alpha$ allows us to find a set that plays the role of $U$ in \Cref{lem:ind_set_reaches} in our solution. Hence, the first step of our algorithm will be iterate over all such sets. From then on, we only consider solutions $S \subseteq \coutNei{\coutNei{U}}$.

Then, our algorithm uses two branching rules. Both rules have in common that in contrast to many well-known branching algorithms~\cite[Chapter~3]{cygan2015parameterized}, we never include vertices into a partial solution. Instead, due to our parameter, we only decide that vertices should be part of the final neighborhood. This means that we remove the vertex from the graph and decrease the parameter by 1.
Our first branching rule branches on independent sets of size $\alpha + 1$ in $\coutNei{\coutNei{U}}$, to ensure that $\coutNei{\coutNei{U}}$ becomes $\alpha$-bounded. The second rule is then used to decrease the size of the neighborhood of $\coutNei{\coutNei{U}}$, until it becomes secluded.

To formalize which vertices exactly to branch on, we use some new notation. For a vertex $v \in \ve{G}$ and a vertex set $U \subseteq \ve{G}$, pick an arbitrary shortest path from $U$ to $v$ if there exists one. Define $\shor{v}$ to be the vertices on that path including $v$ but excluding the first vertex $u \in U$. Note that after removing vertices from the graph, $\shor{v}$ might change.

Now, we can give our algorithm for finding secluded $\alpha$-bounded graphs. 

\newcommand{\curr}{\coutNei{\coutNei{U}}}
\restateab*
\begin{proof}
    \begin{figure}[t]
      \centering \hfill
      \begin{subfigure}{0.48\textwidth} \centering
        \includegraphics[height=0.5\textwidth,page=8]{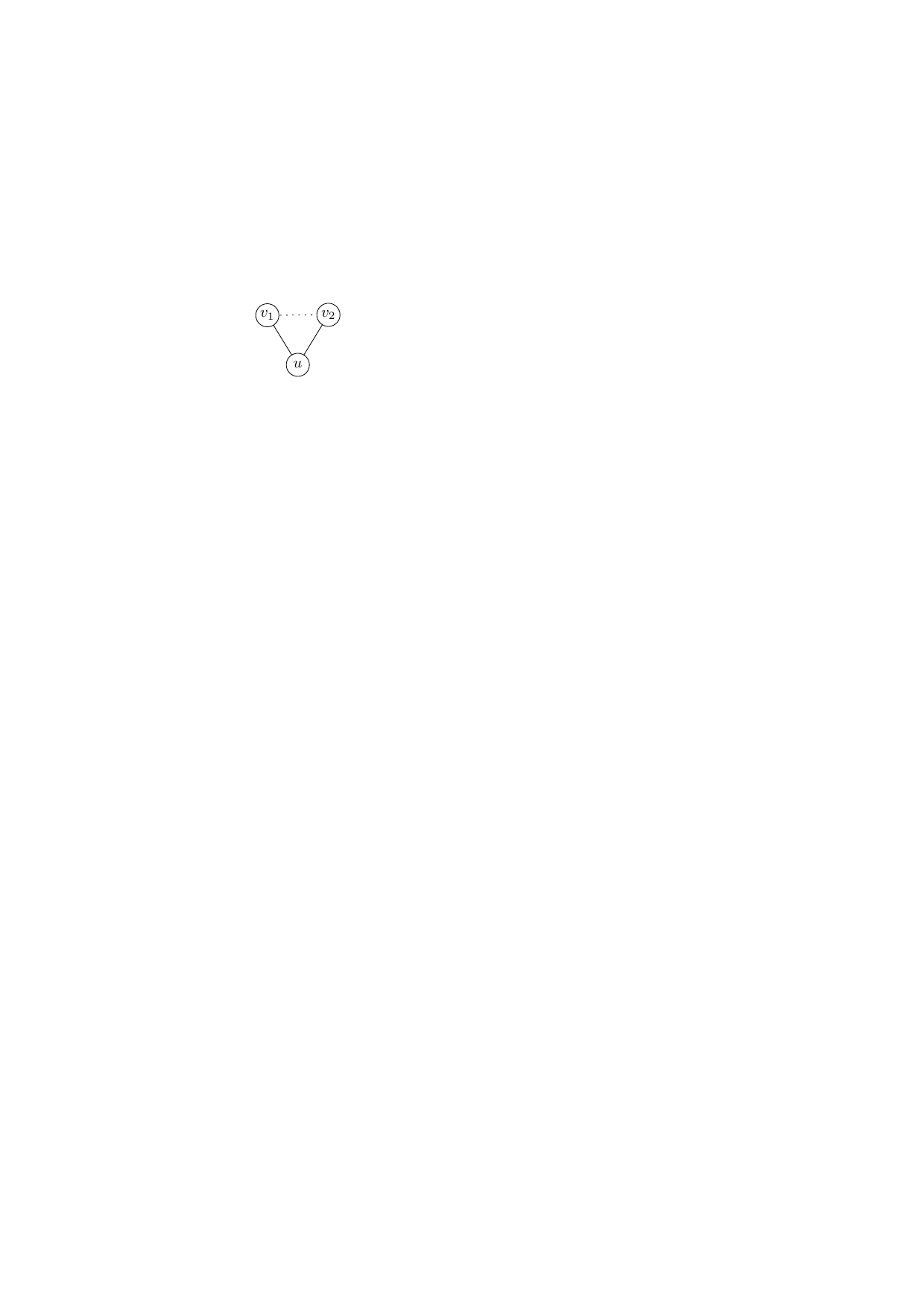}
        \caption{Case 1. If $\alpha \le 2$, not all of $v_1$, $v_2$, and $v_3$ can be in $S$. One of the five vertices must be in $\nei{S}$.}
      \end{subfigure} \hfill
      \begin{subfigure}{0.48\textwidth} \centering
        \includegraphics[height=0.5\textwidth,page=9]{figures/clique_rules}
        \caption{Case 2. Since $v_3$ is not reachable from $U$ via 2 edges, one of $v_1$, $v_2$, and $v_3$ must be in $\nei{S}$.}
      \end{subfigure} \hfill
      \caption{A visualization of the branching rules for the \abounded{} algorithm in \Cref{thm:alpha_bounded_fpt}. We are only looking for solutions $S$ including $U$, where every $v \in S$ is reachable from some $u \in U$ via at most 2 edges. Dotted connections stand for edges that do not exist.}\label{fig:alpha}
    \end{figure}

    \begin{algorithm}[!ht]
      \caption{The branching algorithm for \abounded{} that returns a solution including the set $U \subseteq \ve{G}$.}
      \label{alg:abounded}
      \DontPrintSemicolon
      \SetKwFunction{FMain}{$\alpha$-BS}
      \SetKwProg{Fn}{def}{:}{}
      \Fn{\FMain{$G$, $\omega$, $w$, $k$, $U$}}{
        \uIf{$k < 0$} {
          \textbf{abort}\; 
        }
        \uElseIf{$\curr$ is a solution} {
            \KwRet $\curr$\;
        }
        \uElseIf{there is an independent set $I \subseteq \curr$ of size $\abs{I} = \alpha + 1$} {
            \ForEach{$v \in \bigcup_{w \in I} \shor{w}$} {
                Call \FMain{$G-v$, $\omega$, $w$, $k-1$, $U$}\;
            }
        }
        \ElseIf{there is $w \in \outNei{\curr{}}$} {
            \ForEach{$v \in \shor{w}$} {
                Call \FMain{$G-v$, $\omega$, $w$, $k-1$, $U$}\;
            }
        }
      }
    \end{algorithm}

    Let $(G,\wOp, w, k)$ be an instance of \abounded{}. 
    We guess a vertex set $U \subseteq \ve{G}$ that should be part of a desired solution $S$. Furthermore, we want to find a solution $S$ to the instance such that $S \subseteq \coutNei{\coutNei{U}}$, that is, every $v \in S$ should be reachable from $U$ via at most two edges.
    We give a recursive branching algorithm that finds the optimal solution under these additional constraints.
    The algorithm is also described in \Cref{alg:abounded}.
    
    When deciding that a vertex should be part of the final neighborhood, we can simply delete it and decrease $k$ by one. In case $k$ decreases below 0, there is no solution. If $\curr$ is a solution to the instance, we return it. Otherwise we apply the following branching rules and repeat the algorithm for all non-empty independent sets $U \subseteq \ve{G}$ of size at most $\alpha$.
    
    \begin{description}
        \item[Case 1. $\curr$ is not $\alpha$-bounded.] In this case, there must be an independent set $I \subseteq \curr$ of size $\alpha + 1$. Clearly, not all of $I$ can be part of the solution, so there is a vertex $w \in I \setminus S$. This means that either $w$ or a vertex on every path from $U$ to $w$ must be in $\outNei{S}$.
        The set $\shor{w}$ is one such path of length at most 2. Thus, one of $\bigcup_{w \in I} \shor{w}$ must be part of the out-neighborhood of $S$ and we branch on all of these vertices. For one vertex, delete it and decrease $k$ by 1.
    
        \medskip
        \item[Case 2. $\curr$ is $\alpha$-bounded, but has additional neighbors.] Consider one of these neighbors $w \in \outNei{\curr}$.
        Since $w$ is not reachable from $U$ via at most two edges, we should not include it in the solution. Now, $\shor{w}$ is a path of length 3, and one of its vertices must be in $\outNei{S}$. We again branch on all options, delete the corresponding vertex and decrease $k$ by 1.
    \end{description}

    By \Cref{lem:ind_set_reaches}, there is a suitable choice of $U$ for every solution. Therefore, if we can find the maximum solution containing $U$ if one exists in every iteration with our branching algorithm, the total algorithm is correct.
    Also notice that the branching rules are a complete case distinction; if none of the rules apply, the algorithm reaches a base case.
    The remaining proof of correctness follows from a simple induction.

    We initially have to consider all subsets of $\ve{G}$ of size at most $\alpha$ while rejecting subsets that are not an independent set in time $n^{\alpha+1}$.
    To bound the runtime of the branching algorithm, notice that in each case we branch and make progress decreasing $k$ by 1.
    In the first cases, the independent set $I$ has size $\alpha + 1$ for every $w \in I$, the path $\shor{w}$ includes at most 2 vertices. Hence, there are at most $2(\alpha + 1)$ branches in this case. The second case gives exactly 3 branches and is thus dominated by the first rule.  This gives the claimed runtime and concludes the proof. 
\end{proof}

For the tournament setting, we can simply extend the first branching rule from the algorithm from \Cref{thm:alpha_bounded_fpt} to also branch on two vertices that are connected by a bidirectional edge.

\begin{theorem}
  \prname{Out}{Tournament} is solvable in time $4^kn^{\bigO{1}}$.
\end{theorem}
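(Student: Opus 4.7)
The plan is to extend the algorithm from \Cref{thm:alpha_bounded_fpt} by observing that a tournament on vertex set $S$ is characterized by two forbidden local patterns in $\induced{G}{S}$: (a) an independent pair of vertices (the $\alpha = 1$ condition) and (b) a pair of vertices $v_1, v_2$ with both edges $\died{v_1}{v_2}$ and $\died{v_2}{v_1}$. Since the first condition is exactly the $1$-bounded property, the existing branching rules of \Cref{thm:alpha_bounded_fpt} already detect (a). We only need one additional branching rule to detect and eliminate (b).

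Concretely, I would initially guess a single vertex $u \in \ve{G}$ (which plays the role of $U$ from \Cref{lem:ind_set_reaches} for $\alpha = 1$) and, as before, only look for solutions contained in $\coutNei{\coutNei{u}}$. The recursion then branches as follows, in this priority order. First, if $\coutNei{\coutNei{u}}$ contains an independent pair $\set{v_1, v_2}$, apply the existing Case~1: at least one vertex of $\shor{v_1} \cup \shor{v_2}$ lies in $\outNei{S}$; branch on deleting it and decrement $k$. With $\alpha+1 = 2$ and each $\shor{v_i}$ of length at most $2$, this gives at most $4$ branches. Second, if $\coutNei{\coutNei{u}}$ contains a pair $v_1, v_2$ with bidirectional edges $\died{v_1}{v_2}, \died{v_2}{v_1} \in \e{G}$, then $\set{v_1, v_2} \not\subseteq S$ since otherwise $\induced{G}{S}$ would contain two edges between $v_1$ and $v_2$ and fail to be a tournament; so some vertex of $\shor{v_1} \cup \shor{v_2}$ is in $\outNei{S}$, again yielding at most $4$ branches. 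Third, if neither of the above applies but $\outNei{\coutNei{\coutNei{u}}} \ne \emptyset$, use the existing Case~2, which branches on at most $3$ vertices and is dominated by the other rules. The base case simply checks whether $\coutNei{\coutNei{u}}$ is already a valid solution.

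For correctness, the argument mirrors that of \Cref{thm:alpha_bounded_fpt}: the cases form a complete distinction, so on termination $\coutNei{\coutNei{u}}$ is both $1$-bounded and contains no bidirectional pair, hence is a tournament with out-neighborhood of size at most $k$; by \Cref{lem:ind_set_reaches} applied to the solution $\induced{G}{S}$, some choice of $u$ reaches every vertex of $S$ within two steps, so the correct guess is found by the outer loop over singletons. Each recursive node branches in at most $4$ directions while $k$ drops by one, so the search tree has size at most $4^k$, and each node does polynomial work; multiplying by the $n$ choices for $u$ keeps the total runtime at $4^k n^{\bigO{1}}$.

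The only delicate point will be verifying the branching rule for bidirectional edges: one has to argue that even though the full path $\shor{v_i}$ may contain a vertex already inside $S$, it always contains at least one vertex that must end up in $\outNei{S}$ whenever we decide to exclude $v_i$ itself. This follows from the same path-interception argument as in Case~1 of \Cref{thm:alpha_bounded_fpt}: since $v_i$ is forbidden from $S$, the shortest path from $u$ to $v_i$ must leave $\coutNei{S}$ somewhere, and that exit vertex lies in $\outNei{S}$. Beyond this, the analysis is routine.
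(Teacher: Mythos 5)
Your proposal is correct and follows essentially the same route as the paper, which proves this theorem by extending the first branching rule of the algorithm from \Cref{thm:alpha_bounded_fpt} (with $\alpha=1$, so a single guessed vertex $u$) to also branch on the at most four vertices of $\shor{v_1}\cup\shor{v_2}$ for a bidirectional pair $v_1,v_2$, yielding the branching factor $4$ and runtime $4^k n^{\bigO{1}}$. Your treatment of the path-interception argument for the bidirectional case is exactly the intended justification, just spelled out in more detail than the paper's one-sentence sketch.
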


Moreover, note that if we instead define the directed problem to ask for the total neighborhood $\nei{S} \le k$, the problem is identical to solving the undirected version on the underlying undirected graph. 
In undirected graphs, instead of \Cref{lem:ind_set_reaches}, any maximal independent set reaches every vertex via a single edge. We can use this insight to adapt the previous algorithm to a faster one for the total neighborhood variant of the problem.

\restateabtotal*
\begin{proof}[Proof Sketch]
    Notice that in every directed graph $G$, any maximal independent set $U \subseteq \ve{G}$ fulfills $\cnei{U} = V$. Therefore, we can execute the same algorithm as described in \Cref{alg:abounded} and the proof of \Cref{thm:alpha_bounded_fpt} with $\cnei{U}$ instead of $\coutNei{\coutNei{U}}$. This leads to a smaller size set of vertices to branch on, namely $\alpha + 1$ instead of $2\alpha + 2$ for the first case and 2 instead of 3 for the second case. The remaining steps work out in the same way.
\end{proof}

\subsection{Faster \textsc{Secluded Clique} using Branching}\label{sec:clique}

The \clique{} problem has been shown to admit an FPT-algorithm running in time $2^{\bigO{k \log k}}n^{\bigO{1}}$ by contracting twins and then using color coding~\cite{golovach2020finding}. Furthermore, the single-exponential algorithm for \textsc{Secluded $\mathcal{F}$-Free Subgraph} from~\cite{jansen2023single} can also solve this problem in time $2^{\bigO{k}}n^{\bigO{1}}$ when choosing $\mathcal{F}$ to include only the independent set on two vertices. 
In this section, we give a faster and simpler algorithm using branching that achieves the same in time $1.6181^k n^{\bigO{1}}$.


The ideas behind our branching rules are the the same as in \Cref{thm:alpha_bounded_fpt}. However, this time, the forbidden structures are only independent sets of size 2. Therefore, it is enough guess a single vertex $u$ initially. This allows us also to analyze the rules more closely and split them into several cases.
By differentiating between more different scenarios in these two high-level rules, we arrive at a branching vector of $(1,2)$ for the desired runtime.

    \begin{algorithm}[t]
      \caption{The $1.6181^kn^{\bigO{1}}$ branching algorithm for \clique{} that returns a $k$-secluded clique of weight at least $w$ including $u \in \ve{G}$.}
      \label{alg:clique_fast}
      \DontPrintSemicolon
      \SetKwFunction{FMain}{Clique}
      \SetKwProg{Fn}{def}{:}{}
      \Fn{\FMain{$G$, $\omega$, $w$, $k$, $u$}}{
        \uIf{$k < 0$} {
          \textbf{abort}\; 
        }
        \uElseIf{$\cnei{u}$ is a solution} {
            \KwRet $\cnei{u}$\;
        }
        \uElseIf{there are distinct $v_1, v_2, v_3 \in \nei{u}$ with $\{v_1, v_2\}, \{v_1, v_3\} \notin \e{G}$} {
            Call \FMain{$G-v_1$, $\omega$, $w$, $k-1$, $u$}\;
            Call \FMain{$G-\{v_2, v_3\}$, $\omega$, $w$, $k-2$, $u$}\;
        }
        \uElseIf{there are distinct $v_1, v_2 \in \nei{u}$ with $\{v_1, v_2\} \notin \e{G}$, $\nei{v_1}, \nei{v_2} \subseteq \cnei{u}$, and $\w{v_1} \le \w{v_2}$} {
            Call \FMain{$G-v_1$, $\omega$, $w$, $k-1$, $u$}\;
        }
        \uElseIf{there are distinct $v_1, v_2 \in \nei{u}$ with $\{v_1, v_2\} \notin \e{G}$ and $\nei{v_1} \not\subseteq \cnei{u}$} {
            Call \FMain{$G-v_1$, $\omega$, $w$, $k-1$, $u$}\;
            Call \FMain{$G - (\neiOp'(v_1) \cup \set{v_2})$, $\wOp$, $w$, $k- (\abs{\neiOp'(v_2)} + 1)$, $u$}\;
        }
        \uElseIf{there is $v_1 \in \nei{u}$ with $\neiOp'(v_1) = \set{v_2}$} {
            Call \FMain{$G - v_2$, $\wOp$, $w$, $k-1$, $u$}\;
        }
        \ElseIf{there is $v \in \nei{u}$ with $\abs{\neiOp'(v)} \ge 2$} {
            Call \FMain{$G-v$, $\wOp$, $w$, $k-1$, $u$}\;
            Call \FMain{$G - \neiOp'(v)$, $\wOp$, $w$, $k- \abs{\neiOp'(v)}$, $u$}\;
        }
      }
    \end{algorithm}
\restateclique*
\begin{proof}

    \begin{figure}[t] \centering \hfill
      \begin{subfigure}{0.32\textwidth} \centering
        \includegraphics[width=0.6\textwidth,page=3]{figures/clique_rules}
        \caption{Case 1a. Either $v_2 \in \nei{S}$ or $v_1,v_3 \in \nei{S}$ must hold.}
      \end{subfigure} \hfill
      \begin{subfigure}{0.32\textwidth} \centering
        \includegraphics[width=0.6\textwidth,page=4]{figures/clique_rules}
        \caption{Case 1b. It only changes the weight if $v_1 \in \nei{S}$ or $v_2 \nei{S}$.}
      \end{subfigure} \hfill
      \begin{subfigure}{0.32\textwidth} \centering
        \includegraphics[width=0.6\textwidth,page=5]{figures/clique_rules}
        \caption{Case 1c. Either $v_1 \in \nei{S}$ or $\neiOp'(v_1) \cup \set{v_2} \subseteq \nei{S}$ must hold.}
      \end{subfigure} \hfill
      \begin{subfigure}{0.49\textwidth} \centering
        \includegraphics[width=0.38\textwidth,page=6]{figures/clique_rules}
        \caption{Case 2a. The only neighbor of $v_1$ outside $\cnei{u}$ is $v_2$. Therefore, $v_2 \in \nei{S}$ must hold.}
      \end{subfigure} \hfill
      \begin{subfigure}{0.49\textwidth} \centering
        \includegraphics[width=0.38\textwidth,page=7]{figures/clique_rules}
        \caption{Case 2b. Either $v_1 \in \nei{S}$ or $\neiOp'(v_1) \subseteq \nei{S}$ must hold.}
      \end{subfigure} \hfill
      \caption{A visualization of the branching rules for the improved \textsc{Secluded Clique} algorithm in \Cref{thm:clique_better}. We are only looking for solutions $S$ including $u$. Dotted connections stand for edges that do not exist. Also, vertices outside of $\cnei{u}$ are never connected to $u$. In (b) to (e), we can assume that $v_1$ is connected to all other vertices in $\cnei{u}$, and the same holds for $v_2$ in (b) and (c). Dotted edges to the outside of $\cnei{u}$ mean that no such edges exist.}\label{fig:clique2}
    \end{figure}
    
    Let $(G,\wOp, w, k)$ be an instance of \clique{}. 
    We guess a vertex $u\in \ve{G}$ that should be part of a desired solution $S$, that is we look for a solution $S \subseteq \cnei{u}$ with $u \in S$. We give a recursive branching algorithm that branches on which vertices should be part of the neighborhood. 
    The algorithm is also described in \Cref{alg:clique_fast}.
    
    When deciding that a vertex should be part of the final neighborhood, we can simply delete it and decrease $k$ by one. In case $k$ decreases below 0, there is no solution. If $\cnei{u}$ is a solution to the instance, we return it. These are the base cases of our algorithm. Otherwise we apply the following branching rules and repeat the algorithm for all choices of $u \in \ve{G}$.
    Denote for $v \in \ve{G}$ with $\neiOp'(v) \coloneqq \nei{v} \setminus \cnei{u}$ the additional neighborhood of $v$ that is not part of $\cnei{u}$. 
    The branching rules are visualized in \Cref{fig:clique2}.
    
\begin{description}
    \item[Case 1. $\cnei{u}$ does not form a clique.] In this case we must have a pair of vertices  $v_1, v_2 \in N(X)$ with $\set{v_1, v_2}\notin E(G)$. We now break into following three subcases.
    
    \begin{itemize}
        \item\label{it:clique1} \textbf{1a. There is a vertex $v_1 \in \nei{u}$ not connected to at least 2 others.} Formally speaking, there are distinct $v_2, v_3 \in \nei{u}$ with $\set{v_1, v_2}, \set{v_1, v_3} \notin \e{G}$. In this case, we know that if $v_1$ is in $S$, then both $v_2$ and $v_3$ cannot be in $S$.
        Therefore, we branch into removing both $v_2$ and $v_3$ or removing just $v_1$. We decrease $k$ by $1$ or $2$.

        \item \label{it:clique2}\textbf{1b. There are disconnected $v_1, v_2 \in \nei{u}$ that have no other neighbors.}
        Since the previous subcase does not apply and $\neiOp'(v_1) = \neiOp'(v_2) = \emptyset$, we know that $v_1$ and $v_2$ must be twins. Not both of them can be part of $S$, so we could remove either of them. Therefore, it is optimal to remove the vertex with smaller weight, decrease $k$, and recurse.
        
        \item \label{it:clique3}\textbf{1c. There are disconnected $v_1, v_2 \in \nei{u}$ with at least one other neighbor.}
        Suppose $\neiOp'(v_1) \ne \emptyset$. In this case, we know that if $v_1 \in S$, then both $v_2$ and all of $\neiOp'(v_1)$ have to go in $\nei{S}$.
        Thus, we branch into two cases where we either remove $v_1$ or $\neiOp'(v_1) \cup \set{v_2}$ and decrease $k$ accordingly.
    \end{itemize} \medskip 
    
    \item[Case 2. $\cnei{u}$ forms a clique.] In this case, we break into following two subcases.
    \begin{itemize}
        \item \label{it:clique4}\textbf{2a. There is $v_1 \in \nei{u}$ with only one other neighbor $\neiOp'(v_1) = \set{v_2}$.} In this case, it is always optimal to include $v_1$ in $S$. This adds at most one neighbor $v_2$ compared to not including $v_1$, which would add $v_1$ into $\nei{S}$. Since weights are non-negative, $v_2$ should go into $\nei{S}$. We remove it from $G$ and decrease $k$ by 1.
        
        \item \label{it:clique5}\textbf{2b. There is $v \in \nei{u}$ with at least two other neighbors in $\neiOp'(v)$.} Secondly, if there is $v$ remaining with at least two neighbors, we again consider all options.
        If $v \in S$, we know that all of $\neiOp'(v)$ have to be in $\nei{S}$. We branch into two cases in which we either remove $v$, or remove all of $\neiOp'(v)$ from $G$. In both cases, we decrease $k$ by the number of removed vertices.
    \end{itemize}
\end{description}

Notice that the case distinction is exhaustive, that is, if none of the cases apply $\cnei{u}$ is a clique without neighbors.
To give an upper bound on the runtime of the algorithm, notice that in each of the cases we either reduce the size of the parameter by at least one (Case 1b, Case 2a) or do branching. But in each of the branching steps (Case 1a, Case 1c, Case 2b) we make progress decreasing $k$ by 1 in the first branch, and at least 2 in the second branch. Hence they have the branching vector $(1,2)$, which is known~\cite[Chapter 3]{cygan2015parameterized} to result in a runtime of $1.6181^k$.  Therefore, the runtime of the recursive algorithm is upper-bounded by $1.6181^k n^{\mathcal{O}(1)}$, where the final runtime is achieved by multiplying by $n$ (the number of initial guesses of the vertex $u$). The proof of correctness follows from a simple induction. This concludes the algorithm. 
\end{proof}

\bibliography{main}

\end{document}